\newlength{\commentWidth}
\newcommand{\atcp}[1]{\tcp*[r]{\makebox[\commentWidth]{\textit{#1}\hfill}}}
\let\oldnl\nl
\newcommand{\nonl}{\renewcommand{\nl}{\let\nl\oldnl}}
\newcommand{\ignore}[1]{}
\renewcommand\paragraph{\@startsection{subparagraph}{5}{\z@}%
                                      {1.5ex \@plus1ex \@minus .2ex}%
                                      {-1em}%
                                      {\normalsize\bfseries}}
\renewcommand\subparagraph{%
 \@startsection {subparagraph}{5}{\z@ }{3.25ex \@plus 1ex
 \@minus .2ex}{-1em}{\normalfont \normalsize \bfseries }}%
\colorlet{DarkRed}{red!50!black}
\colorlet{DarkGreen}{green!50!black}
\colorlet{DarkBlue}{blue!50!black}
\newtheorem{theorem}{Theorem}[section]
\newtheorem{lemma}[theorem]{Lemma}
\newtheorem{definition}[theorem]{Definition}
\newtheorem{corollary}[theorem]{Corollary}
\newtheorem{observation}[theorem]{Observation}
\DeclareMathOperator{\diam}{diam}
\DeclareMathOperator{\hop}{hop}
\DeclareMathOperator{\poly}{poly}
\DeclareMathOperator{\polylog}{polylog}
\DeclareMathOperator{\dist}{d}
\DeclareMathOperator{\E}{E}
\DeclareMathOperator{\Exp}{Exp}
\DeclareMathOperator{\str}{str}
\DeclareMathOperator{\cut}{cut}
\DeclareMathOperator{\Ex}{\mathbb{E}}
\DeclareMathOperator{\Load}{load}
\newcommand{\Seq}[1]{\mathbf{#1}}
\newcommand{\Ecut}{E^{\cut}}
\let\epsilon\varepsilon
\let\eps\varepsilon
\newcommand{\ones}{\mathds{1}}
\newcommand{\NN}{\ensuremath{\mathbb{N}}}
\newcommand{\ZZ}{\ensuremath{\mathbb{Z}}}
\newcommand{\UUU}{\mathcal{U}}
\newcommand{\VVV}{\mathcal{V}}
\newcommand{\TTT}{\mathcal{T}}
\newcommand{\Congest}{$\mathsf{CONGEST}$\xspace}
\newcommand{\Local}{$\mathsf{LOCAL}$\xspace}
\newcommand{\Pram}{$\mathsf{PRAM}$\xspace}
\newcommand{\decomp}{\texttt{ts\_decompose}\xspace}
\definecolor{orange}{RGB}{235,90,0}
\definecolor{darkorange}{RGB}{175,30,0}
\definecolor{turkis}{RGB}{131,182,182}
\definecolor{darkturkis}{RGB}{31,82,82}
\definecolor{green}{RGB}{102,180,0}
\definecolor{darkgreen}{RGB}{51,90,0}
\definecolor{myblue}{RGB}{0,0,213}
\definecolor{mydarkblue}{RGB}{0,0,100}
\definecolor{mybrightblue}{HTML}{74B0E4}
\definecolor{mybrighterblue}{HTML}{B3EAFA}
\definecolor{lila}{RGB}{102,0,102}
\definecolor{darkred}{RGB}{139,0,0}
\definecolor{darkyellow}{RGB}{188,135,2}
\definecolor{brightgray}{RGB}{200,200,200}
\definecolor{darkgray}{RGB}{50,50,50}
\definecolor{amaranth}{rgb}{0.9, 0.17, 0.31}
\definecolor{alizarin}{rgb}{0.82, 0.1, 0.26}
\definecolor{amber}{rgb}{1.0, 0.75, 0.0}
\definecolor{green(ryb)}{rgb}{0.4, 0.69, 0.2}
\definecolor{hanblue}{rgb}{0.27, 0.42, 0.81}
\definecolor{grannysmithapple}{rgb}{0.66, 0.89, 0.63}
\newcommand\irregularcircle[2]{
  \pgfextra {\pgfmathsetmacro\len{(#1)+rand*(#2)}}
  +(0:\len pt)
  \foreach \a in {10,20,...,350}{
    \pgfextra {\pgfmathsetmacro\len{(#1)+rand*(#2)}}
    -- +(\a:\len pt)
  } -- cycle
}
\title{Low Diameter Graph Decompositions\\
by Approximate Distance Computation}
\author{
	Ruben Becker\thanks{Gran Sasso Science Institute, L'Aquila, Italy}
  \and Yuval Emek\thanks{Technion -- Israel Institute of Technology, Israel}
	\and Christoph Lenzen\thanks{Max Planck Institute for Informatics, Saarbr\"ucken, Germany}
}
\date{}
\begin{document}

\setcounter{page}{0}

\maketitle

\begin{abstract}
  In many models for large-scale computation, decomposition of the problem is
  key to efficient algorithms.
  For distance-related graph problems, it is often crucial that such a
  decomposition results in clusters of small diameter, while the
  probability that an edge is cut by the decomposition scales linearly with the
  length of the edge.
  There is a large body of literature on low diameter graph decomposition with
  small edge cutting probabilities, with all existing techniques heavily
  building on \emph{single source shortest paths (SSSP)} computations.
  Unfortunately, in many theoretical models for large-scale computations, the
  SSSP task constitutes a complexity bottleneck.
  Therefore, it is desirable to replace exact SSSP computations with approximate
  ones.
  However this imposes a fundamental challenge since the existing constructions
  of low diameter graph decomposition with small edge cutting probabilities
  inherently rely on the subtractive form of the triangle inequality, which
  fails to hold under distance approximation.

  The current paper overcomes this obstacle by developing a technique termed
  \emph{blurry ball growing}.
By combining this technique with a clever algorithmic idea of Miller et al.\
(SPAA 2013), we obtain a construction of low diameter decompositions with
small edge cutting probabilities which replaces exact SSSP computations by (a
small number of) approximate ones.
  The utility of our approach is showcased by deriving efficient algorithms that
  work in the \Congest, \Pram, and semi-streaming models of computation.
As an application, we obtain metric tree embedding algorithms in the vein of
Bartal (FOCS 1996) whose computational complexities in these models are
optimal up to polylogarithmic factors.
  Our embeddings have the additional useful property that the tree can be
  mapped back to the original graph such that each edge is ``used'' only
  $O(\log n)$ times, which is of interest for capacitated problems and
  simulating \Congest algorithms on the tree into which the graph is embedded.
\end{abstract}

\thispagestyle{empty}

\section{Introduction}
\label{section:introduction}
Consider an $n$-vertex graph
$G = (V, E, \ell)$,
where
$\ell : E \rightarrow \ZZ_{> 0}$
is an edge \emph{length} function.\footnote{%
We sometimes use the shorthand $\ell_e$ for $\ell(e)$.
}
The \emph{distance} between two vertices $u$ and $v$ in $G$, denoted by
$\dist_{G}(u, v)$,
is defined to be the length with respect to $\ell$ of a shortest
$(u, v)$-path
in $G$.
The \emph{diameter} of $G$ is the maximum distance between any two vertices,
denoted by
$\diam(G) = \max_{u, v \in V} \{\dist_{G}(u, v)\}$.

A \emph{decomposition} $D$ of $G$ is a partition of the vertex set $V$ into
pairwise disjoint \emph{clusters}.
Such a decomposition induces a (multiway) \emph{cut} on $G$ and we use
$\Ecut(D)$ to denote the subset of edges that cross this cut, namely, edges
whose endpoints belong to different clusters of $D$.
The \emph{weight} of the decomposition $D$ is defined to be the sum
$\sum_{e \in \Ecut(D)} \frac{1}{\ell_e}$
of the reciprocal lengths of the edges crossing its cut.
Our focus in this paper is on the construction of decompositions whose
clusters' diameter is bounded by some specified parameter $r$ (the notion of a
cluster's diameter will be made clear soon), referred to hereafter as
\emph{low diameter decompositions}.
The challenging part is to keep the weight of $D$ small.

Low diameter decompositions with small weight were first studied by
Awerbuch~\cite{Awerbuch1985} (see also \cite{AwerbuchP1990, AlonKPW1995}).
Bartal~\cite{Bartal1996} introduced their (combinatorially equivalent)
probabilistic counterpart:
An \emph{$(r, \lambda)$-decomposition} of the graph
$G = (V, E, \ell)$
is a random decomposition $D$ of $G$ such that
(1)
the diameter of each cluster in $D$ is at most $r$;
and
(2)
$\Pr[e \in \Ecut(D)] \leq \frac{\lambda \ell_e}{r}$
for every edge
$e \in E$.
Bartal presented a method that, for a given parameter $r$, constructs an
$(r, O(\log n))$-decomposition and proved the resulting bound on the edge
cutting probabilities to be asymptotically tight.

Low diameter decompositions with small edge cutting probabilities have
proven to be very useful in the algorithmic arena (see
Section~\ref{section:related-work}) and several different techniques have been
developed over the years for constructing them
\cite{AbrahamN2012,Bartal1996,ElkinEST2008,FakcharoenpholRT2004,
DBLP:conf/spaa/MillerPX13, ForsterG2019}.
A common thread of all the existing techniques is that they rely heavily on
making calls to a single source shortest paths (SSSP) subroutine.
While we know how to solve the SSSP problem efficiently in the sequential
(centralized) model of computation, the situation is much more challenging in
restricted models of computation such as the \Congest model of distributed
computing, the parallel random access memory (\Pram) model, or the
semi-streaming graph algorithms model. As it stands, SSSP computations
are the main obstruction to designing efficient constructions of low diameter
decompositions with small edge cutting probabilities in the aforementioned
computational models (and related ones).

\subsection{Our Contribution}
In this paper, we introduce a new technique that, given a graph
$G = (V, E, \ell)$
and a parameter $r$, constructs an
$(r, O (\log n))$-decomposition
of $G$.
The crux of our construction is that it does not rely on any \emph{exact} SSSP
computations.
Rather, it efficiently reduces the task to a small number of calls to an
approximate SSSP subroutine.
The technical challenge in this regard stems from the fact that the existing
constructions of low diameter decompositions with small edge cutting
probabilities crucially rely on the subtractive form of the triangle
inequality, stating that
$\dist_{G}(u, v) \geq \dist_{G}(u, w) - \dist_{G}(v, w)$
for every three vertices
$u, v, w \in V$.
Due to the subtraction on the right hand side, the inequality fails if one
replaces exact distances with approximate ones.
The main technical contribution of this paper lies in overcoming this
difficulty.

The approximate SSSP problem can be solved efficiently in the
\Congest \cite{DBLP:conf/wdag/BeckerKKL17},
\Pram \cite{DBLP:journals/jacm/Cohen00},
and
semi-streaming \cite{DBLP:conf/wdag/BeckerKKL17}
models, hence we obtain efficient algorithms for constructing
$(r, O (\log n))$-decompositions
for the three computation models.
These in turn can be invoked recursively to yield efficient
\Congest, \Pram, and semi-streaming constructions of path embeddable trees
\cite{CohenMPPX2014, CohenKMPPRX2014}
and hierarchically well-separated trees
\cite{Bartal1996, Bartal1998, Bartal2004, FakcharoenpholRT2004}
with low \emph{stretch} -- important combinatorial objects in their own
right.
In fact, our low diameter decompositions (and the resulting tree embeddings)
admit an even stronger property.

\paragraph{Tree-Supported Decompositions.}
The notion of graph diameter naturally extends from the entire graph
$G = (V, E, \ell)$
to a
vertex subset
$U \subseteq V$
by considering the maximum distance between any two vertices in $U$.
This yields the following distinction:
the \emph{weak diameter} of $U$ in $G$ considers the distances in the
underlying graph $G$, formally defined as
$\max_{u, v \in U} \{\dist_{G}(u ,v)\}$;
the \emph{strong diameter} of $U$ in $G$ considers the distances in the
subgraph $G(U)$ induced by $G$ on $U$, formally defined as
$\diam(G(U))$.\footnote{%
Unless stated otherwise, the edge length function of a subgraph
$H$ of $G$ is the restriction of $\ell$ to $H$'s edge set.
}
In the context of low diameter graph decompositions with small edge
cutting probabilities, both the weak and strong notions of the cluster diameter
have been considered in the literature.
As we now explain, the current paper adopts a diameter notion that falls
somewhere in between the two.

For a decomposition $D$ of the graph
$G = (V, E, \ell)$,
we require that each cluster
$C \in D$
is associated with a tree
$T_C = (U_C, F_C)$,
referred to as the \emph{supporting tree} of $C$,
that is a subgraph of $G$ and spans $C$, i.e.,
$C \subseteq U_C \subseteq V$
and
$F_C \subseteq E$.
To emphasize this requirement, we refer to the decomposition $D$ as a
\emph{tree-supported decomposition (TSD)}.
The \emph{diameter} of a TSD $D$ of $G$ is then defined to be the maximum
diameter of any of its supporting trees, denoted by
$\diam(D) = \max_{C \in D} \{ \diam(T_{C}) \}$.

Notice that if the supporting tree $T_{C}$ of each cluster $C \in D$
is required to be a spanning tree of $G(C)$, then $\diam(D)$ bounds the strong
diameter of $D$'s clusters.
This requirement is not imposed in the current paper, allowing $T_{C}$ to use
edges (and vertices) outside of $G(C)$, meaning that $\diam(D)$
merely bounds the weak diameter of the clusters.
However, we do require that the maximum edge \emph{load} is kept small, where
the load of edge
$e \in E$
in $D$ is defined to be the number of clusters
$C \in D$
such that $e$ is included in the supporting tree of $C$, denoted by
$\Load_{D}(e) = |\{ C \in D : e \in F_{C} \}|$.
The properties of our graph decomposition construction can now be formally
stated.

\begin{theorem} \label{theorem:main}
There exists a (randomized) algorithm that given a graph
$G = (V, E, \ell)$
with $\poly(n)$-bounded edge lengths and a parameter
$r \leq \diam(G)$,
constructs a random TSD $D$ of $G$ with the following guarantees:
(1)
$\diam(D) \leq r$
w.h.p.;\footnote{%
We say that event $A$ occurs \emph{with high probability}, abbreviated
\emph{w.h.p.}, if
$\Pr[A] \geq 1 - n^{-c}$,
where $c$ is an arbitrarily large constant chosen upfront.}
(2)
$\max_{e \in E} \{\Load_{D}(e)\} \leq O (\log n)$
w.h.p.;
and
(3)
$\Pr \left[ e \in \Ecut(D) \right]
\leq
O \left( \frac{\ell_{e} \cdot \log n}{r} \right)$
for every edge
$e \in E$.
\end{theorem}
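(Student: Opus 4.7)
I would follow the Miller--Peng--Xu ball-growing scheme, but with every exact SSSP call replaced by a $(1+\epsilon)$-approximate SSSP computation. Every vertex $s \in V$ is a potential source that draws an independent shift $\delta_s \sim \Exp(\beta)$ with $\beta = \Theta(\log n / r)$, and the ball-growing is parallelized by introducing a virtual super-source linked to each $s$ with weight $M - \delta_s$ (where $M$ upper-bounds the shifts) and running a single approximate SSSP call. The cluster $C_s$ consists of the vertices whose approximate shifted distance from the super-source is attained through $s$, and the supporting tree $T_{C_s}$ is obtained by trimming the approximate shortest-path tree rooted at $s$ to the subtree that spans $C_s$.

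\textbf{Diameter and load.} By the standard tail estimate for the maximum of $n$ i.i.d.\ exponentials, $\max_s \delta_s = O(\log n / \beta) = O(r)$ w.h.p. Combined with the $(1+\epsilon)$ approximation factor, this gives $\diam(T_{C_s}) \leq r$ after calibrating the hidden constants in $\beta$. The load bound likewise rests on a tail estimate on exponential order statistics: an edge $e$ can appear in $T_{C_s}$ only for sources $s$ whose shift $\delta_s$ lies in the top $O(\log n)$ of the order statistics, because otherwise no vertex along the approximate $s$-to-$v$ path can have $s$ as its approximate winner. A union bound over the $n$ edges yields $\max_e \Load_D(e) \leq O(\log n)$ w.h.p.

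\textbf{Cut probability and main obstacle.} The cut probability is the delicate piece. In the exact MPX analysis, $\Pr[e \in \Ecut(D)] = O(\beta \ell_e) = O(\ell_e \log n / r)$ because the winners at the two endpoints of $e = (u,v)$ can only differ if two shifted distances at $u$ lie within $\ell_e$ of each other, and memorylessness of the exponential then converts this event into a linear-in-$\ell_e$ probability; this derivation crucially invokes the subtractive triangle inequality $\dist_{G}(s,v) \geq \dist_{G}(s,u) - \ell_e$. Under approximate distances the analogous inequality only yields $\tilde{\dist}(s,v) \geq \tilde{\dist}(s,u)/(1+\epsilon) - \ell_e$, losing an additive slack of up to $\epsilon r$ that would inflate the cut probability by a factor of $r / \ell_e$. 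The \emph{blurry ball growing} technique, the paper's main novelty, is exactly what bridges this gap: vertices are committed to their cluster only once the shifted-distance gap exceeds the approximation uncertainty at the current radius, and the vertices in the ``blurry'' margin are deferred to a later, larger radius. The cut event then telescopes into a sum of memoryless events, each losing only a multiplicative $O(1)$ factor relative to the exact analysis, and so the $O(\ell_e \log n / r)$ bound is preserved. The principal hurdle in writing this up is packaging the blurry growth as a self-contained replacement for the subtractive triangle inequality that still retains the memoryless structure on which the MPX argument hinges; once such a lemma is available, the three parts above compose in a routine manner.
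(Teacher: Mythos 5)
You correctly diagnose the central obstacle (the subtractive triangle inequality fails under $(1+\eps)$-approximation, inflating the cut probability of a short edge by the additive slack $\eps\cdot\dist(s,u)$), and you correctly name blurry ball growing as the cure. However, the proposal is missing the structural ingredients that make the combination work, and some of the arguments you do give would fail. First, you cannot apply the blurring directly to the MPX clusters: the blur procedure \emph{grows} each input set by up to $\rho/(1-\alpha)=\Theta(1/\beta)$, so adjacent clusters would overlap and the output would no longer be a partition. The paper first \emph{shrinks} each Voronoi cell by removing all nodes within distance $\Theta(1/\beta)$ of its boundary (via a second approximate SSSP computation from the union of the cell boundaries) and only then blurs the interiors; Lemma~\ref{lemma:subset} shows each blurred cluster stays inside its original cell, which keeps the clusters disjoint and lets all blur instances run concurrently. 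Second, this shrinking leaves many vertices unclustered, so a single pass does not suffice: the procedure is iterated, and MPX's Lemma~4.4 is used not for the cut bound but to show that each surviving edge lands in some cell interior with constant probability per iteration, giving $O(\log n)$ iterations w.h.p.\ and turning the per-iteration cut probability into a convergent geometric sum. Your load argument via exponential order statistics is both incorrect and unnecessary: within one iteration the supporting trees are disjoint subtrees of a single approximate SSSP tree, so each iteration adds at most $1$ to any edge's load, and the $O(\log n)$ load bound is exactly the bound on the number of iterations.

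Your description of the blur itself also inverts its mechanism. The radii are drawn \emph{uniformly} from intervals of geometrically \emph{decreasing} width $\alpha^{i-1}\rho$ (vertices are not ``deferred to a later, larger radius''), and the analysis is a \emph{product} bound on the probability that an edge never becomes safe, not a sum of memoryless events; each round contributes an unavoidable additive $\approx\alpha$ failure term from the approximation error, and keeping the product $(1+O(\alpha))^{O(\log n/\log(1/\alpha))}$ bounded forces $\alpha=O\bigl(\frac{\log\log n}{\log n}\bigr)$ and hence $\eps\le\alpha^2$. Without these quantitative choices the ``multiplicative $O(1)$ loss'' you assert does not follow. As written, the proposal identifies the right ideas but does not constitute a proof.
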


The algorithm promised in Theorem~\ref{theorem:main} is based on combining a
novel technique termed \emph{blurry ball growing} with the algorithmic ideas
of Miller et al.~\cite{DBLP:conf/spaa/MillerPX13}.
As discussed earlier, this combination allows us to implement our algorithm
using an approximate SSSP subroutine (without any exact SSSP computations).
By example of the \Congest, \Pram, and semi-streaming models, we show that
this leads to efficient implementations.
We stress that what little computation is performed beyond approximate SSSP
computations is very easy, if not trivial, to implement.
Accordingly, we expect the technique to carry over to further computational
models.

We emphasize that our decomposition maintains a
small load of
$O(\log n)$
on the edges.
Consequently, in many situations, our decomposition can be used in an
identical way as a strong diameter decomposition with only polylogarithmic
overheads.
For example, although we cannot construct low average stretch spanning trees
as these are required to be subgraphs of the original graph, we can construct
\emph{projected trees} (see Section~\ref{section:projected-tree}), a special
case of path-embeddable trees
\cite{CohenKMPPRX2014, CohenMPPX2014}.
Projected trees have a mapping of their edges to the original graph such that,
e.g., a \Congest algorithm on the projected tree can be simulated on the
original graph at an
$O(\log n)$
overhead in round complexity.
Our result is related to the low-congestion shortcuts of Ghaffari and Haeupler~\cite{ghaffariH16} with the following differences. In Ghaffari and Haeupler's work the partition is chosen by an adversary and the input is restricted to unweighted graphs. In contrast, our technique constructs the partition, but weighted graphs can be treated as well.
A further possible application of our projected trees is in the field of
solvers for symmetric diagonally dominant linear systems, utilizing them in a
similar way as low average stretch spanning trees
(cf.~\cite{CohenMPPX2014, CohenKMPPRX2014}).
Prior algorithms for metric tree embeddings lack this property and,
accordingly, cannot take this role.

\subsection{Structure of this Paper}
We first fix some notation and state basic facts in the preliminaries in
Section~\ref{section:preliminaries}. In Section~\ref{sec: blur}, we present
the blurry ball growing technique that we use in
Section~\ref{section:graph-decomposition} in order to obtain the routine for
computing a random TSD of low diameter, load, and edge
cutting probability, as promised in Theorem~\ref{theorem:main}.
In Section~\ref{section:virtual-trees}, we highlight some applications of this
routine:
We first explain how to obtain a hierarchical decompositions by applying
the method recursively (Section~\ref{section:hierarchical-decompositions}) and
then show how to obtain random projected trees
(Section~\ref{section:projected-tree}) and hierarchically well-separated trees
(Section~\ref{section:hst}) with
$O (\log^{2} n)$
bound on the expected stretch.
We also show that this bound can be improved to
$O (\log n)$
by considering the relaxed notion of $p$-stretch
\cite{CohenKMPPRX2014,CohenMPPX2014} (Section~\ref{section:p-stretch}).
In Section~\ref{sec:implementation}, we explain how to implement our
algorithms in the \Congest, \Pram, and semi-streaming models.
Further related work is reviewed in Section~\ref{section:related-work}.


\section{Preliminaries}
\label{section:preliminaries}
We start with basic notation.
We consider a weighted, undirected, connected $n$-vertex graph $G=(V, E, \ell)$,
where $\ell:E\rightarrow \{1,2,\ldots,n^{O(1)}\}$ is an edge length function.%
\footnote{All graphs in this paper are assumed to be finite, undirected, and
connected. The assumption of integral edge weights is made for convenience; it
suffices if the aspect ratio $\frac{\max_{e\in E}\{\ell_e\}}{\min_{e\in
E}\{\ell_e\}}=n^{O(1)}$.} For a subgraph $H$ of $G$,
we denote by $\dist_H(u,v)$ the length of the shortest path
between two nodes \(u\) and \(v\) in $H$.
If $H=G$, we may omit the subscript.
For a set \(B\subseteq V\) and a node \(v\in V\),
we use
\(\dist(B, v):=\min_{u\in B}\{\dist(u, v)\}\)
to denote the distance of the node \(v\) to the set \(B\).
For a set of vertices $U\subseteq V$, we denote by
$\Ecut(U):=\{e=\{u,v\}\in E\colon u\in U, v\in V\setminus U\}$
the set of edges that are ``cut'' by $U$.

\paragraph{Approximate Single Source Shortest Paths.}
The main subroutine we use in our approach are
$(1+\eps)$-\emph{approximate} SSSP computations for undirected graphs.
A $(1+\eps)$-\emph{approximate SSSP algorithm} is an algorithm that
takes as input a weighted undirected graph $G=(V, E, \ell)$ and a source node
$s\in V$ and returns
a spanning tree $T$ of $G$ such that, for every node $v\in V$,
the length of the path from $s$ to $v$ in $T$ is at most $(1+\eps)\cdot \dist(s,
v)$, i.e., $\dist(s,v)\leq \dist_T(s,v)\leq (1+\eps)\cdot \dist(s,v)$.

\paragraph{Super-Source Graphs.}
Our approach requires $(1+\eps)$-approximate SSSP computations
in graphs $G_s$ that result from
subgraphs of $G$ by adding a (virtual) \emph{super-source node} $s\notin V$:
\begin{definition}[Super-source graphs]\label{def:super}
  Fix a subgraph $H=(V_H,E_H,\ell|_H)$ of $G$.
  Construct $G_s=(V_H\dot{\cup}\{s\},E_H\cup E_s,\ell^{G_s})$ by choosing
  $E_s\subseteq V_H\times \{s\}$, picking $\ell^{G_s}_e\in \{1,\ldots,n^c\}$
  for $e\in E_s$, and setting $\ell^{G_s}_e=\ell_e$ for all $e\in E_H$.
  We refer to $G_s$ as a \emph{super-source graph} (of $G$) and to $s$ as its
  \emph{super-source.}
\end{definition}
We note that one way of obtaining a super-source graph of a graph $G$ is
to contract a subset of nodes, say $B$, into a super-source $s$.
In this case $V_H=V\setminus B$ and the edges $E_s$ and their lengths
result from the contraction of $B$ into $s$.

\paragraph{Exponential Distribution.}
We denote the exponential distribution with mean \(\frac{1}{\beta}\) by \(\Exp_\beta\).
Using the Heaviside step function that is defined as
$H(x)=0$ if $x < 0$ and
$H(x)=1$ otherwise,
the density function of the exponential distribution is given by
\(f_{\Exp_\beta}(x)=\beta \exp(-\beta x)\cdot H(x)\).
Its cumulative density function is
\(F_{\Exp_\beta}(x)=(1 - \exp(-\beta x))\cdot H(x)\).
A standard result is that drawing from this distribution results in values of
$O(\beta \log n)$ w.h.p.:

\begin{lemma}\label{lemma:radius_bound}
  For parameters $0<\varepsilon<1$, $\beta>0$,
  and a sufficiently large constant $c>0$,
  let $t:=\frac{c\log n}{4(1+\eps)\beta}$ and $X\sim \Exp_{\beta}$.
  Then $P[X\geq t]= n^{-\Omega(c)}$, i.e., $X<t$ w.h.p.
\end{lemma}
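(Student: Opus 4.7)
The plan is to directly compute the tail probability using the cumulative density function of $\Exp_\beta$ given in the preamble to the lemma. Since $t>0$, the Heaviside factor in $F_{\Exp_\beta}$ evaluates to $1$, so
\[
\Pr[X \geq t] \;=\; 1 - F_{\Exp_\beta}(t) \;=\; \exp(-\beta t).
\]
Substituting the definition of $t$ turns the right-hand side into $\exp\!\bigl(-\tfrac{c\log n}{4(1+\eps)}\bigr)$, and the only remaining step is to observe that the constant $\tfrac{1}{4(1+\eps)}$ is bounded below by $\tfrac{1}{8}$ because $\eps<1$. Hence the exponent is at least $\tfrac{c}{8}\log n$, which yields $\Pr[X\geq t] \leq n^{-c/8} = n^{-\Omega(c)}$ (absorbing the conversion between the logarithm base used in the statement and the natural base of $\exp$ into the $\Omega(\cdot)$). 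For $c$ chosen sufficiently large, this is smaller than any prescribed inverse polynomial in $n$, which is exactly the ``w.h.p.'' conclusion.

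There is essentially no obstacle here: the lemma is a one-line memoryless tail computation, and its only purpose is to fix, once and for all, the truncation threshold $t = \Theta(\log n / \beta)$ that will be used later when the algorithm draws exponential random variables for ball growing. The mild sleight of hand to watch out for is simply that the base of $\log$ is immaterial, since a multiplicative constant in the exponent only affects the hidden constant inside the $\Omega(c)$.
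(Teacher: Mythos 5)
Your proof is correct and follows essentially the same route as the paper: both compute $\Pr[X\geq t]=\exp(-\beta t)$ (you via the CDF, the paper via a ratio of integrals of the density, which is the same calculation) and then substitute $t$ to get $\exp(-\Omega(c\log n))=n^{-\Omega(c)}$. Your extra remark that $\eps<1$ makes the constant explicit is a harmless refinement of the paper's one-line conclusion.
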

\begin{proof}
  Using the form of the density function, we get
	\begin{equation*}
		P[X\geq t]
    =\frac{\int_{t}^{\infty}\exp(-\beta x)\,dx}
          {\int_{0}^{\infty}\exp(-\beta x)\,dx}
		=\frac{\exp(-\beta t)\int_{0}^{\infty}\exp(-\beta x)\,dx}
          {\int_{0}^{\infty}\exp(-\beta x)\,dx}
		= \exp(-\Omega(c\log n)) = n^{-\Omega(c)}\,.\qedhere
\end{equation*}
\end{proof}

We will make heavy use of the following lemma,
see the paper by Miller et al.~\cite{DBLP:conf/spaa/MillerPX13} for the proof.
Note that in their paper they state the lemma with an upper bound of
$O(\beta c)$ on the probability, although their proof in fact
bounds the probability by exactly $\beta c$.
\begin{lemma}
  [Lemma 4.4 in~\cite{DBLP:conf/spaa/MillerPX13}]
  \label{lemma:lemma4.4}
	Let \(d_1\le \ldots\le d_s\) be arbitrary values and
  \(\delta_1,\ldots, \delta_s\)
  be independent random variables picked from \(\Exp_\beta\).
  Then the probability that the smallest and the second smallest values of
  \(d_i-\delta_i\) are within \(c\) of each other is at most \(\beta c\).
\end{lemma}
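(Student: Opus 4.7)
The plan is to decompose the event according to which index achieves the minimum and then exploit the memoryless property of the exponential distribution to control the conditional gap. Write $X_i := d_i - \delta_i$ and let $M^{(i)} := \min_{k\ne i} X_k$; since $\delta_i$ is independent of $(\delta_k)_{k\ne i}$, the random variables $X_i$ and $M^{(i)}$ are independent. Almost surely the argmin among the $X_i$ is unique, so partitioning by its identity gives
\[
\Pr\bigl[ X_{(2)} - X_{(1)} \le c \bigr] \;=\; \sum_{i=1}^{s} \Pr\bigl[ X_i < M^{(i)} \le X_i + c \bigr].
\]
Since $\sum_i \Pr[X_i < M^{(i)}] = 1$, it suffices to prove that $\Pr[M^{(i)} - X_i \le c \mid X_i < M^{(i)}] \le \beta c$ for every $i$; summation then yields the claim.

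The key step is to condition on the full vector $(\delta_k)_{k\ne i}$, which fixes $M^{(i)} = m$. In the principal case $d_i \ge m$, the event ``$i$ is the argmin'' is exactly $\{\delta_i > d_i - m\}$, and the memoryless property of $\Exp_\beta$ states that under this conditioning $\delta_i - (d_i - m)$ is itself distributed as $\Exp_\beta$, independently of everything else. The crucial identity is
\[
M^{(i)} - X_i \;=\; m - (d_i - \delta_i) \;=\; \delta_i - (d_i - m),
\]
so the conditional distribution of the gap is precisely $\Exp_\beta$, giving $\Pr[\text{gap} \le c] = 1 - e^{-\beta c} \le \beta c$. In the degenerate case $d_i < m$ the conditioning $\{\delta_i > d_i - m\}$ is vacuous, the gap equals $\delta_i + (m - d_i) \ge m - d_i > 0$, and a direct one-line calculation using $1 - e^{-\beta x} \le \beta x$ recovers the same bound. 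Averaging over $(\delta_k)_{k\ne i}$ preserves it.

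The main obstacle is precisely the identity in the display above: without it, a naive union-bounding of $\Pr[X_i < M^{(i)} \le X_i + c]$ separately for each $i$ would leak an extra factor of $s$ and be useless. What makes the bound independent of $s$ is the observation that, conditional on $i$ winning, the gap is exactly the memoryless ``overshoot'' of $\delta_i$ beyond the threshold required for $i$ to beat all other competitors, and therefore has the same $\Exp_\beta$ tail regardless of how many competitors are present. Once this identity is in hand, the remainder of the proof is a routine application of conditioning and total probability.
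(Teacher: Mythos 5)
Your proof is correct: the decomposition over the identity of the minimizer, conditioning on $(\delta_k)_{k\ne i}$, and the memoryless-overshoot identity $M^{(i)}-X_i=\delta_i-(d_i-m)$ together give exactly the bound $1-e^{-\beta c}\le \beta c$, with the degenerate case $d_i<m$ handled properly. Note that the paper itself does not prove this lemma but cites Lemma~4.4 of Miller et al.; your argument is essentially the same memoryless argument used there, and it even recovers the exact $\beta c$ bound (rather than $O(\beta c)$), consistent with the paper's remark following the lemma statement.
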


Miller et al.~\cite{DBLP:conf/spaa/MillerPX13} used this lemma to analyze
the following ball growing technique that proceeds in time steps.
Every node $u$ in the graph grows a ball $B_u$ independently
and in parallel,
but with a delay of $\delta_u$ time steps, where $\delta_u\sim\Exp_\beta$.
Every ball increases its radius by 1 in each time step
and we say that the ball $B_v$ ``arrives'' at node $u$,
if node $v$ minimizes $\dist(u, v) - \delta_v$ over all nodes.
In this case $u$ ``gets absorbed'' by $v$'s ball $B_v$.
The process stops when every node $u$ is absorbed by some ball.
Notice that $u$ gets absorbed by its own ball $B_u$,
if and only if no other ball arrives at $u$ during the first
$\delta_u$ time steps.

Now consider an arbitrary edge $e$ in the graph
and imagine it to be split into two equal length edges by a node $v_e$.
If we let $d_1\le \ldots \le d_n$ denote the $n$ values
$\dist(u, v_e) - \delta_u$ for every $u\in V$,
the above lemma shows that the arrival times of the first and second ball
at node $v_e$ differ by at least $2\ell_e$
with probability $1-O(\beta \ell_e)=1-O(\frac{\ell_e\log n}{\eps r})$,
when choosing $\beta=\Theta(\frac{\log n}{\eps r})$.
Hence the lemma allows for bounding the probability of an edge being cut
by such ball growing process with exponentially distributed delays.

We remark that the implementations in Section~\ref{sec:implementation} draw from
discrete distributions. Rounding continuous distributions to multiples of
$n^{-c}$ for sufficiently large $c\in O(1)$ yields w.h.p.\ the same results, but
limits the number of random bits required to draw and store a random value to
$O(\log n)$.

\section{Blurry Ball Growing}
\label{sec: blur}
In this section,
we describe a routine \texttt{blur} that takes as input
a graph $G$,
a node set $B\subseteq V$,
and parameters $\rho$ and $\alpha$
and outputs a superset $U$ of $B$.
It guarantees that nodes in $U$ are not too far from $B$, yet the probability to
cut edges is small.
More precisely, we show the following theorem.
\begin{theorem}\label{thm:blur}
  Let $n\geq 2$ and $\alpha=\frac{1}{2\log n}$. There is a routine
  \textnormal{\texttt{blur}}$(G,\rho,B,\alpha)$ that outputs
  a superset $U$ of $B$ such that:
  \begin{compactenum}
    \item\label{item: prop 1} For every edge $e\in E$,
    the probability that $e\in \Ecut(U)$ is bounded by $O(\frac{\ell_e}{\rho})$.
    \item\label{item: prop 2} For every $v\in U$,
    it holds that $\dist(B,v)\leq \frac{\rho}{1-\alpha}$.
  \end{compactenum}
\end{theorem}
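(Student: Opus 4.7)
I would implement the routine \texttt{blur} as an iterative ball-growing procedure. Initialize $U_0 := B$, and fix $L = \Theta(\log n/\log(1/\alpha))$ large enough that $\alpha^{L-1}\rho < 1$. For each iteration $i = 1,\ldots,L$: build the super-source graph by contracting $U_{i-1}$ into a super-source $s_i$; run a $(1+\alpha)$-approximate SSSP from $s_i$ to obtain estimates $\tilde d_i(\cdot)$; and set $U_i := \{v \in V : \tilde d_i(v) \leq r_i\}$, where $r_1$ is drawn uniformly from $[0,\rho]$ and $r_i := \alpha^{i-1}\rho$ is deterministic for $i \geq 2$. Return $U := U_L$.

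Property~\ref{item: prop 2} is then straightforward by induction and telescoping: every $v \in U_i$ satisfies $\dist(U_{i-1}, v) \leq \tilde d_i(v) \leq r_i$, so $\dist(B, v) \leq \sum_{j=1}^i r_j$ by induction. Summing the geometric tail gives $\sum_{i=1}^L r_i \leq r_1 + \rho\alpha/(1-\alpha) \leq \rho/(1-\alpha)$.

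Property~\ref{item: prop 1} is the delicate part. Since only $r_1$ is random and each $U_i$ is monotone non-decreasing in $r_1$, for every vertex $v$ there is a threshold $\tau_v \in [0,\rho] \cup \{\infty\}$ with $v \in U \iff r_1 \geq \tau_v$, whence $\Pr[e \in \Ecut(U)] = |\tau_u-\tau_v|/\rho$. I would prove $|\tau_u - \tau_v| = O(\ell_e)$ via a \emph{catch-up} argument. Assume $\tau_u \leq \tau_v$, and let $i^*(r_1)$ denote the iteration at which $u$ first enters $U$. If $i^*(r_1) < L$, then at iteration $i^*(r_1)+1$ the super-source contains $u$, so the edge $e$ witnesses $\tilde d_{i^*(r_1)+1}(v) \leq (1+\alpha)\ell_e$; whenever $(1+\alpha)\ell_e \leq \alpha^{i^*(r_1)}\rho = r_{i^*(r_1)+1}$, $v$ is forced into $U_{i^*(r_1)+1} \subseteq U$, ruling out a cut. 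The cut window in $r_1$ is therefore confined to values where either $i^*(r_1) = L$ (whose range is absorbed by $r_L = \alpha^{L-1}\rho < 1 \leq \ell_e$ using integrality of edge lengths) or $\ell_e > \alpha^{i^*(r_1)}\rho/(1+\alpha)$, in which case $\alpha^{i^*(r_1)}\rho = O(\ell_e)$ and a geometric summation over the possible values of $i^*$ yields an $O(\ell_e)$ total.

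The main obstacle is making this cut analysis tight. Approximate distances do not satisfy the subtractive triangle inequality, so the estimates $\tilde d_i$ at two adjacent vertices can differ by $\alpha\,\tilde d + (1+\alpha)\ell_e$ rather than $\ell_e$. A naive per-iteration union bound therefore accumulates an additive $L\cdot \alpha = \Theta(1)$ contribution into the cut probability, which is unacceptable. The iterative catch-up mechanism is designed precisely to sidestep this: once one endpoint enters the ball, the next iteration's estimate at the other endpoint ``re-anchors'' to $(1+\alpha)\ell_e$, erasing the multiplicative-error slack. Formalizing this across all $L$ iterations, while handling the last-iteration boundary using the integrality of edge lengths, is where the bulk of the technical work lies.
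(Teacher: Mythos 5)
Your Property~\ref{item: prop 2} argument is fine and essentially matches the paper's. Property~\ref{item: prop 1}, however, has a genuine gap, and it is quantitative rather than cosmetic. Your catch-up argument confines the cut to those $r_1$ for which $u$ first enters $U$ at an iteration $i^*$ with $\alpha^{i^*}\rho<(1+\alpha)\ell_e$. Let $i_0$ be the smallest such index, so $\alpha^{i_0-1}\rho\ge(1+\alpha)\ell_e>\alpha^{i_0}\rho$. The set of $r_1$ with $i^*(r_1)=i$ is (at best) an interval whose length is the amount the ball grows in iteration $i$, i.e.\ roughly $r_i=\alpha^{i-1}\rho$ --- not $\alpha^{i}\rho$. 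Summing over $i\ge i_0$ gives measure about $\frac{\alpha^{i_0-1}\rho}{1-\alpha}$, which can be as large as $\Theta(\ell_e/\alpha)=\Theta(\ell_e\log n)$. So your ``geometric summation'' only establishes $O\bigl(\frac{\ell_e\log n}{\rho}\bigr)$, not $O\bigl(\frac{\ell_e}{\rho}\bigr)$: you have conflated $\alpha^{i^*}\rho=O(\ell_e)$ with the relevant interval length $\alpha^{i^*-1}\rho=O(\ell_e/\alpha)$. That extra $\log n$ factor would propagate through \decomp and weaken Theorem~\ref{theorem:main}. A secondary but real issue is the monotonicity claim: a $(1+\alpha)$-approximate SSSP oracle gives no guarantee that enlarging the contracted source set can only decrease the returned estimates, so $U_i$ need not be monotone in $r_1$ and the threshold representation $v\in U\iff r_1\ge\tau_v$ is unjustified as stated.

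The paper's routine avoids both problems by drawing a \emph{fresh} uniform radius $r^{[i]}\in\UUU[0,\alpha^{i-1}\rho]$ in every iteration (and by using accuracy $1+\alpha^2$ rather than $1+\alpha$, so that the approximation noise, which is $\alpha^2$ times distances of order $\alpha^{i-1}\rho$, contributes only $O(\alpha^2)$ to each per-iteration probability). With independent radii, the probability that an edge is still ``unsafe'' after iteration $i$, conditioned on being unsafe after iteration $i-1$, is at most roughly $\frac{\ell_e}{\alpha^{i-1}\rho}+\alpha$, and the cut probability is bounded by the \emph{product} of these factors. The product telescopes: surviving the $i_e-i$ iterations between scale $i$ and the scale $i_e$ at which $\ell_e$ becomes comparable to the radius costs a factor $\alpha^{i_e-i}$, which cancels the $1/\alpha^{i_e-1}$ in the final term and yields $O(\ell_e/\rho)$ up to the factor $(1+O(\alpha))^{i_e}=O(1)$ --- this last step is exactly where $\alpha=O(\log\log n/\log n)$ is needed. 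A single random radius cannot reproduce this product structure, so you would need to redesign the routine around per-iteration randomness rather than patch the catch-up argument.
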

The routine \texttt{blur}, see Algorithm~\ref{alg:blur}, is based on
$(1+\eps)$-approximate SSSP computations and contractions of node sets and thus
can be readily parallelized. The basic idea is to grow a ball of uniformly
random radius around $B$, where contraction of $B$ yields the super-source of
the SSSP computation. However, as approximating distances may imply that the
``noise'' due to the relative $\eps$-error may cut a short edge with a
comparatively large probability, the procedure is repeated with random radii
drawn from uniform distributions with width that decrease by factor $\alpha$ in
each step. To make this work, the approximation error of the SSSP algorithm must
satisfy $\varepsilon\leq \alpha^2$. Accordingly, it would be desirable to chose
$\alpha$ large for the sake of small computational costs in the approximate SSSP
routine. However, it turns out that, in order to achieve Property~\ref{item:
prop 1} in Theorem~\ref{thm:blur}, $\alpha$ has to satisfy $\alpha=O\big(\frac{\log \log
n}{\log n}\big)$.

\begin{algorithm}
	\label{alg:blur}
	\caption{\texttt{blur}\((G, \rho,  B, \alpha)\)}
	\Input{graph $G=(V,E,\ell)$,
    positive $\rho$,
    set $B\subset V$,
    positive $\alpha\leq \frac{1}{2}$}
	\Output{set of nodes \(U\)}
	\medskip
	\(i:=0\), $B^{[0]}:=B$\;
	\While{\(\alpha^i\rho \ge \min_{e\in E}\{\ell_e\}\)}{
		\(i:=i+1\), \(r^{[i]}\in \UUU[0, \alpha^{i-1} \rho]\).\;
		Obtain super-source graph \(G^{[i]}\) from \(G\)
    by contracting \(B^{[i-1]}\) into a super-source node \(s^{[i]}\)\;
		Compute $(1+\alpha^2)$-approximate SSSP tree $T^{[i]}$ of $G^{[i]}$\;
		\(
      B^{[i]}
      :=B^{[i-1]}\cup\{v\in G^{[i]}
        \,|\,\dist_{T^{[i]}}(s^{[i]},v)\le r^{[i]}\}\setminus \{s^{[i]}\}
    \)\;
	}
	\Return $\bigcup_{j=0}^i B^{[i]}$
\end{algorithm}

\paragraph{Analysis.}
We begin with Property~\ref{item: prop 2}, which readily follows from sampling
$r^{[i]}$ from $\UUU[0, \alpha^{i-1} \rho]$.
\begin{lemma}\label{lemma:reachable}
	If
  $
    \dist_{G^{[i+1]}}(s^{[i+1]}, u)
    =\dist(B^{[i]}, u)
    \ge \frac{\alpha^{i} \rho}{1-\alpha}
  $
  for some $i$, then $u\notin U$.
  In particular, it holds that $\dist_G(B, v)\le \frac{\rho}{1-\alpha}$
  for every $v\in U$.
\end{lemma}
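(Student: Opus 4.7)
The plan is to prove both claims by a single induction showing that, for every iteration $j\ge i$ and every $v\in B^{[j]}$, one has
\[
  \dist_G(B^{[i]},v)\;\le\;\sum_{k=i+1}^{j}\alpha^{k-1}\rho,
\]
which, since $\alpha<1$, is strictly less than $\alpha^{i}\rho/(1-\alpha)$. Taking $i=0$ (so that $B^{[0]}=B$) and $j$ equal to the final iteration index immediately yields the ``in particular'' statement $\dist_G(B,v)\le \rho/(1-\alpha)$ for all $v\in U$. The first (more refined) statement follows by contrapositive: if $u\in U$ then $u\in B^{[j]}$ for some $j\ge i$, so $\dist(B^{[i]},u)<\alpha^{i}\rho/(1-\alpha)$, contradicting the hypothesis.

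The base case $j=i$ is trivial since $\dist_G(B^{[i]},v)=0$ for $v\in B^{[i]}$. For the inductive step, suppose $v\in B^{[j]}\setminus B^{[j-1]}$, i.e.\ $v$ is freshly added in iteration $j$. By the inclusion rule in Algorithm~\ref{alg:blur}, this means $\dist_{T^{[j]}}(s^{[j]},v)\le r^{[j]}$. Since $T^{[j]}$ is a subgraph of $G^{[j]}$, the SSSP tree distance is at least the true distance in $G^{[j]}$, so $\dist_{G^{[j]}}(s^{[j]},v)\le r^{[j]}\le \alpha^{j-1}\rho$, where I use that $r^{[j]}\in\mathcal{U}[0,\alpha^{j-1}\rho]$. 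By construction of the super-source graph (contracting $B^{[j-1]}$ into $s^{[j]}$), this equals $\dist_G(B^{[j-1]},v)$, so there exists some witness $w\in B^{[j-1]}$ with $\dist_G(w,v)\le \alpha^{j-1}\rho$. The induction hypothesis applied to $w\in B^{[j-1]}$ gives $\dist_G(B^{[i]},w)\le \sum_{k=i+1}^{j-1}\alpha^{k-1}\rho$, and the triangle inequality closes the induction.

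There is essentially no hard step here; the only subtlety is making sure one uses the correct direction of the SSSP approximation. The $(1+\alpha^2)$ approximation only \emph{overestimates} distances, so the inequality $\dist_{T^{[j]}}(s^{[j]},v)\ge \dist_{G^{[j]}}(s^{[j]},v)$ used above holds unconditionally; the approximation factor does not enter Property~\ref{item: prop 2} at all (it will only matter later for Property~\ref{item: prop 1}). Note also that the bound $\sum_{k=i+1}^{j}\alpha^{k-1}\rho=\alpha^{i}\rho\,\frac{1-\alpha^{j-i}}{1-\alpha}$ is strictly below $\alpha^{i}\rho/(1-\alpha)$, which is precisely what yields the strict contradiction needed for the contrapositive direction of the lemma's first assertion.
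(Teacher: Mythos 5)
Your proof is correct and is essentially the paper's own argument: the paper bounds the distance from any later $B^{[k]}$ back to $B^{[i]}$ by the telescoping sum $\sum_{j\ge i+1} r^{[j]} \le \sum_{j\ge i+1}\alpha^{j-1}\rho < \frac{\alpha^i\rho}{1-\alpha}$, which is exactly your induction written in compressed form. Your explicit treatment of the inductive step (using that the tree distance only overestimates the true distance in $G^{[j]}$, so the approximation factor never enters) is a faithful, slightly more detailed rendering of the same idea.
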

\begin{proof}
  Any $u\in B^{[k]}$ for $k>i$ has distance to $B^{[i]}$
  at most
  $\sum_{j \ge i+1} r^{[j]}
  	\le \sum_{j \ge i+1} \alpha^{j-1} \rho
  	< \alpha^{i} \rho \sum_{j =0}^\infty \alpha^j
  	= \frac{\alpha^{i} \rho}{1-\alpha}$, showing the first claim.
  Setting $i=0$ yields the second claim.
\end{proof}

It remains to verify Property~1, i.e., that the probability of cutting an
arbitrary edge \(e\in E\) is \(O(\frac{\ell_e}{\rho})\).
We start with the following definition.
\begin{definition}\label{def:safe}
	We say that \(\{u,v\}\in E\) \emph{is safe after step \(i\)}
  of {\normalfont\texttt{blur}\((G, \rho, B, \alpha)\)},
  if either \(u,v\in B^{[i]}\) or
  \(
    \min\{\dist_{G^{[i+1]}}(s^{[i+1]}, u),\dist_{G^{[i+1]}}(s^{[i+1]}, v)\}
    \ge \frac{\alpha^{i}\rho}{1-\alpha}.
  \)
\end{definition}

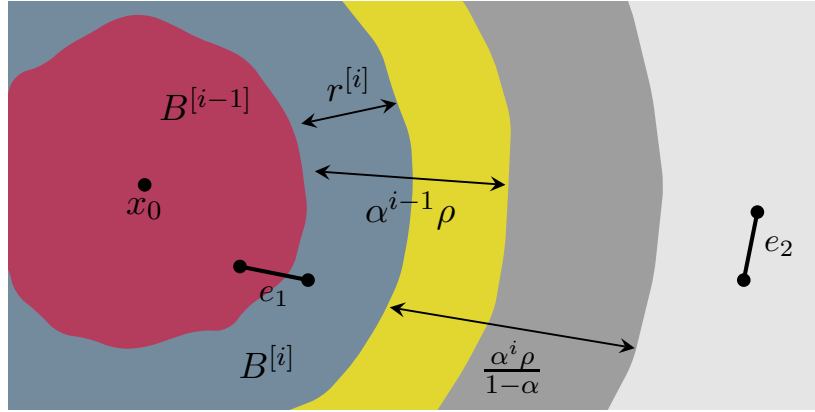
\begin{figure}[ht!]
	\begin{center}
		\resizebox{0.7\textwidth}{!}{

\begin{tikzpicture}
\tikzstyle{every node}=[scale=.7]
  \clip (-1,-1.65) rectangle + (6, 3.0);

  \fill[gray, opacity=0.2] (-10, -10) rectangle + (20, 20);

  \coordinate (c) at (0, 0);
  \fill[gray, rounded corners=1mm, opacity=0.7] (c) \irregularcircle{3.8cm}{.2mm} node[black, opacity=1] at (0.9, -1.3) {};`
  \fill[yellow, rounded corners=1mm, opacity=0.7] (c) \irregularcircle{2.7cm}{.4mm} node[black, opacity=1] at (0.9, -1.3) {};`
  \fill[hanblue, rounded corners=1mm, opacity=0.7] (c) \irregularcircle{2cm}{.55mm} node[black, opacity=1] at (0.9, -1.3) {$B^{[i]}$};`
  \fill[alizarin, rounded corners=1mm, opacity=0.7] (c) \irregularcircle{1.2cm}{.7mm} node[black, opacity=1] at (0.45, 0.6) {$B^{[i-1]}$};

  \draw[black, <->, >=stealth] (1.15, .45) -- (1.85, 0.6) node[above, midway] {$r^{[i]}$};
  \draw[black, <->, >=stealth] (1.25, 0.1) -- (2.65, -0.0) node[below, midway] {$\alpha^{i-1} \rho$};
  \draw[black, <->, >=stealth] (1.8, -.9) -- (3.6, -1.2) node[below, midway] {$\frac{\alpha^{i} \rho}{1-\alpha}$};

  \fill[black] (0,0) circle (.5mm) node[below] {$x_0$};

  \draw[fill, black, line width=.3mm] (1.2,-.7) node {} -- (0.7,-.6) node {} node[below, midway] {\small $e_1$};
  \fill[black] (1.2,-.7) circle (.5mm);
  \fill[black] (.7,-.6) circle (.5mm);

  \draw[fill, black, line width=.3mm] (4.4,-.7) node {} -- (4.5,-.2) node {} node[right, midway] {\small $e_2$};
  \fill[black] (4.4,-.7) circle (.5mm);
  \fill[black] (4.5,-.2) circle (.5mm);
\end{tikzpicture}
		}
	\end{center}
	\caption{
		An illustration of the blurry ball growing procedure
    \texttt{blur}\((G, \rho,  B, \alpha)\) in iteration $i$.
    The radius $r^{[i]}$ is sampled uniformly from $[0,\alpha^{i-1}\rho]$
    and $B^{[i]}$ is defined as all nodes whose
    $(1+\alpha^2)$-approximate distance to $B^{[i-1]}$ is at most $r^{[i]}$.
		Both edges $e_1$ and $e_2$ are safe from being cut after iteration $i$:
    $e_1$ has both endpoints in $B^{[i]}\subseteq U$
    and both endpoints of $e_2$ are farther away from $B^{[i]}$
    than $\frac{\alpha^i\rho}{1-\alpha}$,
    meaning that neither of them will lie in $U$ after termination.
	}
	\label{fig: blurry ball}
\end{figure}
Clearly, if \(\{u,v\}\in E\) is safe after step \(i\)
of \texttt{blur}\((G, \rho,  B, \alpha)\),
then \(e\notin \Ecut(U)\):
if $u,v\in B^{[i]}$, then $u,v\in U$ by construction;
if
$
  \min\{\dist_{G^{[i+1]}}(s^{[i+1]}, u),\dist_{G^{[i+1]}}(s^{[i+1]}, v)\}
  \geq \frac{\alpha^{i}\rho}{1-\alpha},
$
it follows that $u,v\notin U$ by Lemma~\ref{lemma:reachable}.
See Figure~\ref{fig: blurry ball} for an illustration of these two events.
Thus, in order to bound the probability of an edge being cut,
it suffices to bound the probability that an edge never becomes safe.
Accordingly, we define \(X_{i,e}\) to be the event that \(e\) is not safe
after step \(i\) of the algorithm
conditioned on the event that \(e\) was not safe after step \(i-1\) and bound
$P[X_{i,e}]$.
\begin{lemma}\label{lemma:edge stays unsafe}
	For each iteration $i$ and edge $e\in E$, it holds that
  \(
    \Pr[X_{i,e}]
    \le
    \frac{5}{4} \cdot \frac{\ell_e}{\alpha^{i-1} \rho}
    + \alpha \cdot (1+4\alpha).
  \)
\end{lemma}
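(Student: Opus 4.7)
The plan is to carry out the calculation conditional on everything determined before the draw of $r^{[i]}$, so that $r^{[i]} \sim \UUU[0, \alpha^{i-1}\rho]$ is the only remaining randomness. Write $d_x := \dist(B^{[i-1]}, x) = \dist_{G^{[i]}}(s^{[i]}, x)$ and $\tilde d_x := \dist_{T^{[i]}}(s^{[i]}, x)$ for $x \in \{u,v\}$, so that $d_x \leq \tilde d_x \leq (1+\alpha^2) d_x$ by the approximate SSSP guarantee. I relabel the endpoints so that $\tilde d_u \leq \tilde d_v$. The hypothesis that $e$ was unsafe after step $i-1$ excludes the sub-case $u, v \in B^{[i-1]}$ (which would make $e$ permanently safe) and forces $\min(d_u, d_v) < \alpha^{i-1}\rho/(1-\alpha)$.

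Next I translate the two requirements of ``unsafe after step $i$'' into constraints on $r^{[i]}$. The negation of ``both endpoints in $B^{[i]}$'' is simply $r^{[i]} < \tilde d_v$. For the negation of ``both endpoints far from $B^{[i]}$''---that is, $\min(\dist(B^{[i]},u), \dist(B^{[i]},v)) < \alpha^i\rho/(1-\alpha)$---the key observation is that every node $w \in B^{[i]}$ satisfies $d_w \leq r^{[i]}$: either $w \in B^{[i-1]}$ (so $d_w = 0$), or $w$ was added in step $i$ with $\tilde d_w \leq r^{[i]}$ and thus $d_w \leq \tilde d_w \leq r^{[i]}$. Applying the \emph{exact} subtractive triangle inequality to the real distances $d_x$ and $d_w$ then yields $\dist(B^{[i]}, x) \geq d_x - r^{[i]}$ for each $x \in \{u,v\}$, so the ``not both far'' event forces $r^{[i]} > \min(d_u, d_v) - \alpha^i\rho/(1-\alpha)$.

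Combining the two bounds, $X_{i,e}$ implies that $r^{[i]}$ lies in an interval of length at most $\tilde d_v - \min(d_u, d_v) + \alpha^i\rho/(1-\alpha)$. To bound $\tilde d_v - \min(d_u, d_v)$ I split on the ordering of $d_u, d_v$: if $d_v \leq d_u$, then $\tilde d_v - d_v \leq \alpha^2 d_v$; if $d_u \leq d_v$, then the triangle inequality $d_v \leq d_u + \ell_e$ together with $\tilde d_v \leq (1+\alpha^2) d_v$ gives $\tilde d_v - d_u \leq \alpha^2 d_u + (1+\alpha^2)\ell_e$. Both cases are captured by $\tilde d_v - \min(d_u, d_v) \leq \alpha^2 \min(d_u, d_v) + (1+\alpha^2)\ell_e$, which, by the conditioning, is less than $\alpha^{i+1}\rho/(1-\alpha) + (1+\alpha^2)\ell_e$. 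Dividing the resulting interval length by the sampling range $\alpha^{i-1}\rho$ yields $\Pr[X_{i,e}] \leq \alpha(1+\alpha)/(1-\alpha) + (1+\alpha^2)\ell_e/(\alpha^{i-1}\rho)$, and the claimed bound follows from the elementary estimates $(1+\alpha)/(1-\alpha) \leq 1+4\alpha$ and $1+\alpha^2 \leq 5/4$, both valid for $\alpha \leq 1/2$.

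The main obstacle is exactly the one flagged in the paper's introduction: approximate distances do not obey the subtractive triangle inequality. The workaround is to use \emph{real} distances to $B^{[i-1]}$ when lower-bounding $\dist(B^{[i]}, \cdot)$---which is legitimate because any node in $B^{[i]} \setminus B^{[i-1]}$ has \emph{real} distance at most $r^{[i]}$ to $B^{[i-1]}$, not merely approximate distance---while using approximate distances (with the unavoidable $\alpha^2 d$ slack) only to upper-bound $\tilde d_v$. It is exactly this separation of roles that keeps the $\alpha^2$-term under control and produces the $\alpha(1+4\alpha)$ correction in the final bound.
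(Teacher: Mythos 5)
Your proof is correct and follows essentially the same approach as the paper: both arguments confine $r^{[i]}$ to an interval whose upper endpoint comes from the approximate distance bound $\tilde d \le (1+\alpha^2)d$ plus the triangle inequality, and whose lower endpoint comes from the exact subtractive step $\dist(B^{[i]},x)\ge d_x - r^{[i]}$, then divide the interval length by $\alpha^{i-1}\rho$. The only (immaterial) differences are that you obtain the bound $\min(d_u,d_v)<\frac{\alpha^{i-1}\rho}{1-\alpha}$ directly from the conditioning on unsafety after step $i-1$ rather than from the constraint $r^{[i]}\le\alpha^{i-1}\rho$, and you replace the paper's ``w.l.o.g.'' by an explicit case split.
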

\begin{proof}
	By Definition~\ref{def:safe}, it follows that
  if \(e=\{u,v\}\) is not safe after step \(i\),
  we must, w.l.o.g.\ over the choice of $u, v$,
  have \(\dist_{G^{[i+1]}}(s^{[i+1]}, u)< \frac{\alpha^{i}\rho}{1-\alpha}\)
  as well as $u,v\not \in B^{[i]}$.
  By the approximation guarantee of the SSSP algorithm
  and the triangle inequality, the latter entails that
  \begin{align*}
    r^{[i]}<\max\{\dist_{T^{[i]}}(s^{[i]}, u), \dist_{T^{[i]}}(s^{[i]}, v)\}
    &\leq (1+\alpha^2)\max\{\dist_{G^{[i]}}(s^{[i]}, u),
                            \dist_{G^{[i]}}(s^{[i]}, v)\}\\
    &\leq (1+\alpha^2)(\dist_{G^{[i]}}(s^{[i]}, u)+\ell_e).
  \end{align*}
  From the former inequality, we get that
	\begin{equation}\label{formula:di_ri}
		\dist_{G^{[i]}}(s^{[i]}, u)
		\le \dist_{G^{[i+1]}}(s^{[i+1]}, u) + r^{[i]}
		< \frac{\alpha^{i}\rho}{1-\alpha} + r^{[i]},
	\end{equation}
  which yields that
  \(
    r^{[i]}
	   \ge \dist_{G^{[i]}}(s^{[i]}, u) - \frac{\alpha^{i}\rho}{1-\alpha}.
  \)
  As $r^{[i]}$ is drawn uniformly from an interval of length
  $\alpha^{i-1}\rho$,
  these lower and upper bounds on
  \(r^{[i]}\)
  readily imply a bound on the probability of $X_{i,e}$:
	\begin{align}
			\Pr[X_{i,e}]
			&\le \Pr\Big[r^{[i]}\in
        \Big(
          \dist_{G^{[i]}}(s^{[i]}, u) - \frac{\alpha^{i}\rho}{1-\alpha},
          (1+\alpha^2)\cdot (\dist_{G^{[i]}}(s^{[i]}, u) + \ell_e)
        \Big)
        \Big]\nonumber\\
			&\le \frac{(1+\alpha^2)\ell_e}{\alpha^{i-1}\rho}
        + \frac{\alpha^2 \dist_{G^{[i]}}(s^{[i]}, u)}{\alpha^{i-1}\rho}
        + \frac{\alpha}{1-\alpha}.
      \label{formula:prob xie}
	\end{align}
	Moreover, from~\eqref{formula:di_ri} and \(r^{[i]}\le \alpha^{i-1}\rho\),
  we conclude that
  \(
    \dist_{G^{[i]}}(s^{[i]}, u)
    < \alpha^{i-1} \rho\cdot (1+\frac{\alpha}{1-\alpha})
    = \frac{\alpha^{i-1} \rho}{1-\alpha}.
  \)
	Plugging into~\eqref{formula:prob xie}, with $\alpha\leq \frac{1}{2}$ we get
	that $\Pr[X_{i,e}]
    \leq \frac{5}{4}\cdot\frac{\ell_e}{\alpha^{i-1}\rho}
      + \frac{\alpha^2}{1-\alpha} + \frac{\alpha}{1-\alpha}
    \leq \frac{5}{4}\cdot\frac{\ell_e}{\alpha^{i-1}\rho}
      + \alpha(1+4\alpha)$.
\end{proof}
Applying this lemma to all iterations
in which $e$ has a significant probability to become safe
(i.e., all iterations $i$ for which $\alpha^{i-1}r \geq \ell_e$),
we obtain the desired bound on the probability that $e$ is cut.
\begin{lemma}\label{lemma:prob_cut}
	If $\alpha =O\left(\frac{\log\log n}{\log n}\right)$, then $\Pr[e\in \Ecut(U)] = O\left(\frac{\ell_e}{\rho}\right)$ for each $e\in E$.
\end{lemma}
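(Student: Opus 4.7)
The plan is to bound $\Pr[e \in \Ecut(U)] \leq \prod_{i=1}^{k}\Pr[X_{i,e}]$ --- a consequence of telescoping conditional probabilities, since $e$ being cut forces it never to become safe --- where $k$ denotes the largest iteration index satisfying $\alpha^{k-1}\rho \geq \ell_e$. Such a $k$ is a legal iteration of \texttt{blur} because $\ell_e \geq 1 = \min_{e'}\ell_{e'}$, so the while-loop condition is still met; and since every factor is at most $1$, the product may be safely truncated at this prefix.

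Writing $a_i := \ell_e/(\alpha^{i-1}\rho)$ and $b := \alpha(1+4\alpha)$, Lemma~\ref{lemma:edge stays unsafe} reads $\Pr[X_{i,e}] \leq \tfrac{5}{4}a_i + b = b(1 + \tfrac{5 a_i}{4b})$. Applying $1+x\leq e^x$ termwise gives
\[
  \prod_{i=1}^{k-1}\Pr[X_{i,e}] \;\leq\; b^{k-1}\exp\!\left(\sum_{i=1}^{k-1}\frac{5 a_i}{4b}\right).
\]
The geometric sum $\sum_{i=1}^{k-1} a_i$ equals $a_{k-1}\sum_{j=0}^{k-2}\alpha^j \leq a_{k-1}/(1-\alpha)$, and the key observation $a_{k-1}\leq\alpha$ is just a restatement of the defining inequality $\alpha^{k-1}\rho \geq \ell_e$. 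With $b = \Theta(\alpha)$ the argument of the exponential is therefore a constant, so the exponential contributes only an $O(1)$ factor.

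The remaining piece $b^{k-1} = \alpha^{k-1}(1+4\alpha)^{k-1}$ is where the hypothesis on $\alpha$ matters. From $\alpha^{k-1}\rho \geq \ell_e \geq 1$ together with $\rho\leq n^{O(1)}$ I get $k-1 = O(\log n/\log(1/\alpha)) = O(\log n/\log\log n)$, using $\alpha=O(\log\log n/\log n)$ to estimate $\log(1/\alpha) = \Omega(\log\log n)$. Hence $\alpha(k-1) = O(1)$ and $(1+4\alpha)^{k-1} \leq e^{4\alpha(k-1)} = O(1)$, giving $\prod_{i=1}^{k-1}\Pr[X_{i,e}] = O(\alpha^{k-1})$. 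Multiplying in the $k$-th factor $\Pr[X_{k,e}] \leq \tfrac{5}{4}a_k + b$ then finishes: since $a_k \alpha^{k-1} = \ell_e/\rho$ the first summand contributes $O(\ell_e/\rho)$, and since $\alpha^k\rho < \ell_e$ by maximality of $k$ the second summand contributes $b\alpha^{k-1} = (1+4\alpha)\alpha^k = O(\ell_e/\rho)$.

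The chief obstacle is keeping $(1+4\alpha)^{k-1}$ bounded by a constant: each iteration injects a multiplicative $(1+O(\alpha))$ error on top of the uniform noise of magnitude $b$, and compounding up to $\Theta(\log n/\log\log n)$ such errors is exactly on the edge of ruining the bound --- this is what dictates the upper bound on $\alpha$ in the hypothesis of the lemma.
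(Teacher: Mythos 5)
Your proof is correct and follows essentially the same route as the paper: truncate the product of the conditional probabilities $\Pr[X_{i,e}]$ from Lemma~\ref{lemma:edge stays unsafe} at the largest index $k$ with $\alpha^{k-1}\rho\geq\ell_e$ (the paper's $i_e$), control the geometric sum of the terms $\ell_e/(\alpha^{i-1}\rho)$, and invoke the hypothesis on $\alpha$ precisely to keep the compounded $(1+O(\alpha))^{k}$ factor constant --- your uniform $1+x\leq e^x$ bookkeeping merely replaces the paper's explicit case split on $i$ relative to $i_e$ and on $i_e\in\{1,2,\geq 3\}$. The only (trivial) omission is the case $\ell_e>\rho$, where no index $k\geq 1$ with $\alpha^{k-1}\rho\geq\ell_e$ exists and one simply bounds the cut probability by $1<\ell_e/\rho$.
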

\begin{proof}
	If $\ell_e> \rho$, trivially $\Pr[e\in \Ecut(U)]\leq 1 < \frac{\ell_e}{\rho}$.
  Otherwise, we let $i_e\geq 1$ be the largest index such that
  \(\ell_e\le \alpha^{i_e-1} \rho\).
	By Lemma~\ref{lemma:edge stays unsafe},
  for all $i$ the probability that an edge that is not safe
  after $i-1$ steps is still not safe after step $i$ is bounded by
	$\Pr[X_{i,e}]
    \le\frac{5}{4} \cdot \frac{\ell_e}{\alpha^{i-1} \rho}
      + \alpha \cdot (1+4\alpha)$.
  Depending on the index $i$, we differentiate this upper bound further:
  \begin{compactitem}
    \item Case $i=i_e$:
    As $\alpha^{i_e}\rho<\ell_e$,
    we get that $\alpha<\frac{\ell_e}{\alpha^{i_e-1}\rho}$.
    With $\alpha\leq \frac{1}{2}$,
    $\Pr[X_{i_e,e}]<\frac{5 \ell_e}{\alpha^{i_e-1}\rho}$ follows.
    \item Case $i=i_e-1$:
    Then
    \(
      \frac{\ell_e}{\alpha^{i-1} \rho}
      \le \frac{\alpha^{i_e-1}}{\alpha^{i_e-2}}
      = \alpha,
    \)
    yielding with $\alpha \leq \frac{1}{2}$ that $\Pr[X_{i_e-1,e}]< 5\alpha$.
    \item Case $i\leq i_e-2$: This entails that
    \(\frac{\ell_e}{\alpha^{i-1}\rho}\le \alpha^2\) and thus
    $
      \Pr[X_{i,e}]
      < 2\alpha^2 + \alpha\cdot (1+4\alpha)= \alpha\cdot (1+6\alpha).
    $
  \end{compactitem}
  Using these bounds and distinguishing cases based on $i_e$,
  we can bound the overall probability that the edge is cut.
  \begin{compactitem}
    \item Case $i_e=1$:
    $\Pr[e\in \Ecut(U)] \le \Pr[X_{1,e}]=\Pr[X_{i_e,e}]<\frac{5\ell_e}{\rho}$.
    \item Case $i_e=2$:
    $
      \Pr[e\in \Ecut(U)]
      \le \Pr[X_{2,e}]\cdot \Pr[X_{1,e}]
      = \Pr[X_{i_e,e}]\cdot \Pr[X_{i_e-1,e}]
      < \frac{5\ell_e}{\alpha\rho}\cdot 5\alpha
      = \frac{25\ell_e}{\rho}.
    $
    \item Case $i_e\geq 3$:
  	$\Pr[e\in \Ecut(U)]
  		\le \frac{25\ell_e}{\rho}\cdot
              \prod_{i\le i_e-2}\Pr[X_{i,e}]
  		\le \frac{25\ell_e}{\rho}\cdot (\alpha(1+6\alpha))^{i_e-2}
  		< \frac{25\ell_e}{\rho} \cdot (1+6\alpha)^{i_e}$.
  \end{compactitem}
  Hence, it remains to bound $(1+6\alpha)^{i_e}= O(1)$ to complete the proof.
  Noting that
  $
    i_e
    =\left\lceil \log_{1/\alpha}\frac{\rho}{\ell_e}\right\rceil
    = O\left(\frac{\log n}{\log (1/\alpha)}\right)
  $
  due to the assumption that edge lengths are from $1,\ldots,n^{O(1)}$,
  we have that
  \begin{equation*}
    (1+6\alpha)^{i_e}
    = (1+6\alpha)^{O(\log n /\log (1/\alpha))}
    = \left((1+6\alpha)^{1/(6\alpha)}\right)^{O(\alpha\log n /\log (1/\alpha))}
    =e^{O(\alpha\log n /\log (1/\alpha))}
  \end{equation*}
  and therefore the precondition that
  $\alpha=O(\frac{\log \log n}{\log n})$ implies the statement of the lemma.
\end{proof}
Theorem~\ref{thm:blur} now follows from Lemmas~\ref{lemma:reachable}
and~\ref{lemma:prob_cut}.

\section{Tree-Supported Decomposition}
\label{section:graph-decomposition}
In this section, we present the construction of TSDs that admit low diameter,
low load, and low edge cutting probability, establishing
Theorem~\ref{theorem:main}.
Our method is inspired by the partition technique
from~\cite{DBLP:conf/spaa/MillerPX13} that allows for efficient
parallel and distributed implementations. However, we seek to
rely on approximate rather than on exact distance computations.

To motivate our approach, consider naive application of the decomposition
technique from~\cite{DBLP:conf/spaa/MillerPX13} using approximate rather than
exact distance computations. This would look as follows:
One would add a super-source $s$ to the graph,
assign exponentially sampled lengths to the edges adjacent to $s$,
compute $(1+\eps)$-approximate distances
from $s$ to all nodes for some small enough $\eps$,
and assign nodes to the root of the subtree of $s$ that they are situated in.
This approach certainly leads to a decomposition of $G$.
However, a consequence of the approximate distance computation is that
the probability to cut a short edge is dominated by the approximating error,
which is $\eps$ times the distance to the source, which may be very large
compared to the length of the edge.

In order to still ensure the desired bound, we seek to employ the blurring
technique from the previous section to clusters obtained as described above.
This introduces the new obstacle that the clusters need to be separated from
each other first, as the blurring procedure grows the clusters by a random
radius. We enforce this separation by removing from each cluster every node
that is too close to its boundary; Property~\ref{item: prop 2} of
Theorem~\ref{thm:blur}, namely $\dist_G(B,v)\leq \frac{\rho}{1-\alpha}$ for
blurring cluster $B$, determines what precisely is ``too close.''
While this may result in a large portion of the graph not being contained
in any cluster even after blurring all clusters,
we can ensure that each edge is contained in some cluster with at least
probability $p=\Omega(1)$ (or is very long and can be safely deleted).
Repeating the procedure $O(\log n)$ times hence completes the
decomposition w.h.p.
\begin{algorithm}[hbt]
	\label{alg:decomposition}
	\caption{\decomp$(G, \Delta)$}
	\Input{graph \(G=(V,E,\ell)\) and $\Delta\in \NN$}
	\Output{decomposition $D=(C_1,\ldots, C_k)$ of $G$, trees $\TTT=(T_1,\ldots, T_k)$ of depth $\le \frac{\Delta}{2}$ s.t.\ $T_i$ spans a superset of $C_i$}

	\medskip
	\(\beta:= \frac{c \log n}{\Delta}\), $\eps := \frac{1}{c\log^2 n}$, \(D:=\emptyset\), \(\TTT:=\emptyset\)  \atcp{$c$ is a sufficiently large constant}\label{line:beta}
	delete all edges $e\in E$ of length $\ell_e>\frac{1}{40\beta}$\;
	\While{$E(G)\neq \emptyset$}{

		\medskip
		\tcp{\textit{* initial decomposition by exponential shifts *}}
		pick \(\delta_u\sim \Exp_\beta\) for each $u\in V$ independently\label{line:shifts}\;
		\(G_s:=\) super-source graph of $G$ with edges $\{u,s\}$ of length $\ell_{us}=1+\max_{v\in V}\{\delta_v\}-\delta_u$ for $u\in V$\;
		$T:=(1+\eps)$-approximate SSSP tree  for $G_s$ with source $s$\;
		\(R:=\) roots of $T\setminus \{s\}$ and \(\VVV:=(V_u)_{u\in R}\), where $V_u$ are the nodes in $u$'s subtree\label{line:trees}\;

		\medskip
		\tcp{\textit{* separate cells *}}
		$\partial \VVV:=\bigcup_{u\in R}\{v\in V_u\,|\,\exists \{v,w\}\in E\colon w\notin V_u\}$\label{line:shrink_start}\;
		\(G_s':=\) super-source graph of $G$ with edges $\{u,s\}$ of length $1$ for $u\in \partial \VVV$ \;
		$T':=(1+\eps)$-approximate SSSP tree for $G_s'$ with source $s$\;
		\For{each $u\in R$}{
			$V_u^\circ:=V_u\setminus \{v\in V_u\,|\,\dist_{T'}(s,v)\leq \frac{1+\eps}{4\beta}\}$\label{line:shrink_end}\atcp{$V_u^\circ$ is the interior of cell $V_u$}
			$C_u:=$ \texttt{blur}\((G, \rho, V_u^\circ, \frac{1}{2\log n})\), where $\rho:=\frac{1-\frac{1}{2\log n}}{4\beta}$\label{line:blur}\;
			append $C_u$ to $D$ and the subtree of $T$ rooted at $u$ to $\TTT$\;
			$G:=G\setminus C_u$\;
		}
	}
	\Return \((D, \TTT)\)
\end{algorithm}

\paragraph{Algorithm.}
The pseudocode of our procedure \decomp is given
in Algorithm~\ref{alg:decomposition}.
The value $\beta$ chosen in Line~\ref{line:beta} is the parameter
chosen for the exponential distributions:
up to normalization, the density of the distribution is
$\exp(-\beta x)$. The diameter of each (initial) cluster is bounded by
$\max_{v\in V}\{\delta_v\}$,
which we need to be smaller than $\frac{\Delta}{2}$ w.h.p.
However, the probability to cut edges increases as we make the
distributions ``narrower,'' i.e., $\beta$ larger.
Accordingly, we choose $\beta= \Theta\left(\frac{\log n}{\Delta}\right)$,
just small enough to ensure $\delta_v\leq \frac{\Delta}{2}$
w.h.p.\ for all $v\in V$.

The partition from~\cite{DBLP:conf/spaa/MillerPX13} can be interpreted as a
Voronoi decomposition in which each cell center $x_v$ is a virtual copy of its
corresponding node $v\in V$ that is attached to $v$ by an edge of length
$\max_{w\in V}\{\delta_w\}-\delta_v$. Note that the children of the virtual node
$s$ in the (approximate) shortest path tree $T$ are exactly the nodes which have
not been ``absorbed'' into another node's Voronoi cell before they started to
grow their own. Lines~\ref{line:shrink_start} to \ref{line:shrink_end} remove
from each cluster nodes that are in distance (roughly) $\frac{1}{4\beta}$ from
the boundary of the Voronoi cell containing them. Choosing a distance of
$O\left(\frac{1}{\beta}\right)$ here ensures a constant probability that edges
of this length remain in a shrunk cluster; longer edges can safely be cut, as
the required bound on the probability for cutting them is trivial (i.e., $1$), which
is why they are removed at the start of the routine. We then proceed to applying
the blurring subroutine to each (remaining) shrunk cluster. Note that, as the
clusters remain separated due to the choice of parameters, we can realize this
step concurrently for all clusters. The algorithm iterates until all nodes are
assigned to clusters, which requires $O(\log n)$ loop iterations w.h.p.

The remainder of this section is dedicated to proving
Theorem~\ref{theorem:main}.

\paragraph*{Number of Iterations.}
We first prove the key statement that,
with at least constant probability, for any node $w$,
a ball of radius $\Theta(\frac{1}{\beta})$ around it is contained
within the interior of a cell.
\begin{lemma}\label{lemma:progress}
  Consider an iteration of the while loop of Algorithm~\ref{alg:decomposition} and (by slight abuse of notation) denote by $G=(V,E)$ the subgraph that remains at the beginning of the iteration.
  For any $w\in V$, with at least constant probability a ball of radius $\frac{1}{40\beta}$ around it is contained in the interior of a cell computed in Line~\ref{line:shrink_end}.
\end{lemma}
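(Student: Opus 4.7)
The plan is to combine a Miller--Peng--Xu style exponential-shift argument with a careful accounting of the distortion introduced by the approximate SSSP tree $T$. For $u,v\in V$ let $f_u(v):=\dist_G(u,v)-\delta_u$ and $m(v):=\min_u f_u(v)$; let $u^*$ denote the argmin for $w$ and let $g:=\min_{u\neq u^*}f_u(w)-m(w)$ be the ``gap'' between the two smallest values of $f_u(w)$. I will pick a working radius $R=\frac{1}{3\beta}$, larger than the target ball radius $r:=\frac{1}{40\beta}$ but still $\Theta(1/\beta)$, and show that conditioned on the event $\mathcal{G}:=\{g>2R+\eps D\}$, where $D:=1+\max_v\delta_v$, every node in $B(w,r)$ lies in the interior $V_{u^*}^\circ$ of a single cell.

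First, I would lower-bound $\Pr[\mathcal{G}]$ via Lemma~\ref{lemma:lemma4.4}, which applies directly to the shifted values $\dist_G(u,w)-\delta_u$ and gives $\Pr[\neg\mathcal{G}]\le \beta(2R+\eps D)=\frac{2}{3}+\beta\eps D$. Lemma~\ref{lemma:radius_bound} together with a union bound over the $n$ exponential draws yields $D=O(\Delta)$ w.h.p., and with the algorithmic choices $\eps=1/(c\log^2 n)$ and $\beta=\Theta(\log n/\Delta)$ this gives $\beta\eps D=O(1/\log n)=o(1)$. Hence $\Pr[\mathcal{G}]\ge \frac{1}{3}-o(1)$, a constant.

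Next, conditioned on $\mathcal{G}$, the argument proceeds in three substeps. (i) Every $v\in B(w,R)$ is assigned to root $u^*$ in $T$: if $v$ sits in root $u'$'s subtree, the $(1+\eps)$-approximation guarantee together with $\dist_{G_s}(s,v)\le D$ (which holds since $m(v)\le f_v(v)=-\delta_v\le 0$) yields $f_{u'}(v)\le m(v)+\eps D$; combining with $|f_{u'}(v)-f_{u'}(w)|\le R$ and $m(v)\le m(w)+R$, I obtain $f_{u'}(w)\le m(w)+2R+\eps D< m(w)+g$, which forces $u'=u^*$. (ii) A short geometric step: since edges of length $>\frac{1}{40\beta}$ were deleted at the start of the iteration, if $B(w,R)\subseteq V_{u^*}$ then any boundary node within distance $R-\frac{1}{40\beta}$ of $w$ would have its cross-cell neighbor inside $B(w,R)$, a contradiction; hence $\dist_G(\partial\VVV,w)>R-\frac{1}{40\beta}$. (iii) For each $v\in B(w,r)$, the triangle inequality plus $\dist_{T'}(s,v)\ge \dist_{G_s'}(s,v)=1+\dist_G(\partial\VVV,v)$ gives $\dist_{T'}(s,v)>1+R-2r$, which with $R=\frac{1}{3\beta}$ and $r=\frac{1}{40\beta}$ exceeds $\frac{1+\eps}{4\beta}$ by a direct calculation, so $v\in V_{u^*}^\circ$.

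I expect substep~(i) to be the main obstacle, since it is where the $(1+\eps)$-approximate SSSP distortion has to be absorbed. In the classical Miller--Peng--Xu analysis with exact distances, $u^*=\argmin_u f_u(v)$ determines the cell outright and a gap of $2R$ suffices for nearby nodes to coincide. Under approximation, an extra slack of order $\eps D$ appears and has to be shown negligible compared to $1/\beta$, while still keeping $\Pr[\mathcal{G}]$ bounded away from zero. The tight choice $\eps=1/(c\log^2 n)$ combined with the w.h.p.\ bound on $\max_v\delta_v$ from Lemma~\ref{lemma:radius_bound} is precisely what forces $\eps D=o(1/\beta)$ and preserves the constant-probability claim; substeps~(ii) and (iii) then reduce to straightforward inequality checks.
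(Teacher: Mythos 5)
Your proposal is correct and takes essentially the same route as the paper's own proof: a gap bound from Lemma~\ref{lemma:lemma4.4} on the shifted distances to $w$, the $\eps$-error of the first approximate SSSP absorbed via the w.h.p.\ bound on $\max_v\delta_v$ from Lemma~\ref{lemma:radius_bound}, containment of a $\Theta(1/\beta)$-ball in a single subtree of $T$, and then the short-edge argument plus domination $\dist_{T'}(s,v)\geq\dist_{G_s'}(s,v)$ to get interior membership -- only the constants and the direction of the containment step (direct vs.\ contrapositive) differ. One small presentational point: since $D=1+\max_v\delta_v$ is random, the event fed to Lemma~\ref{lemma:lemma4.4} should use the deterministic threshold $2R+\eps(1+t)$ and be intersected with $\{\max_v\delta_v\leq t\}$, which is exactly the two-event conditioning the paper performs.
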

\begin{proof}
  For $x\in V$, set $\dist_x:= \dist_{G_s}(x,w)+1+\max_{y\in V}\{\delta_y\}$.
  Moreover, set $X_x:= \dist_x-\delta_x=\ell_{sx}+d_{G_s}(x,w)$ for $x\in V$ and let $X^{(i)}$ be the $i$'th order statistic of the variables $X_v$ (i.e., the $i$'th smallest element).
  Denote by $x_{\min}\in V$ the node for which $X_{x_{\min}}=X^{(1)}$.
  By Lemma~\ref{lemma:lemma4.4}, with constant probability $X^{(2)}-X^{(1)} \geq \frac{7}{8\beta}$. Condition on this event.
  Accordingly, we have for all $x\in V\setminus\{x_{\min}\}$ that $X_x-X_{x_{\min}}\geq X^{(2)}-X^{(1)} \geq \frac{7}{8\beta}$.

  Denote for each $v\in V$ by $x_v$ the child of $s$ in $T$ in whose subtree $v$ is situated.
  Then the assumption that $x_v\neq x_{\min}$ implies by copious use of the triangle inequality that
  \begin{align*}
  \dist_T(s,v)-\dist_{G_s}(s,v)&=\ell_{x_vs}+\dist_T(x_v,v)-\dist_{G_s}(s,v)\\
  &\geq \ell_{x_vs}+\dist_{G_s}(x_v,v)-\dist_{G_s}(s,v)\\
  &\geq \ell_{x_vs}+\dist_{G_s}(x_v,w)-\dist_{G_s}(v,w)-(\dist_{G_s}(s,w)+\dist_{G_s}(v,w))\\
  &\geq \ell_{x_vs}+\dist_{G_s}(x_v,w)-(\ell_{x_{\min}s}+\dist_{G_s}(x_{\min},w))-2\dist_{G_s}(v,w)\\
  &= X_{x_v}-X_{x_{\min}}-2\dist_{G_s}(v,w) \geq \frac{7}{8\beta}-2\dist_{G_s}(v,w).
  \end{align*}

  On the other hand, the approximation guarantee of the SSSP algorithm yields that
	\begin{equation*}
		\dist_T(s,v)-\dist_{G_s}(s,v)\leq \eps\dist_{G_s}(s,v)\leq \eps \ell_{vs}\leq \eps\max_{x\in V}\{1+\delta_x\}.
	\end{equation*}
	By Lemma~\ref{lemma:radius_bound}, w.h.p.\ $\max_{x\in V}\{\delta_x\}\leq t=\frac{c\log n}{4(1+\eps)\beta}$ after sampling the $\delta$-values in Line~\ref{line:shifts} of this iteration.
  Condition on this event as well.
  Using that $\varepsilon =\frac{1}{c\log^2 n}$ and $c$ is sufficiently large, we get that
  $\dist_T(s,v)-\dist_{G_s}(s,v) \leq \eps(1+t) < \frac{\beta}{4}$.

  In summary, if both events on which we conditioned occur, $x_v\neq x_{\min}$ entails that
  \begin{equation}\label{eq:ball}
  \dist_{G_s}(v,w)>\frac{5}{16\beta}.
  \end{equation}
  In particular, choosing $v=w$ yields the contradiction $0=\dist_{G_s}(w,w)>\frac{5\beta}{16}$, i.e., $x_w=x_{\min}$.

  We proceed to show that $\dist_G(v,w)\leq \frac{1}{40\beta}$ implies that also $v\in V_{x_{\min}}^{\circ}$.
  By a union bound over the two events on which we conditioned, this will complete the proof.
  To this end, observe that Inequality~\eqref{eq:ball} shows that a ball of radius $\frac{5}{16\beta}$ around $w$ in $G_s$ is contained within $V_{x_{\min}}$.
  Because longer edges have been deleted, nodes in $\partial \VVV$ are connected to neighbors outside their cell by edges of length at most $\frac{1}{40\beta}$.
  Together with the approximation guarantee of the second SSSP computation used to compute $T'$, it follows that nodes $v\in V$
  for which $\dist_{G_s}(v,w)\leq \frac{1/16-1/40-\varepsilon}{\beta}<
  \frac{5}{16\beta} - \frac{(1+\eps)^2}{4\beta}-\frac{1}{40\beta}$
  end up in $V_{x_{\min}}^{\circ}$.
  In particular, as trivially $\dist_{G_s}(v,w)\leq \dist_G(v,w)$ and
  $\varepsilon$ is sufficiently small, we conclude that $\dist_G(v,w)\leq
  \frac{1}{40\beta}$ implies that $v\in V_{x_{\min}}^{\circ}$.
\end{proof}
\begin{corollary}\label{cor:progress}
  Algorithm~\ref{alg:decomposition} terminates after $O(\log n)$ iterations of the while loop w.h.p.
\end{corollary}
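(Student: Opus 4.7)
The plan is to upgrade Lemma~\ref{lemma:progress} from a ``progress in one iteration'' statement into a termination bound by using the independence of the random choices across iterations of the while loop.

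The first step is to observe that Lemma~\ref{lemma:progress} in fact implies a concrete per-iteration progress bound for \emph{every} node still present in $G$. Fix a node $w$ that is in $G$ at the beginning of some iteration. Applying the lemma to $w$ (taking $v = w$ in the argument showing $V_{x_{\min}}^{\circ} \supseteq B_{1/(40\beta)}(w)$, or just using that $w$ trivially lies in its own ball of radius $\frac{1}{40\beta}$) yields that with at least some constant probability $p>0$, $w$ lies in the interior $V_{x_{\min}}^{\circ}$ of one of the cells computed in Line~\ref{line:shrink_end}. Since $C_{x_{\min}} = \texttt{blur}(G,\rho,V_{x_{\min}}^{\circ},\tfrac{1}{2\log n}) \supseteq V_{x_{\min}}^{\circ}$ by the definition of \texttt{blur}, it follows that $w \in C_{x_{\min}}$ and hence $w$ is removed from $G$ at the end of this iteration.

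Next I would invoke independence. Each iteration resamples the exponential shifts in Line~\ref{line:shifts} and the internal random radii of \texttt{blur}, all independently of prior iterations. Hence, conditioned on $w$ still being in $G$ at the start of iteration $i$, the event that $w$ is removed during iteration $i$ has probability at least $p$, independent of the history. Consequently, the probability that $w$ survives $k$ consecutive iterations is at most $(1-p)^k$. Choosing $k = \Theta(\log n)$ with a sufficiently large constant makes this at most $n^{-(c+1)}$ for any desired constant $c$, and a union bound over the at most $n$ nodes of the input shows that with probability at least $1 - n^{-c}$ every node has been placed in some cluster, at which point $V(G) = \emptyset$, hence in particular $E(G) = \emptyset$, and the while loop terminates.

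There is no genuinely hard step here once Lemma~\ref{lemma:progress} is in hand; the only thing to be careful about is that the lemma is formulated for the subgraph $G$ remaining at the start of an iteration, so the per-iteration constant success probability $p$ indeed applies uniformly throughout the execution, and that the independence of fresh random coins per iteration is what allows the geometric decay from a constant-probability event into a high-probability termination bound.
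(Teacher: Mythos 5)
Your proposal is correct and follows essentially the same route as the paper: invoke Lemma~\ref{lemma:progress} for a constant per-iteration success probability, use the fresh randomness of each iteration to get geometric decay over $\Theta(\log n)$ iterations, and finish with a union bound. The only (immaterial) difference is that you track survival of each \emph{node} and conclude $V(G)=\emptyset$ hence $E(G)=\emptyset$, whereas the paper tracks each \emph{edge} of length at most $\frac{1}{40\beta}$ directly against the loop's termination condition $E(G)\neq\emptyset$; both are valid.
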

\begin{proof}
  Consider any edge $e\in E$ that is not deleted right away, i.e., $\ell_e\leq \frac{1}{40\beta}$. By Lemma~\ref{lemma:progress}, in each iteration in which $e$ is present in the remaining subgraph of $G$, there is a constant probability that it is contained in $V_u^{\circ}$ for some node $u$. Thus, the probability that the edge remains for $c\log n$ iterations is bounded by $2^{-\Omega(c\log n)}=n^{-\Omega(c)}$.
  By a union bound, this implies that all edges are either cut or included in a part within $O(\log n)$ iterations w.h.p., i.e., the termination condition that $E(G)$ is empty becomes satisfied.
\end{proof}

\paragraph{The Diameter Bound.}
In order to prove that the diameter bound holds, we first show that for each
iteration of the while loop of Algorithm~\ref{alg:decomposition} and each
$u\in C$, we have that $C_u\subseteq V_u$.
\begin{lemma}\label{lemma:subset}
  Fix any iteration of the while loop of Algorithm~\ref{alg:decomposition} and $u\in C$. It holds that $C_u\subseteq V_u$.
\end{lemma}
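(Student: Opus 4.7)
The plan is to argue by contradiction: assume some $v\in C_u$ satisfies $v\notin V_u$. I would then combine two guarantees. First, since $C_u=\texttt{blur}(G,\rho,V_u^\circ,\alpha)$ with $\rho=(1-\alpha)/(4\beta)$, Property~2 of Theorem~\ref{thm:blur} gives $\dist_G(V_u^\circ,v)\le \rho/(1-\alpha)=1/(4\beta)$, so some $v'\in V_u^\circ$ satisfies $\dist_G(v',v)\le 1/(4\beta)$. Second, because $v'\in V_u$ while $v\notin V_u$, any shortest $v'$--$v$ path must exit $V_u$; letting $a$ be the last vertex of this path inside $V_u$, the definition of $\partial \VVV$ forces $a\in \partial \VVV$, and the edge from $a$ to its successor on the path contributes at least $1$ to the length (edge lengths being positive integers). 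Consequently,
\[
\dist_G(v',\partial \VVV)\ \le\ \dist_G(v',a)\ \le\ \dist_G(v',v)-1\ \le\ \frac{1}{4\beta}-1.
\]

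To reach a contradiction, I would then lower-bound $\dist_G(v',\partial \VVV)$ using $v'\in V_u^\circ$. By definition this amounts to $\dist_{T'}(s,v')>(1+\eps)/(4\beta)$, and since $T'$ is a $(1+\eps)$-approximate SSSP tree for $G_s'$, dividing by $1+\eps$ yields $\dist_{G_s'}(s,v')>1/(4\beta)$. In $G_s'$ every $s$-to-$v'$ path must start with a length-$1$ edge to some $w\in \partial \VVV$ and then continue inside $G$, so $\dist_{G_s'}(s,v')=1+\dist_G(\partial \VVV,v')$, which gives $\dist_G(\partial \VVV,v')>1/(4\beta)-1$. This contradicts the upper bound obtained above, hence no such $v$ exists and $C_u\subseteq V_u$.

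The main obstacle is that the arithmetic is very tight: the $+1$ coming from the super-source edges in $G_s'$ must cancel exactly with the $+1$ picked up by any path that crosses a boundary edge of $V_u$, so the argument relies essentially on the integrality of edge lengths (ensuring a crossing edge contributes at least $1$) together with the precise choice of the shrinking threshold $(1+\eps)/(4\beta)$. Beyond this bookkeeping, the argument is a direct combination of the blur locality guarantee and the standard $(1+\eps)$-approximation relation between $\dist_{T'}$ and $\dist_{G_s'}$.
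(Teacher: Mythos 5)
Your proof is correct and follows essentially the same route as the paper's: combine the locality guarantee of \texttt{blur} (Property~2 of Theorem~\ref{thm:blur}, giving $\dist_G(V_u^\circ,\cdot)\le \frac{1}{4\beta}$) with the lower bound on the distance from $V_u^\circ$ to $\partial\VVV$ coming from the second approximate SSSP computation, and conclude that a path this short cannot cross the cell boundary. If anything, your bookkeeping of the two $+1$ terms (the unit super-source edges of $G_s'$ and the integral boundary-crossing edge) is more explicit than the paper's, which states the lower bound as $\dist(v,\partial\VVV)\ge\frac{1}{4\beta}$ without spelling out this cancellation.
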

\begin{proof}
  Again, denote for simplicity the remaining subgraph at the beginning of the loop iteration by $G=(V,E)$.
  By the approximation guarantee of the second call to the SSSP algorithm, $v\in V_u^{\circ}$ implies that $\dist(v,\partial \VVV)\geq \frac{1}{4\beta}$.
  By Theorem~\ref{thm:blur}, $w\in C_u$ implies that $\dist_G(w,V_u^{\circ})\leq \frac{\rho}{1-1/(2\log n)}=\frac{1}{4\beta}$.
  Consider the node $v\in V_u^{\circ}$ that is closest to $w$ and fix a shortest path from $v$ to $w$.
  By the second bound, the path is no longer than $\frac{1}{4\beta}$, which by the first bound implies that it cannot leave $V_u$.
  Hence, $w\in V_u$, showing the claim of the lemma.
\end{proof}
We observe that the above lemma yields that the algorithm indeed outputs a partition of $V$, and each set in the partition is spanned by the corresponding tree in $\mathcal{T}$.
We now apply the tail bound on $\Exp_{\beta}$ given in Lemma~\ref{lemma:radius_bound} to infer that the diameter of the computed parts is bounded by $\frac{\Delta}{2}$ w.h.p.
\begin{lemma}\label{lemma:shrink}
	W.h.p., each connected component returned by {\normalfont\decomp$(G)$} has weak diameter at most $\frac{\Delta}{2}$. This is witnessed by the trees computed in Line~\ref{line:trees} of {\normalfont\decomp$(G)$}.
\end{lemma}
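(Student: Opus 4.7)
The plan is to leverage the fact, established by Lemma~\ref{lemma:subset}, that every cluster $C_u$ produced in a given iteration of the while loop is a subset of the corresponding cell $V_u$, which is spanned by the subtree $T_u$ of $T$ rooted at $u$ (this is exactly the tree appended to $\TTT$). Since $T_u$ lives entirely in $G$ (the super-source $s$ and the edge $\{s,u\}$ are excluded) and its edges use the $G$-lengths, any bound on $\diam(T_u)$ automatically transfers to the weak diameter of $C_u$ in $G$. Because $T_u$ is a rooted tree, $\diam(T_u) \leq 2 \cdot \max_{v \in V_u} \dist_{T}(u, v)$, so it suffices to bound the depth of $T_u$ from its root.

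For any $v \in V_u$, the $s$-to-$v$ path in $T$ enters the subtree through $u$, so $\dist_T(u, v) = \dist_T(s, v) - \ell_{us}$. Applying the $(1+\eps)$-approximation guarantee of the SSSP tree together with $\dist_{G_s}(s, v) \leq \ell_{sv}$, and writing $M := \max_{y \in V} \delta_y$, we get
\[
\dist_T(u, v) \leq (1+\eps)(1 + M - \delta_v) - (1 + M - \delta_u) = (\delta_u - \delta_v) + \eps(1 + M - \delta_v) \leq M(1 + \eps) + \eps,
\]
where the last step uses $\delta_u \leq M$ and $\delta_v \geq 0$.

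It then remains to control $M$. Since each $\delta_y$ is drawn independently from $\Exp_\beta$ with $\beta = c\log n / \Delta$, Lemma~\ref{lemma:radius_bound} together with a union bound over the $n$ shifts gives $M \leq \frac{c \log n}{4(1+\eps)\beta} = \frac{\Delta}{4(1+\eps)}$ w.h.p.\ in every iteration. Substituting, $\dist_T(u, v) \leq \Delta/4 + \eps$, whence $\diam(T_u) \leq \Delta/2 + 2\eps$; since $\eps = 1/(c\log^2 n) < 1/2$ and path lengths in $T_u$ are integer sums of integer edge weights, this collapses to $\diam(T_u) \leq \Delta/2$. A final union bound across the $O(\log n)$ iterations guaranteed by Corollary~\ref{cor:progress} (and over the polynomially many clusters produced) extends the high-probability guarantee to the entire output of \decomp.

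The main obstacle is the additive nature of the approximation slack: a multiplicative $(1+\eps)$ error on $\ell_{sv}$, which is of order $M = \Theta(\Delta/\log n)$, contributes an additive term $\eps M$ to the bound on the small quantity $\dist_T(u, v)$, rather than the multiplicative $\eps$ one might naively hope for. The chosen $\eps = 1/(c\log^2 n)$ is tuned precisely so that this additive slack, together with the exponential tail of $M$, still fits comfortably under $\Delta/4$; verifying this arithmetic (and the corresponding doubling to $\Delta/2$) is the only real work in the proof.
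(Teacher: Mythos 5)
Your proof is correct and follows essentially the same route as the paper's: bound $\max_v\delta_v$ by $t=\frac{c\log n}{4(1+\eps)\beta}=\frac{\Delta}{4(1+\eps)}$ w.h.p.\ via Lemma~\ref{lemma:radius_bound}, use the $(1+\eps)$-SSSP guarantee together with $\dist_{G_s}(s,v)\le\ell_{sv}$ to bound the depth of each subtree $T_u$ by roughly $(1+\eps)t=\Delta/4$, double to get the diameter, and invoke Lemma~\ref{lemma:subset}. The only (immaterial) difference is how the additive $\eps$ slack is absorbed: the paper uses the floor of $\dist_T(s,v)$ and $\ell_{x_vs}\ge 1$, while you invoke integrality of $\dist_T(u,v)$ at the end --- both rest on the same integrality of edge lengths.
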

\begin{proof}
	By Lemma~\ref{lemma:radius_bound} and a union bound over all nodes, w.h.p.\ always $\max_{v\in V}\{\delta_v\}\leq 1+t$ for $t=\frac{c\log n}{4(1+\varepsilon)\beta}$ in Line~\ref{line:shifts} of \decomp$(G)$.
	Assume that $v$ ends up in the subtree of $T$ rooted at the child $x_v$ of $s$ in $G_s$. From the above bound, it follows that
	\begin{equation*}
		\dist_T(x_v,v)= \dist_T(s,v)-\ell_{x_vs}\leq
		\lfloor(1+\eps)\dist_{G_s}(s,v)\rfloor-1 \leq (1+\eps)\max_{u\in
		V}\{\delta_u\}<(1+\eps)t
	\end{equation*}
	w.h.p., where we exploited that edge weights are integral and that
	$\varepsilon<1$. Denoting by $C$ the children of the root node in $T$, it
	follows that for each $x\in C$, we have that $T_x$ has (weighted) depth at most
	$(1+\eps)t$ w.h.p.\ in Line~\ref{line:trees} of \decomp$(G, \Delta)$. We
	conclude that w.h.p., for all $u\in C$ it holds that the subgraph induced by
	$V_u$ has diameter at most
	$2(1+\eps)t=\frac{c\log n}{2\beta}=\frac{\Delta}{2}$.
  The claim of the lemma now follows immediately from Lemma~\ref{lemma:subset}.
\end{proof}

\paragraph{The Edge Cutting Probability Bound.}
We proceed to showing that the probability to cut an edge is sufficiently small. This follows from the analysis of Algorithm~\ref{alg:blur} and the probabilistic progress guarantee from Lemma~\ref{lemma:progress}.
\begin{corollary}\label{cor:cut}
	The probability that edge $e\in E$ is cut by \normalfont{\decomp}$(G, \Delta)$ is $O\left(\frac{\ell_e \log n}{\Delta}\right)$.
\end{corollary}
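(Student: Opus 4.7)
The plan is to control $\Pr[e \in \Ecut(D)]$ by summing the per-iteration cut probability and exploiting the fast geometric decay of the event ``edge survives the first $i$ iterations''; the decay will cancel the factor $\log n$ that naively comes from the bound on the number of iterations. If $\ell_e > \frac{1}{40\beta}$, the edge is deleted at the very start and may end up cut, but in that regime $\frac{\ell_e \log n}{\Delta} = \Omega(1)$ so the bound is trivial. Henceforth I assume $\ell_e \leq \frac{1}{40\beta}$.

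Fix iteration $i$ of the while loop and let $B_i$ be the event that both endpoints of $e$ remain in the (current) graph at the start of iteration $i$. I will bound $\Pr[\textnormal{$e$ is cut during iteration $i$} \mid B_i] = O(\beta \ell_e)$. Conditional on $B_i$, the edge can be separated only in two ways: (a) its endpoints land in different (approximate) Voronoi cells $V_u$ and $V_{u'}$, or (b) both endpoints lie in the same $V_u$ but one ends up in $C_u$ while the other does not. Case (a) is bounded by $O(\beta \ell_e)$ via Lemma~\ref{lemma:lemma4.4} applied at the midpoint of $e$; the $(1+\varepsilon)$-error of the SSSP approximation is absorbed into the ``gap'' needed by that lemma because, as already shown in the proof of Lemma~\ref{lemma:progress}, the distortion $\dist_T(s,v)-\dist_{G_s}(s,v)$ is at most $\beta/4$, which is strictly dominated by the required gap when $\varepsilon = \frac{1}{c \log^2 n}$. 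Case (b) is a direct application of Property~\ref{item: prop 1} of Theorem~\ref{thm:blur} to the call \texttt{blur}$(G, \rho, V_u^\circ, \alpha)$ with $\rho = \Theta(1/\beta)$, giving $O(\ell_e/\rho) = O(\beta \ell_e)$; Lemma~\ref{lemma:subset} rules out a blurred cluster leaking across cell boundaries, so no further cross-cell cut events need to be accounted for here.

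Next, Lemma~\ref{lemma:progress} supplies the geometric decay. Since $\ell_e \leq \frac{1}{40\beta}$, with at least constant probability $p > 0$ the ball of radius $\frac{1}{40\beta}$ around one endpoint of $e$ lies entirely inside some $V_u^{\circ} \subseteq C_u$, so both endpoints are absorbed into the same cluster in iteration $i$ and $\neg B_{i+1}$ holds. Hence $\Pr[B_{i+1} \mid B_i] \leq 1 - p$ and inductively $\Pr[B_i] \leq (1-p)^{i-1}$. Putting the two estimates together,
\[
  \Pr[e \in \Ecut(D)]
  \;\leq\; \sum_{i \geq 1} \Pr[B_i]\cdot O(\beta \ell_e)
  \;\leq\; O(\beta \ell_e) \sum_{i \geq 1}(1-p)^{i-1}
  \;=\; O(\beta \ell_e)
  \;=\; O\!\left(\frac{\ell_e \log n}{\Delta}\right).
\]

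The main obstacle is the per-iteration bound in case (a): Lemma~\ref{lemma:lemma4.4} is stated for the ``exact'' argmin among $d_i - \delta_i$, whereas the algorithm only observes the approximate SSSP tree, and it must be argued that the $(1+\varepsilon)$-distortion cannot flip the assignment of either endpoint of $e$ once the gap at the midpoint of $e$ exceeds the distortion bound $\beta/4$ obtained inside the proof of Lemma~\ref{lemma:progress}. Beyond this point the analysis is a telescoping geometric sum, and the crucial $\log n$-saving comes from combining the constant-probability progress guaranteed by Lemma~\ref{lemma:progress} with the per-iteration cut bound rather than a naive union bound over the $O(\log n)$ while-loop iterations.
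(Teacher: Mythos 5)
Your overall architecture---the trivial case for long edges, a per-iteration cut bound of $O(\beta\ell_e)$, and the geometric decay from Lemma~\ref{lemma:progress} to avoid paying an extra $\log n$ for the number of while-loop iterations---is the same as the paper's, and the decay/summation part is correct. The genuine gap is your case (a). The event that the two endpoints of $e$ are assigned to different approximate Voronoi cells does \emph{not} have probability $O(\beta\ell_e)$, and the proposed fix of ``absorbing'' the SSSP error into the gap required by Lemma~\ref{lemma:lemma4.4} cannot deliver it. The distortion $\dist_T(s,v)-\dist_{G_s}(s,v)$ is bounded by $\eps(1+t)=O\big(\tfrac{1}{\beta\log n}\big)$, an \emph{additive} quantity independent of $\ell_e$ (the ``$\beta/4$'' you quote is a typo in the paper for a bound of this order). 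Consequently an endpoint's assignment can flip whenever the gap at the midpoint of $e$ is below $2\ell_e+\Theta\big(\tfrac{1}{\beta\log n}\big)$, and Lemma~\ref{lemma:lemma4.4} then only gives $\Pr[\text{different cells}]=O\big(\beta\ell_e+\tfrac{1}{\log n}\big)$; for a short edge (say $\ell_e=1$ and $\Delta=\Theta(n)$) the additive $\tfrac{1}{\log n}$ term dominates and is far larger than $\tfrac{\ell_e\log n}{\Delta}$. This is precisely the obstacle the whole construction is built to sidestep: the approximation error scales with the distance to the source, not with $\ell_e$. Lemma~\ref{lemma:progress} tolerates the distortion only because it needs a gap of $\tfrac{7}{8\beta}$ with \emph{constant} probability; your case (a) needs a threshold proportional to $\ell_e$, where the distortion is not dominated.

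The correct per-iteration bound never estimates the probability of landing in different cells at all, because that event is not by itself a cut event. Thanks to the shrinking step, both endpoints of a cell-straddling short edge are excluded from every core $V_u^\circ$, and by Lemma~\ref{lemma:subset} each blurred cluster satisfies $C_u\subseteq V_u$. Hence in an iteration the edge $e=\{v,w\}$ with $v\in V_x$ and $w\in V_y$ (possibly $x=y$) can be cut only if $v\in C_x$ while $w\notin C_x$, or $w\in C_y$ while $v\notin C_y$, i.e., only if $e\in\Ecut(C_x)$ or $e\in\Ecut(C_y)$. Each of these events has probability $O(\ell_e/\rho)=O\big(\tfrac{\ell_e\log n}{\Delta}\big)$ by Theorem~\ref{thm:blur} (Lemma~\ref{lemma:prob_cut}), irrespective of which cells the endpoints fall into and of the remaining subgraph; if neither occurs, both endpoints simply survive to the next iteration, which your geometric-decay step already accounts for. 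Replacing your case (a) by this observation---i.e., charging every cut to a blur boundary rather than to the Voronoi assignment---repairs the proof and is exactly what the paper does.
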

\begin{proof}
Consider edge $e=\{v,w\}\in E$. If $e$ is deleted right away, then $\ell_e>
\frac{1}{40\beta}= \Omega\left(\frac{\Delta}{\log n}\right)$ and the claim
trivially holds. Accordingly, assume that $\ell_e\leq \frac{1}{40\beta}$ in the
following.

As shown in Lemma~\ref{lemma:subset}, in each iteration the parts
$(V_u^\circ)_{u\in C}$ satisfy that $V_u^\circ\subseteq V_u$. Thus, if $v\in
V_x$ and $w\in V_y$ for some $x, y\in C$ after Line~\ref{line:trees}, $e$ can be
only cut by $v$ ending up in $C_x$, while $w$ does not, or $w$ ending up in
$C_y$, while $v$ does not. Lemma~\ref{lemma:prob_cut} shows that the probability
for either event is bounded by $O\left(\frac{\ell_e \log n}{\Delta}\right)$,
independently of the subgraph the calls to Algorithm~\ref{alg:blur} are executed
on.

Combining this observation with the fact that, in each iteration in which $e$
is still present by Lemma~\ref{lemma:progress}  it ends up in some part with
probability at least $p\in \Omega(1)$, we can bound the probability that $e$
is cut by
\begin{equation*}
\sum_{i=1}^{\infty} (1-p)^{i-1} O\left(\frac{\ell_e\log n}{\Delta}\right)
=O\left(\frac{\ell_e\log n}{p\Delta}\right)
=O\left(\frac{\ell_e\log n}{\Delta}\right).\qedhere
\end{equation*}
\end{proof}

\paragraph{The Load Bound.}
As the trees added to the output in a single iteration are subtrees of the same shortest path tree, these trees are disjoint.
Hence, the bound on the number of iterations also bounds the number of trees in which an edge may participate and thus the load of that edge in the output decomposition $D$.
This concludes the proof of Theorem~\ref{theorem:main}.


\section{Sampling from Low Stretch Tree Embeddings}
\label{section:virtual-trees}
Consider some graph
$G = (V, E, \ell)$.
We say that graph
$G' = (V', E', \ell')$
with
$V' \supseteq V$
\emph{dominates} $G$
if
$\dist_{G'}(u, v) \geq \dist_{G}(u, v)$
for every two vertices
$u, v \in V$.
In that case, we define the stretch of edge
$e=\{u,v\} \in E$
in $G'$ to be
\[
\str_{G'}(e)
\, = \,
\frac{\dist_{G'}(u,v)}{\ell_e} \, .
\]

Our goal in this section is to construct random dominating trees of a given
graph
$G = (V, E, \ell)$
that guarantee low expected stretch for each edge in $E$.
The dominating trees we construct, referred to hereafter as \emph{virtual
trees}, are not spanning trees of $G$, because they may include vertices and edges
that do not belong to $V$ and $E$, respectively.
Nevertheless, they admit some useful characteristics.
Specifically, we consider two types of virtual (dominating) trees:
\emph{projected trees} (a special case of the path embeddable trees of
\cite{CohenKMPPRX2014,CohenMPPX2014})
addressed in Section~\ref{section:projected-tree} and
\emph{hierarchically well separated trees (HSTs)} addressed in
Section~\ref{section:hst}.
In both cases, the respective constructions are based on recursive
applications of the graph decomposition technique presented in
Section~\ref{section:graph-decomposition}, generating a \emph{hierarchical}
version of TSDs as presented in
Section~\ref{section:hierarchical-decompositions}.

\subsection{Hierarchical Decompositions}
\label{section:hierarchical-decompositions}
A \emph{hierarchical tree-supported decomposition (HTSD)}
$\Seq{D}$
of $G$ is a sequence
$\Seq{D} = (D_{0}, D_{1}, \dots, D_{k})$
of TSDs that satisfies
(i)
$D_{0} = \{V\}$;
(ii)
$D_{k} = \{ \{ v \} \mid v \in V \}$;
and
(iii)
for every
$1 \leq i \leq k$
and
$C \in D_{i}$,
there exists some
$C' \in D_{i - 1}$
such that
$C \subseteq C'$.
The TSDs
$D_{0}, D_{1}, \dots, D_{k}$
are referred to as the \emph{levels} of $\Seq{D}$ and the parameter $k$ is
referred to as its \emph{depth}.
The \emph{load} of edge
$e \in E$
in $\Seq{D}$ is defined to be
$\Load_{\Seq{D}}(e) = \sum_{i = 0}^{k} \Load_{D_{i}}(e)$.

The real sequence
$\Seq{d} = (d_{0}, d_{1}, \dots, d_{k})$
is said to be \emph{diameter bounding} for the HTSD $\Seq{D}$ if
$\diam(D_{i}) \leq d_{i}$
for every
$0 \leq i \leq k$.
Of particular interest are HTSDs that admit a \emph{geometrically decreasing}
diameter bounding sequence, namely a sequence
$\Seq{d} = (d_{0}, d_{1}, \dots, d_{k})$
that satisfies
$d_{i} \leq \alpha \cdot d_{i - 1}$,
$1 \leq i \leq k$,
for some constant
$\alpha > 1$.
Since all edge lengths considered in this paper are integers bounded by some
polynomial in $n$, this means that $\Seq{D}$ admits
$k = O (\log n)$
levels.

Consider some HTSD
$\Seq{D} = (D_{0}, D_{1}, \dots, D_{k})$
of $G$ with a geometrically decreasing diameter bounding sequence
$\Seq{d} = (d_{0}, d_{1}, \dots, d_{k})$.
Edge
$e = \{ u, v \} \in E$
is said to be \emph{decoupled} on level
$0 \leq i \leq k - 1$
if $u$ and $v$ belong to the same cluster in level $i$ and to different
clusters in level
$i + 1$,
that is
$e \in \Ecut(D_{i + 1}) - \Ecut(D_{i})$.
In that case, we define the \emph{stretch} of $e$ in $\Seq{D}$ with respect to
$\Seq{d}$ to be
\[
\str_{\Seq{D}, \Seq{d}}(e)
\, = \,
\frac{d_{i}}{\ell_e} \, .
\]

In order to construct the HTSD, we first compute a
$4$-approximation $\Delta$ of the diameter $\diam(G)$.
For this purpose, we pick an arbitrary node $s\in V$ and compute
a $2$-approximate SSSP tree $T$ with source $s$.
We then let $\Delta:=\frac{\max_{x\in V}\{\dist_T(s, x)\}}{2}$.
\begin{observation}\label{obs: Delta bound}
  $\Delta\in[\frac{\diam(G)}{4}, \diam(G)]$.
\end{observation}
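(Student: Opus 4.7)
The plan is a two-sided estimate: I want to sandwich the quantity $\max_{x\in V}\{\dist_T(s,x)\}$ between $\diam(G)/2$ and $2\,\diam(G)$, and then divide by $2$ to get the claimed interval for $\Delta$.

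For the upper bound $\Delta \leq \diam(G)$, I would invoke the $2$-approximation guarantee of the SSSP tree $T$: for every $x \in V$, $\dist_T(s,x) \leq 2\,\dist_G(s,x)$. Since $\dist_G(s,x) \leq \diam(G)$ by definition of the diameter, maximizing over $x$ gives $\max_{x \in V}\{\dist_T(s,x)\} \leq 2\,\diam(G)$, and dividing by $2$ yields the desired bound.

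For the lower bound $\Delta \geq \diam(G)/4$, the key step is the triangle inequality in $G$: for any pair $u,v \in V$, $\dist_G(u,v) \leq \dist_G(s,u) + \dist_G(s,v) \leq 2\max_{x\in V}\{\dist_G(s,x)\}$. Taking the maximum over $u,v$ gives $\diam(G) \leq 2\max_{x\in V}\{\dist_G(s,x)\}$. Combining this with the fact that $T$ dominates $G$ (so $\dist_T(s,x) \geq \dist_G(s,x)$ for every $x$), I obtain $\max_{x\in V}\{\dist_T(s,x)\} \geq \diam(G)/2$, and dividing by $2$ concludes.

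There is no real obstacle here; the only subtle point is making sure both directions of the $2$-approximation guarantee are used in the right places (the upper inequality $\dist_T \leq 2\dist_G$ for bounding $\Delta$ from above, and the lower inequality $\dist_T \geq \dist_G$ together with the triangle inequality for bounding it from below).
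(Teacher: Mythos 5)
Your proof is correct and is essentially identical to the paper's: the upper bound uses $\dist_T(s,x)\leq 2\dist_G(s,x)\leq 2\diam(G)$, and the lower bound combines $\dist_T(s,x)\geq\dist_G(s,x)$ with the triangle inequality $\diam(G)\leq 2\max_{x\in V}\{\dist_G(s,x)\}$. Nothing to add.
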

\begin{proof}
  It holds that
  \[
    \Delta
    =\frac{\max_{x\in V}\{\dist_T(s, x)\}}{2}
    \leq \max_{x\in V}\{\dist(s, x)\}
  \leq \diam(G).
  \]
  For the lower bound, note that
  \[
    \Delta
    \geq \frac{\max_{x\in V}\{\dist(s, x)\}}{2}
    \geq \frac{\max_{x,y\in V}\{\dist(s, x)+\dist(s, y)\}}{4}\geq \frac{\diam(G))}{4}.\qedhere
  \]
\end{proof}

We proceed to showing how to construct a random HTSD.
\begin{theorem} \label{theorem:hierarchical-tsd}
There exists a (randomized) algorithm that, given a graph
$G = (V, E, \ell)$
with $\poly(n)$-bounded edge lengths, constructs a random HTSD
$\Seq{D}$ of $G$ with the
following guarantees:
(1)
the depth of $\Seq{D}$ is
$O (\log n)$;
(2)
$\Seq{D}$ admits a (deterministic) geometrically decreasing diameter bounding
sequence
$\Seq{d}$
w.h.p.;
(3)
$\Load_{\Seq{D}}(e) = O (\log n)$
for every edge
$e \in E$
w.h.p.;
and
(4)
$\Ex_{\Seq{D}} [ \str_{\Seq{D}, \Seq{d}}(e) ] = O (\log^{2} n)$
for every edge
$e \in E$.
\end{theorem}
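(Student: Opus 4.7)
The plan is to build $\Seq{D}$ top-down by recursively applying the algorithm of Theorem~\ref{theorem:main}. Start by computing the 4-approximation $\Delta \in [\diam(G)/4, \diam(G)]$ via Observation~\ref{obs: Delta bound}, along with the 2-approximate SSSP tree $T$ used there; $T$ has depth at most $2\Delta$ from its root, hence diameter at most $4\Delta$. Set $d_i := 4\Delta / 2^i$ and pick $k$ minimal with $d_k < 1$, so $k = O(\log \Delta) = O(\log n)$ using that edge lengths are $\poly(n)$-bounded. Initialize $D_0 := \{V\}$ with supporting tree $T$, and for each $i \geq 1$ and each cluster $C \in D_{i-1}$, apply the routine of Theorem~\ref{theorem:main} to a subgraph containing $C$ (e.g., the subgraph induced by the vertices of $C$'s supporting tree) with parameter $r = d_i$; take $D_i$ as the union of the resulting sub-TSDs over all $C \in D_{i-1}$.

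Claims~(1) and~(2) fall out directly. Claim~(1) is the bound on $k$. For Claim~(2), Theorem~\ref{theorem:main} guarantees that each call produces sub-clusters of diameter at most $d_i$ w.h.p.; a union bound over the $\poly(n)$ calls yields $\diam(D_i) \leq d_i$ for all $i$ simultaneously, and the sequence $\Seq{d}$ is geometrically decreasing with ratio $1/2$. For Claim~(3), observe that at each level an edge $e$ is in the decomposed subgraph of at most one invocation of the routine (the one for the unique $D_{i-1}$-cluster containing both endpoints of $e$, if any), so Theorem~\ref{theorem:main} contributes load $O(\log n)$ to $\Load_{D_i}(e)$ w.h.p.; combining with the number of levels and union bounding yields the stated polylogarithmic load bound.

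For the expected stretch (Claim~(4)), fix an edge $e = \{u,v\}$. If $e$ is decoupled at level $i$, its stretch equals $d_i/\ell_e$, and in particular this requires $e \in \Ecut(D_{i+1})$, which by Theorem~\ref{theorem:main} occurs with probability at most $O(\ell_e \log n / d_{i+1}) = O(\ell_e \log n / d_i)$ since $d_i = 2 d_{i+1}$. Summing over the $k = O(\log n)$ possible decoupling levels gives
$$\Ex_{\Seq{D}}\!\left[\str_{\Seq{D}, \Seq{d}}(e)\right] \;\leq\; \sum_{i=0}^{k-1} \frac{d_i}{\ell_e} \cdot O\!\left(\frac{\ell_e \log n}{d_i}\right) \;=\; O(k \log n) \;=\; O(\log^2 n).$$

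The one delicate point I expect to be the main obstacle is setting up the recursive call cleanly: because the supporting tree of a TSD cluster may span vertices outside the cluster it supports, one has to be deliberate about which subgraph to hand to the next level so that the sub-clusters produced by the recursive invocation genuinely partition $C$ (and do not escape it), while still allowing the diameter, load, and cutting-probability guarantees of Theorem~\ref{theorem:main} to transfer to the restricted instance with the original edge-length function. Once this bookkeeping is pinned down, everything else is a straightforward combination of Theorem~\ref{theorem:main} with the geometric schedule $d_i = d_0/2^i$ and union bounds over $\poly(n)$ events.
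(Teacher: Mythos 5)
Your overall route is the same as the paper's: set $d_i = 4\Delta/2^i$ using the $2$-approximate SSSP tree to get $\Delta$, recursively apply the decomposition of Theorem~\ref{theorem:main} with diameter bound $d_i$ inside each level-$(i-1)$ cluster, and bound the expected stretch by summing, over the $k=O(\log n)$ levels, the per-level decoupling probability $O(\ell_e\log n/d_i)$ times the stretch $d_i/\ell_e$; that part, as well as claims (1) and (2), matches the paper. However, there is a genuine gap in your treatment of claim (3). The theorem asserts $\Load_{\Seq{D}}(e)=O(\log n)$ for the \emph{entire hierarchy}, whereas your argument (per-level load $O(\log n)$ w.h.p.\ from Theorem~\ref{theorem:main}, times $O(\log n)$ levels, plus a union bound) only yields $O(\log^2 n)$; calling this ``the stated polylogarithmic load bound'' papers over a log factor that the statement does not allow. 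The paper closes this by using the stronger distributional fact established in Section~\ref{section:graph-decomposition}: within one level, the load of $e$ equals the number of while-loop iterations of \decomp during which $e$ survives (trees produced in a single iteration are disjoint subtrees of one SSSP tree), and by Lemma~\ref{lemma:progress} this is stochastically dominated by a geometric random variable with constant success probability $\Omega(1)$. Hence the per-level load has \emph{constant} expectation, and the sum of $O(\log n)$ such geometric variables is $O(\log n)$ w.h.p.\ by a standard concentration bound. Your write-up, which only invokes the black-box w.h.p.\ guarantee of Theorem~\ref{theorem:main}, cannot recover this.

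A smaller point: the ``delicate point'' you flag is real, but your tentative resolution is the wrong one. Recursing on the subgraph induced by the vertices of $C$'s supporting tree would produce level-$i$ clusters that may contain vertices of $U_C\setminus C$, violating the nesting requirement $C'\subseteq C$ of an HTSD (and double-covering vertices across sibling clusters). The correct choice is to recurse on $G(C)$ itself: Theorem~\ref{theorem:main} and Corollary~\ref{cor:cut} hold for an arbitrary input graph and parameter, the supporting trees it produces are subgraphs of $G(C)\subseteq G$, and only edges internal to $C$ are at risk of being cut at level $i$, which is exactly what the stretch computation uses.
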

\begin{proof}
Let
$d_{0} = 4 \Delta$, with $\Delta$ being a constant factor approximation of $\diam(G)$ as above,
and let
$d_{i} = \frac{d_{i - 1}}{2}$
for
$i \geq 1$.
Let $k$ be the smallest $i$ such that
$d_{k} < 1$.
Since the edge lengths in $G$ are $\poly(n)$-bounded, we know that
$k = O (\log n)$.

We construct the (random) HTSD
$\Seq{D} = (D_{0}, D_{1}, \dots, D_{k})$
of $G$ by applying \decomp (see
Section~\ref{section:graph-decomposition}) in a recursive manner with diameter
bounds determined according to the sequence
$\Seq{d} = (d_{0}, d_{1}, \dots, d_{k})$.
Corollary~\ref{cor:cut} guarantees that for every edge
$e \in E$
and level
$0 \leq i \leq k - 1$,
the probability that $e$ is decoupled on level $i$ of a random HTSD sampled
from $\mathcal{S}$ is in
$O \big( \frac{\ell_e \cdot \log (n)}{d_{i}} \big)$.
The bound on
$\Ex_{\Seq{D}} [\str_{\Seq{D}, \Seq{d}}(e)]$
follows directly by summing over all levels
$0 \leq i \leq k - 1$.

It remains to show that the load of every edge
$e \in E$
in $\Seq{D}$ is
$O (\log n)$
w.h.p.
To that end, recall that in Section~\ref{section:graph-decomposition} we
proved that the load on edge $e$ in the TSD $D_{i}$ is stochastically
dominated by a geometric random variable with parameter
$\Omega (1)$.
The claim follows, as the sum of
$O (\log n)$
such random variables is
$O (\log n)$
w.h.p.
\end{proof}
We note that a crucial point is, of course, that the algorithm can be
implemented efficiently due to relying on approximate SSSP computations only.
However, as the resulting complexities are model-specific, the respective
discussion is postponed to Section~\ref{sec:implementation}.

\subsection{Embedding into a Random Projected Tree}
\label{section:projected-tree}
Consider some graph
$G = (V, E, \ell)$.
Graph
$G' = (V', E', \ell')$
with
$V' \supseteq V$
is said to be a \emph{projected} graph of $G$ if there exists a mapping
$\pi : V' \rightarrow V$
so that
\begin{compactenum}[(a)]
  \item $\pi(v) = v$
  for every
  $v \in V$;
  \item if
  $e' = \{ u', v' \} \in E'$,
  then
  $\pi(e') := \{ \pi(u'), \pi(v') \} \in E$; and
  \item $\ell'(e') = \ell(e)$
  for every
  $e \in E$
  and
  $e' \in E'$
  such that
  $\pi(e') = e$.
\end{compactenum}
The \emph{load} of edge
$e \in E$
under the projected graph $G'$ of $G$ (with respect to $\pi$) is defined to be
the size of its preimage under $\pi$, denoted by
$\Load_{G'}(e) = |\{ e' \in E' \mid \pi(e') = e \}|$.
Notice that, by definition, every projected graph of $G$ dominates $G$.
Observe also that $\ell'$ is fully determined by $\pi$ and $\ell$, hence we may
omit it from the notation in the following.
Our goal in this section is to prove the following theorem.

\begin{theorem} \label{theorem:projected-tree}
There exists a (randomized) algorithm that, given a graph
$G = (V, E, \ell)$
with $\poly(n)$-bounded edge lengths, constructs a random projected
tree $T$ of $G$ that satisfies the following guarantees for every edge
$e \in E$:
(1)
$\Load_{T}(e) = O (\log n)$
w.h.p.;
and
(2)
$\Ex_{T} [\str_{T}(e)] = O (\log^{2} n)$.
\end{theorem}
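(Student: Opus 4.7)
The plan is to build $T$ on top of a hierarchical TSD. First invoke Theorem~\ref{theorem:hierarchical-tsd} to obtain a random HTSD $\Seq{D}=(D_0,D_1,\ldots,D_k)$ of $G$ with $k=O(\log n)$ levels, a (w.h.p.) geometrically decreasing diameter-bounding sequence $\Seq{d}=(d_0,\ldots,d_k)$ (say with $d_i\le d_{i-1}/2$), edge load $\Load_{\Seq{D}}(e)=O(\log n)$ w.h.p., and $\Ex[\str_{\Seq{D},\Seq{d}}(e)]=O(\log^2 n)$. For every cluster $C$ (with supporting tree $T_C$) designate a representative $r_C\in C$, setting $r_{\{v\}}=v$ at level $k$. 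Construct $T$ top-down by recursion: start with a fresh copy of $r_V$ as the root. Having placed a node $v_C$ (a copy of $r_C$) for a non-leaf cluster $C\in D_i$ with children $C_1,\dots,C_m\in D_{i+1}$, embed into $T$ a copy of the Steiner subtree of $T_C$ spanning $\{r_C,r_{C_1},\dots,r_{C_m}\}$, identifying the image of $r_C$ with $v_C$ and designating the image of each $r_{C_j}$ as the node $v_{C_j}$ used by the recursion on $C_j$; intermediate vertices become new copies, and $\pi$ maps every copy to its original in $V$. At level $k$ set $v_{\{v\}}:=v$, so each $v\in V$ occurs exactly once in $T$.

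\textbf{Load and domination.} Every edge of $T$ is a length-preserving copy of an edge of $G$, so every path in $T$ projects to a walk of the same length in $G$; hence $\dist_T(u,v)\ge \dist_G(u,v)$ for all $u,v\in V$ and $T$ dominates $G$. Because we embed a Steiner \emph{sub}tree of $T_C$ rather than independent paths, each cluster $C$ contributes at most one copy of any $e\in F_C$, so $\Load_T(e)\le \sum_{C}\ones[e\in F_C]=\Load_{\Seq{D}}(e)=O(\log n)$ w.h.p.

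\textbf{Stretch.} Fix $e=\{u,v\}\in E$ and suppose $e$ is decoupled at level $i-1$, i.e., $u,v$ share a cluster $C^{i-1}\in D_{i-1}$ but lie in distinct children at level $i$. The unique $u$-$v$ path in $T$ climbs from $u$ up to $v_{C^{i-1}}$ and back down to $v$. On each level $j\in\{i-1,\ldots,k-1\}$ the segment between consecutive cluster nodes lies inside the embedded copy of $T_{C^j}$ and therefore has length at most $d_j$. Summing and using the geometric decrease of $\Seq{d}$,
\[
\dist_T(u,v)\le 2\sum_{j=i-1}^{k-1} d_j\le 4\, d_{i-1},
\]
so, conditioned on $e$ being decoupled at level $i-1$, $\str_T(e)\le 4\, d_{i-1}/\ell_e$. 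Using the per-level decoupling bound $\Pr[e\text{ decoupled at level }i-1]=O(\ell_e\log n/d_i)$ from Corollary~\ref{cor:cut} and the fact that $d_{i-1}/d_i=O(1)$,
\[
\Ex[\str_T(e)]\le\sum_{i=1}^{k}O\!\left(\tfrac{\ell_e\log n}{d_i}\right)\cdot\tfrac{4\, d_{i-1}}{\ell_e}=O(k\log n)=O(\log^2 n).
\]

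The main obstacle I expect is administrative rather than analytic: arranging the recursive identifications so that $T$ is a valid projected tree (each $v\in V$ appearing exactly once, each tree edge a genuine length-preserving copy of a $G$-edge) while keeping the per-cluster edge contribution at most one. Taking $v_{\{v\}}=v$ at the leaf level handles uniqueness, and using the Steiner-subtree embedding inside each $T_C$ (instead of independently routed $r_C\leftrightarrow r_{C_j}$ paths) prevents intra-level load blow-up; the stretch bound then follows mechanically from the geometric decrease of $\Seq{d}$ and the per-level decoupling probability.
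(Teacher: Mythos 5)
Your proposal is correct and follows essentially the same route as the paper: build the tree level by level from copies of the HTSD's supporting trees glued at designated representatives, bound the load by $\Load_{\Seq{D}}(e)$ since each cluster contributes each supporting-tree edge at most once, and bound the stretch of an edge decoupled at level $i$ by $2\sum_{j\ge i} d_j = O(d_i)$ via the geometric decrease before summing against the per-level cut probabilities. The only (cosmetic) differences are that the paper keeps the full supporting trees and joins consecutive levels by zero-length ``vertical'' edges between clones of the cluster leader (subsequently contracted), whereas you embed Steiner subtrees and identify representatives directly.
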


Theorem~\ref{theorem:projected-tree} is established by combining
Theorem~\ref{theorem:hierarchical-tsd} with the following lemma.

\begin{lemma} \label{lemma:htsd-to-projected-tree}
There exists an algorithm that given a graph
$G = (V, E, \ell)$,
a HTSD $\Seq{D}$ of $G$, and a geometrically decreasing diameter bounding
sequence $\Seq{d}$ for $\Seq{D}$, constructs a projected tree
$T = (V_{T}, E_{T}, \ell_{T})$
of $G$ such that
$\Load_{T}(e) = \Load_{\Seq{D}}(e)$
and
$\str_{T}(e) = O (\str_{\Seq{D}, \Seq{d}}(e))$
for each $e \in E$.
\end{lemma}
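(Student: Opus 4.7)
The plan is to build $T$ by taking one disjoint copy of each supporting tree $T_C$ appearing in $\Seq{D}$ and gluing these copies together at single points via identifications across adjacent levels of the hierarchy.

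\textbf{Construction.}
For each cluster $C$ in any level of $\Seq{D}$, pick an arbitrary representative $r_C \in C$, taking $r_{\{v\}} = v$ for the level-$k$ singletons. Form the raw vertex set $\bigsqcup_{C} \{(u, C) : u \in U_C\}$ and take the quotient under the identifications $(r_C, C) \sim (r_C, C^*)$, where $C^*$ is the parent of $C$ in the hierarchy, together with $(v, \{v\}) \sim v$ for every $v \in V$ (so that $V \subseteq V_T$). Define $\pi([u, C]) = u$; this is well-defined since every identification pair shares its first coordinate. For each cluster $C$ and each edge $\{u, w\} \in F_C$, introduce an edge between the classes of $(u, C)$ and $(w, C)$ of length $\ell(\{u, w\})$.

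\textbf{Projected-tree property.}
First I would verify that $T$ is a projected tree. A top-down induction over the hierarchy shows $T$ is acyclic and connected: $T_V$'s copy is already a tree, and each sub-cluster's copy is attached to the current tree at the single identified root-vertex, so no attachment can create a cycle. The axioms of a projected graph are immediate, as every edge of $T$ projects to an edge of $G$ of matching length.

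\textbf{Load equality.}
Each supporting tree $T_C$ contributes exactly one tree edge for each of its edges, and no tree edge arises from any other source. Hence for every $e \in E$, $\Load_T(e) = |\{C : e \in F_C\}| = \Load_{\Seq{D}}(e)$.

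\textbf{Stretch bound.}
Consider $e = \{u, v\} \in E$ decoupled at level $i$, so that both endpoints lie in the same cluster $C_i \in D_i$. I would exhibit an explicit $u$-to-$v$ path in $T$: starting from $u$ (which equals $[u, \{u\}]$), climb through the hierarchy by, at each level $j$ descending from $k-1$ to $i+1$, traversing within the current copy of the supporting tree $T_{C'}$ containing the current vertex (where $C'$ is a level-$j$ cluster on the climb) from that vertex to $[r_{C'}, C']$; each such step costs at most $d_j$, and the identification at the root carries the path into the parent-level tree copy for free. Once inside the copy of $T_{C_i}$, move to $[v, C_i]$ at cost at most $d_i$, then descend symmetrically down to $v$. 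The total length is at most $d_i + 2\sum_{j=i+1}^{k-1} d_j$, which is $O(d_i)$ by the geometric decrease of $\Seq{d}$. Dividing by $\ell_e$ gives $\str_T(e) = O(d_i/\ell_e) = O(\str_{\Seq{D}, \Seq{d}}(e))$.

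\textbf{Main obstacle.}
The tree property and the load equality follow cleanly from the construction. The delicate point is the stretch path: one must verify that at each level of the climb the current endpoint really does sit inside the copy of the corresponding supporting tree, so that the incremental hop is bounded by that level's diameter. Once this is pinned down, the telescoping geometric sum collapses to $O(d_i)$ and yields the required stretch.
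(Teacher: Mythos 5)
Your proposal is correct and follows essentially the same route as the paper: the paper likewise takes disjoint copies of all supporting trees, links each cluster's copy to its parent's copy at a single distinguished vertex (the paper uses the smallest-ID ``leader'' and a contracted $0$-length ``vertical'' edge where you use an arbitrary representative and a quotient identification), and bounds the stretch by the same telescoping climb $d_i + 2\sum_{j>i} d_j = O(d_i)$, which the paper packages as an inductive observation on the distance from a vertex to its level-$i$ leader. The only nitpick is that in your final hop you should route through the gluing point $[r_{C^v_{i+1}}, C_i]$ rather than $[v, C_i]$ before descending (or accept one extra additive $d_i$), which does not affect the $O(d_i)$ bound.
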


The rest of Section~\ref{section:projected-tree} is dedicated to proving
Lemma~\ref{lemma:htsd-to-projected-tree}.
This is done by a series of graph transformations that results in the desired
projected tree $T$.
Let $k$ be the depth of
$\Seq{D} = (D_{0}, D_{1}, \dots, D_{k})$.
For
$0 \leq i \leq k$,
let
$H_{i} = (V^{H}_{i}, E^{H}_{i})$
be the forest obtained by taking the (graph) union over all level $i$
supporting trees of $\Seq{D}$, where each level $i$ supporting tree
$T_{C} = (U_{C}, F_{C})$
contributes its own (distinct) copies of the vertices in $U_{C}$ and edges in
$F_{C}$ (this means, in particular, that
$|V^{H}_{i}| = \sum_{C \in D_{i}} |U_{C}|$
and
$|E^{H}_{i}| = \sum_{C \in D_{i}} |F_{C}|$).
Define the function
$\pi^{H}_{i} : V^{H}_{i} \rightarrow V$
by mapping each vertex
$v \in V^{H}_{i}$
to the vertex
$\pi^{H}_{i}(v) \in V$
from which it originates, recalling that $T_{C}$ is a subgraph of $G$.
Although the preimage of vertex
$u \in V$
under $\pi^{H}_{i}$ may consist of several vertices, it includes exactly one
vertex
$v_{i} \in U_{C}$,
where $C$ is the (unique) level $i$ cluster that contains $v$.
We hereafter refer to this vertex $v_{i}$ as the level $i$ \emph{clone}
of $v$.

Recalling that the level $k$ clusters of $\Seq{D}$ are singletons, we
identify the vertices in $V^{H}_{k}$ with their images under (the bijection)
$\pi^{H}_{k}$ so that
$V^{H}_{k} = V$.
Let
$H = (V^{H}, E^{H})$
be the forest obtained by taking the (graph) union over
$H_{0}, H_{1}, \dots, H_{k}$
and let
$\pi^{H} \colon V^{H} \to V$
be the function defined by mapping each vertex
$v \in V^{H}_{i}$,
$0 \leq i \leq k$,
to
$\pi^{H}(v) = \pi^{H}_{i}(v)$.
Notice that $H$ is a projected graph of $G$ realized by $\pi^{H}$ and that
$\Load_{H}(e) = \Load_{\Seq{D}}(e)$
for every edge
$e \in E$.
It remains to show that we can turn $H$ into a projected tree
$T = (V_{T}, E_{T})$
by connecting its connected components without increasing the load on the
edges while ensuring that the stretch of every edge
$e \in E$
in $T$ is at most
$O (1)$
times larger than its stretch in $\Seq{D}$ with respect to $\Seq{d}$.

Given a level
$0 \leq i \leq k$
and a level $i$ cluster $C$, we refer to the vertex with smallest ID in $C$ as
the \emph{leader} of cluster $C$, denoted by
$\lambda(C)$.
Notice that every vertex
$v \in V$
is a leader of its level $k$ cluster and that if $v$ is the leader of its
level $i$ cluster, then it is also the leader of its level $j$ cluster for
all
$i \leq j \leq k$.

We now construct a projected tree
$T = (V^{T}, E^{T})$
of $G$ from $H$ in two additional steps.
First, we connect each connected component $T_{C}$ of $H_{i}$,
$1 \leq i \leq k$,
to the unique connected component $T_{C'}$ of
$H_{i - 1}$
that satisfies
$C' \supseteq C$.
Assuming that the leader of cluster $C$ is
$v = \lambda(C)$,
this connection is realized by augmenting $H$ with a $0$-length edge that
connects $v_{i}$ with
$v_{i - 1}$,
i.e., the level $i$ and level $i - 1$ clones of $v$.
(Note that
$v_{i - 1}$
is not necessarily the leader of cluster $C'$.)
We call this new edge connecting $v_{i}$ and
$v_{i - 1}$
a \emph{vertical} edge and denote the set of all vertical edges added to $H$
during this step of the construction by $E_{\uparrow}$.
Observe that the graph obtained from $H$ by augmenting it with the vertical
edges is a tree denoted hereafter by
$T^{\uparrow}=(V^{H}, E^{H} \cup E_{\uparrow})$.
This holds since starting from the forest $H$, we connected each connected
component in level
$1 \leq i \leq k$
to a connected component in level
$i - 1$
using a single vertical edge and since $H_{0}$ is a tree.

The next and final step simply contracts all vertical edges in $T^{\uparrow}$,
resulting in the tree
$T = (V^{T}, E^{T})$.
Since the vertical edge
$\{ v_{i}, v_{i - 1} \} \in E^{\uparrow}$
connects the clones $v_{i}$ and
$v_{i - 1}$
of the same vertex
$v \in V$,
it follows that both endpoints of the vertical edge are mapped to $v$ under
$\pi^{H}$.
Accordingly, we readily obtain a projection
$\pi^{T} \colon V^{T} \to V$
from $\pi^{H}$ by mapping each vertex
$v^{T} \in V^{T}$
to $\pi^{H}(v')$, where
$v' \in V^{H}$
is any node that participated in the contraction that created $v^{T}$.
Finally, note that there is a natural bijection
$b \colon E^{H} \to E^{T}$
between edges in $H$ and $T$, as $T$ is obtained by first augmenting $H$ with
the set $E^{\uparrow}$ of vertical edges and then contracting these edges.
By construction, we have that
$\pi^{T}(b(e)) = \pi^{H}(e)$
for all
$e \in E^{H}$.
In particular, $T$ is indeed a projected tree of $G$ and
$\Load_{T}(e) = \Load_{H}(e) = \Load_{\Seq{D}}(e)$
for all
$e \in E$.

It remains to prove that
$\str_{T}(e) = O (\str_{\Seq{D},\Seq{d}}(e))$
for every edge
$e=\{x,y\} \in E$.
Since $T$ is obtained from $T^{\uparrow}$ by contracting $0$-length edges, it
follows that
$\dist_{T^{\uparrow}}(x, y) = \dist_{T}(x, y)$,
hence it suffices to prove that
$\str_{T^{\uparrow}}(e) = O (\str_{\Seq{D},\Seq{d}}(e))$.
To this end, fix some node
$v \in V$
and let
$C_{i} \in D_{i}$,
$0 \leq i \leq k$,
be the (unique) level $i$ cluster that contains $v$.
Let
$\lambda(i) = \lambda(C_{i})$
be the leader of $C_{i}$ and denote the level $j$ clone of $\lambda(i)$ by
$\lambda_{j}(i)$.

\begin{observation} \label{obs:stretch}
For every
$0 \leq i \leq k$,
we have
$\dist_{T^{\uparrow}}(v, \lambda_{i}(i)) \leq \sum_{j = i}^{k - 1} d_{j}$.
\end{observation}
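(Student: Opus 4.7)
The plan is to build, for a fixed $v$ and $i$, an explicit path in $T^{\uparrow}$ from $v=v_k$ to $\lambda_i(i)$ whose length is at most $\sum_{j=i}^{k-1} d_j$, by telescoping across the levels of the hierarchy. Introduce the shorthand $w(j) := \lambda(C_j)$ for the leader of $v$'s level-$j$ cluster; observe that $w(k)=v$ (since $C_k=\{v\}$ is a singleton) and $w(i)=\lambda(i)$, so the path must start at $w(k)_k$ and end at $w(i)_i$.

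Next, I would describe the path in two alternating phases. In the first kind of step, at level $j$ we travel inside the supporting tree $T_{C_j}$ from the already-reached clone $w(j+1)_j$ to the clone $w(j)_j$. Both endpoints are legitimate vertices of $T_{C_j}$ because $w(j+1)\in C_{j+1}\subseteq C_j$ and $w(j)\in C_j$, and by definition the supporting tree $T_{C_j}$ spans a superset of $C_j$; hence this portion contributes at most $\diam(T_{C_j})\leq d_j$ by the diameter-bounding sequence. In the second kind of step, at the transition from level $j$ to $j-1$, we take the vertical $0$-length edge $\{w(j)_j, w(j)_{j-1}\}\in E_{\uparrow}$, which exists precisely because $w(j)$ is the leader of $C_j$. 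Chaining these two kinds of steps starting from $v_k=w(k)_k$ and ending once we reach $w(i)_i=\lambda_i(i)$ produces a valid walk in $T^{\uparrow}$.

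Finally, it remains to sum the edge lengths. The vertical transitions contribute zero, and the within-tree segments occur precisely at the levels $j\in\{i,i+1,\dots,k-1\}$, each bounded by $d_j$. Summing yields $\dist_{T^{\uparrow}}(v,\lambda_i(i))\leq \sum_{j=i}^{k-1} d_j$, as claimed.

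There is no real obstacle here; the main subtlety is purely bookkeeping, namely making sure that at each level the two clones we wish to connect truly sit inside the same supporting tree (guaranteed by the nesting $C_{j+1}\subseteq C_j$ of clusters and the defining property of TSDs), and that the vertical edges used at every intermediate level are the ones added in the construction of $T^{\uparrow}$ (guaranteed because $w(j)$ is by definition the leader of $C_j$).
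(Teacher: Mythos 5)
Your proof is correct and takes essentially the same route as the paper: the paper proves the bound by induction on $i$, where the inductive step splits $\dist_{T^{\uparrow}}(v,\lambda_i(i))$ into the $0$-length vertical edge at the leader of $C_{i+1}$ plus a segment inside the supporting tree of $C_i$ of length at most $d_i$ — exactly the two kinds of steps in your explicit telescoping walk. Unrolling that induction gives precisely your path, so the two arguments coincide.
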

\begin{proof}
By induction on $i$.
The base case
$i = k$
holds since every vertex is the leader of its (singleton) level $k$ cluster,
hence
$\lambda_{i}(i) = v$.
For the inductive step from
$i + 1$
to
$0 \leq i \leq k - 1$,
we notice that
\[
\dist_{T^{\uparrow}} \left( v, \lambda_{i}(i) \right)
\, = \,
\dist_{T^{\uparrow}} \left( v, \lambda_{i + 1}(i + 1) \right)
+
\dist_{T^{\uparrow}} \left( \lambda_{i + 1}(i + 1), \lambda_{i}(i + 1) \right)
+
\dist_{T^{\uparrow}} \left( \lambda_{i}(i + 1), \lambda_{i}(i) \right) \, .
\]
Recalling that
$\lambda_{i + 1}(i + 1)$
and
$\lambda_{i}(i + 1)$
are connected in $T^{\uparrow}$ by a vertical edge, we conclude that
$\dist_{T^{\uparrow}} \left( \lambda_{i + 1}(i + 1), \lambda_{i}(i + 1) \right)
=
0$.
Moreover, since
$\lambda_{i}(i + 1)$
and
$\lambda_{i}(i)$
belong to the same level $i$ cluster
$C_{i} \in D_{i}$,
their distance in $T^{\uparrow}$ is equal to their distance in the
supporting tree of $C_{i}$ whose diameter is bounded by $d_{i}$,
hence
$\dist_{T^{\uparrow}} \left( \lambda_{i}(i + 1), \lambda_{i}(i) \right)
\leq
d_{i}$.
The assertion follows by the inductive hypothesis ensuring that
$\dist_{T^{\uparrow}} \left( v, \lambda_{i + 1}(i + 1) \right)
\leq
\sum_{j = i + 1}^{k - 1} d_{j}$.
\end{proof}

Now, consider some edge
$e = \{u, v\} \in E$
and let
$0 \leq i \leq k - 1$
be the level on which $e$ is decoupled.
Let
$C \in D_{i}$
to be the level $i$ cluster that contains $u$ and $v$ and let $w$ be the level
$i$ clone of the leader $\lambda(C)$ of $C$.
Observation~\ref{obs:stretch} guarantees that
$\dist_{T^{\uparrow}}(u, w) \leq \sum_{j = i}^{k - 1} d_{j}$
and
$\dist_{T^{\uparrow}}(v, w) \leq \sum_{j = i}^{k - 1} d_{j}$,
hence
\[
\dist_{T^{\uparrow}}(u, v)
\, \leq \,
2 \sum_{j = i}^{k - 1} d_{j}
\, = \,
O (d_{i}) \, ,
\]
where the last transition holds since
$\Seq{d} = (d_{0}, d_{1}, \dots, d_{k})$
is geometrically decreasing.
The proof of Lemma~\ref{lemma:htsd-to-projected-tree} is completed by the
definitions of
$\str_{\Seq{D}, \Seq{d}}(e) = \frac{d_{i}}{\ell_{e}}$
and
$\str_{T^{\uparrow}}(e) = \frac{\dist_{T^{\uparrow}}(u, v)}{\ell_{e}}$.

\subsection{Embedding into a Random HST}
\label{section:hst}
In this section we show how to construct an embedding
into a random \emph{hierarchically $2$-separated} dominating tree (HST)
with small expected stretch from the projected trees
constructed in the previous section.
\begin{definition}[Hierarchically Separated Trees]\label{definition: hst}
  An \emph{embedding} of a weighted graph $G=(V,E,\ell)$
  into a (rooted) tree $T=(V^T,E^T,\ell^T)$ is given by a
  one-to-one mapping $\iota\colon V\to V^T$.
  For $k>1$, the tree is \emph{hierarchically $k$-separated},
  if for each internal non-root node,
  the weight of edges connecting it to its children is
  exactly by factor $k$ smaller
  than the weight of the edge connecting it to its parent.
  The stretch of edge $e=\{u,v\}\in E$ w.r.t.\ $T$ is defined as
  $\str_T(e):=\frac{\dist_T(\iota(u),\iota(v))}{\ell_e}$.
\end{definition}
We note that our definition of hierarchical well-separation is
(formally) weaker than that of hierarchically well-separated trees
from the literature~\cite{Bartal1996},
as we dropped the requirement that the tree is balanced,
i.e., all leaves are in the same depth.
However, this can be easily achieved, and our construction does so without
modification.

\paragraph{Construction.}
We construct our HST from a projected tree (see
Section~\ref{section:projected-tree}).
The construction of $T=(V^T, E^T, \ell^T)$ is straightforward.
Let $\Seq{D}=(D_0,\ldots, D_k)$ be the HTSD from which the projected tree
was constructed.
We recall that we had assigned a leader $\lambda(C)$ to each cluster $C$,
namely the smallest ID vertex in $C$.
We construct $V^T$ simply as the multiset\footnote{For each cluster $C$ a node
$v\in V$ is leader of, there is a separate copy of $v$.} of leaders of all
clusters in $\Seq{D}$.
Note that the nodes constructed for level $k$ clusters, correspond,
one-to-one, to the original nodes $V$ of the graph.
This enables us to define an embedding $\iota:V\rightarrow V^T$ as required
in Definition~\ref{definition: hst}.
We construct the set of edges $E^T$ as follows: Let $\lambda\in V^T$ be a node
corresponding to an arbitrary level $i$ cluster $C$ with $i<k$.
We introduce an edge $e:=\{\lambda(C), \lambda(C')\}$,
for every level $i+1$ cluster $C'$
that cluster $C$ decomposes into, i.e., $C'\subseteq C$.
We assign length $\ell^T_e:=d_i$ to
such an edge $e$ between
nodes corresponding to level $i$ and level $i+1$ clusters.
Rooting the tree at the node in $V^T$ corresponding to the leader of the
(unique) level $0$ cluster $V$, it is clear that the resulting tree
$T=(V^T, E^T, \ell^T)$ is a hierarchically $2$-separated tree of depth
$O(\log n)$ w.h.p.

Regarding distances, we get essentially the same result as for the projected
tree we could have constructed. Denote for $v\in V$ by $\lambda(i)$ the leader
of the unique level $i$ cluster $C_i\in D_i$ such that $v\in C_i$ and denote by
$\lambda^T(i)\in V^T$ its copy in $T$ corresponding to $C_i$.
\begin{observation}\label{obs:stretch_HST}
For every
$0 \leq i \leq k$,
we have
$\dist_{T}(v, \lambda^T(i)) = \sum_{j = i}^{k - 1} d_{j}$.
\end{observation}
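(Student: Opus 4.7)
The plan is to prove the claim by downward induction on $i$, mirroring the structure of Observation~\ref{obs:stretch} for the projected tree, but now with exact equality rather than an upper bound, since the HST's edge lengths and ancestor structure are rigidly dictated by the HTSD.

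For the base case $i = k$, recall that the level $k$ clusters of $\Seq{D}$ are singletons, so $C_k = \{v\}$, its leader is $v$ itself, and the corresponding vertex $\lambda^T(k) \in V^T$ is precisely $\iota(v)$. Hence $\dist_T(v, \lambda^T(k)) = 0$, which equals the empty sum $\sum_{j=k}^{k-1} d_j$.

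For the inductive step, suppose the claim holds at level $i+1$, i.e.\ $\dist_T(v, \lambda^T(i+1)) = \sum_{j=i+1}^{k-1} d_j$. By the HTSD property, $C_{i+1} \subseteq C_i$, so by the construction of $E^T$ an edge of length $d_i$ was inserted between $\lambda^T(i)$ (the copy corresponding to $C_i$) and $\lambda^T(i+1)$ (the copy corresponding to $C_{i+1}$). In the rooted tree $T$, this makes $\lambda^T(i)$ the parent of $\lambda^T(i+1)$. Since $v \in C_j$ for every $0 \leq j \leq k$ and $C_{j+1}\subseteq C_j$, the chain $\lambda^T(k) = \iota(v), \lambda^T(k-1), \ldots, \lambda^T(0)$ is precisely the ancestor chain of $\iota(v)$ in $T$. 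Consequently, the unique path in $T$ from $\iota(v)$ to $\lambda^T(i)$ passes through $\lambda^T(i+1)$, and
\[
\dist_T(v,\lambda^T(i)) \;=\; \dist_T(v,\lambda^T(i+1)) + d_i \;=\; \sum_{j=i+1}^{k-1} d_j + d_i \;=\; \sum_{j=i}^{k-1} d_j,
\]
completing the induction.

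The only subtle point, and hence the mildest obstacle, is confirming that the HST construction really does place $\lambda^T(i)$ as the parent of $\lambda^T(i+1)$ for every consecutive pair of levels. This is immediate from the definition: each non-root node of $V^T$ is the copy of the leader of some level $i+1$ cluster $C'$ and has as its parent the copy of the leader of the unique level $i$ cluster $C \supseteq C'$, with connecting edge of length $d_i$. The relation $\lambda^T(i)$ need not equal $\iota(\lambda(i))$ when $\lambda(i) = \lambda(i+1)$ as vertices of $V$, because $V^T$ is a multiset containing a distinct copy for every cluster; this is why the ancestor chain has exactly $k - i$ edges and the sum telescopes to $\sum_{j=i}^{k-1} d_j$ rather than collapsing.
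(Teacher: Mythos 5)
Your proof is correct and follows essentially the same route as the paper: the paper's (one-line) proof simply sums the lengths $\ell^T_{\{\lambda^T(j),\lambda^T(j+1)\}}=d_j$ along the leader-copy chain from $\iota(v)$ up to $\lambda^T(i)$, which is exactly the telescoping your induction unrolls.
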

\begin{proof}
$\dist_{T}(v,\lambda^T(i))
=\sum_{j=i}^{k-1}\ell^T_{\{\lambda^T(j),\lambda^T(j+1)\}}
=\sum_{j=i}^{k-1}d_j$.
\end{proof}
\begin{corollary}
$T$ is a dominating hierarchically $2$-separated tree with
$\E_T[\str_T(e)]=O(\log^2 n)$ for each edge $e\in E$.
\end{corollary}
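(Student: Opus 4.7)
The plan is to establish the three claims separately: (i) $T$ is hierarchically $2$-separated, (ii) $T$ dominates $G$, and (iii) $\Ex_T[\str_T(e)]=O(\log^2 n)$. Claim (i) is essentially built into the construction: for a node of $V^T$ corresponding to a level $i$ cluster (with $0\leq i<k$), every child corresponds to a level $i+1$ cluster, so the edge to each child has length $d_{i+1}=d_i/2$, while the edge to its parent has length $d_{i-1}=2d_i$. The depth bound of $O(\log n)$ w.h.p.\ is inherited from Theorem~\ref{theorem:hierarchical-tsd}.

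For claim (ii), since any path in $G$ decomposes into edges, it suffices to show $\dist_T(\iota(u),\iota(v))\geq \ell_e$ for every edge $e=\{u,v\}\in E$. Let $0\leq i\leq k-1$ be the level on which $e$ is decoupled, so that $u,v$ lie in the same level $i$ cluster $C\in D_i$ but in distinct level $i+1$ clusters. The least common ancestor of $\iota(u)$ and $\iota(v)$ in $T$ is therefore the $V^T$-node corresponding to $C$. Applying Observation~\ref{obs:stretch_HST} to each endpoint and summing gives
\[
\dist_T(\iota(u),\iota(v))=2\sum_{j=i}^{k-1} d_j\geq 2d_i.
\]
On the other hand, since $u,v$ both belong to $C$ and $\diam(T_C)\leq d_i$, we have $\ell_e\leq \dist_G(u,v)\leq d_i$. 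Thus $\dist_T(\iota(u),\iota(v))\geq 2d_i\geq 2\ell_e\geq \ell_e$, proving domination.

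For claim (iii), the same identity $\dist_T(\iota(u),\iota(v))=2\sum_{j=i}^{k-1}d_j$ combined with the geometric decrease of $\Seq{d}$ yields $\dist_T(\iota(u),\iota(v))=O(d_i)$, hence
\[
\str_T(e)=\frac{\dist_T(\iota(u),\iota(v))}{\ell_e}=O\!\left(\frac{d_i}{\ell_e}\right)=O(\str_{\Seq{D},\Seq{d}}(e)).
\]
Taking expectation over the randomness of $\Seq{D}$ and invoking the bound $\Ex_{\Seq{D}}[\str_{\Seq{D},\Seq{d}}(e)]=O(\log^2 n)$ from Theorem~\ref{theorem:hierarchical-tsd} gives the desired $\Ex_T[\str_T(e)]=O(\log^2 n)$.

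The argument is largely parallel to the one used for the projected tree in Lemma~\ref{lemma:htsd-to-projected-tree}; the main step requiring a little care is the dominating property, because unlike stretch (which reduces immediately to a known bound), it requires the geometric observation that the edge length $\ell_e$ is controlled by the diameter of the cluster at the decoupling level. Once this is in place, the rest is essentially bookkeeping with Observation~\ref{obs:stretch_HST}.
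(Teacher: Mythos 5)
Your strategy is sound and your stretch argument (claim (iii)) matches the paper's, but the domination argument (claim (ii)) contains a step that fails as stated. The reduction ``since any path in $G$ decomposes into edges, it suffices to show $\dist_T(\iota(u),\iota(v))\geq \ell_e$ for every edge'' is invalid: along a $G$-shortest $(x,y)$-path $x=v_0,\dots,v_m=y$, the triangle inequality in $T$ gives $\dist_T(\iota(x),\iota(y))\leq\sum_i\dist_T(\iota(v_i),\iota(v_{i+1}))$, which is the wrong direction, so a per-edge lower bound does not propagate to arbitrary pairs (a star metric on three points already defeats it). Moreover, the inequality $\ell_e\leq\dist_G(u,v)$ is backwards ($\dist_G(u,v)\leq\ell_e$, since $e$ is itself a $(u,v)$-path), and the statement you set out to prove, $\dist_T(\iota(u),\iota(v))\geq\ell_e$, can actually fail for a long edge whose endpoints are joined by a much shorter path inside a small cluster. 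The repair is immediate, though: your LCA computation applies verbatim to an \emph{arbitrary} pair $x,y\in V$. If $i$ is the largest level at which $x$ and $y$ share a cluster $C\in D_i$, then $\dist_T(\iota(x),\iota(y))=2\sum_{j=i}^{k-1}d_j\geq 2d_i\geq 2\dist_G(x,y)\geq\dist_G(x,y)$, where $\dist_G(x,y)\leq\diam(T_C)\leq d_i$ because the supporting tree $T_C$ is a subgraph of $G$ spanning $C$. This yields domination for all pairs directly, with no reduction to edges needed.

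For comparison, the paper obtains domination by a different route: Observations~\ref{obs:stretch} and~\ref{obs:stretch_HST} show that leaf-to-leaf distances in the HST are at least those in the projected tree of Section~\ref{section:projected-tree}, which dominates $G$ by the definition of a projected graph. Your (corrected) direct computation is a legitimate, more self-contained alternative. One last nit on claim (i): with the paper's convention the edge from a level-$i$ node to each of its children has length $d_i$ (not $d_{i+1}$) and to its parent length $d_{i-1}=2d_i$; the factor-$2$ separation holds either way, but your indexing is off by one.
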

\begin{proof}
As discussed, $T$ is hierarchically $2$-separated by construction and we have
the desired embedding $\iota\colon V\to V^T$.
By Observations~\ref{obs:stretch} and~\ref{obs:stretch_HST}, distances
between leaves of $T$ are at least as large as in the projected tree constructed
in Section~\ref{section:projected-tree}, which dominates $G$. The stretch bound
follows analogous to Section~\ref{section:projected-tree}, where
Observation~\ref{obs:stretch} takes the place of Observation~\ref{obs:stretch_HST}.
\end{proof}
We remark that this establishes a straightforward relation between our projected
trees and the HSTs constructed here. The HST edges are realized by the
corresponding paths in the projected tree. In particular, while the HST may
incur large loads on some graph edges, the ``more fine-grained'' view provided
by the projected tree shows that a low-load mapping of paths in the HST to the
original graph is feasible. On the other hand, this relation also demonstrates
that a projected tree ``behaves'' like an HST due to the geometrically
decreasing diameter bounding sequence of the underlying HTSD.

\subsection{Bounding the \texorpdfstring{$p$}{p}-Stretch}
\label{section:p-stretch}
Cohen et al.~\cite{CohenMPPX2014} introduced the notion of $p$-stretch.
\begin{definition}[$p$-Stretch]
For a graph $G$, an embedding of $G$ into $T$, and a real $p\in (0,1]$, the
\emph{$p$-stretch} of an edge $e=\{u,v\}\in E$ is given by
$\left(\frac{\dist_T(u,v)}{\ell_e}\right)^p$. Analogously, we define the
$p$-stretch of an HTSD for edge $e$ as $\left(\frac{d_i}{\ell_e}\right)^p$,
where $i$ is the level on which $e$ is decoupled.
\end{definition}
Note that the $1$-stretch coincides with the definition of the standard stretch
defined at the beginning of this section. Our constructions meet a stronger
bound of $O(\log n)$ on the $p$-stretch for $p<1$, owed to the fact that for
$p<1$ larger stretch is weighed less.

\begin{lemma}\label{lemma:p_stretch}
For $p\in (0,1)$, the tree embeddings presented in
Sections~\ref{section:projected-tree} and~\ref{section:hst} satisfy that for
each edge $e\in E$ the expected $p$-stretch is $O(\log n)$.
\end{lemma}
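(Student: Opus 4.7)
The plan is to reduce the expected $p$-stretch of the two tree embeddings to the expected $p$-stretch of the underlying HTSD $\Seq{D}$, and then exploit the geometric decrease of $\Seq{d}$ together with the hypothesis $p<1$. First, for any edge $e=\{u,v\}\in E$ with random decoupling level $I\in\{0,\dots,k-1\}$, both constructions satisfy $\str_T(e)=O(d_I/\ell_e)$: for the projected tree this is Lemma~\ref{lemma:htsd-to-projected-tree}, and for the HST it follows from Observation~\ref{obs:stretch_HST}, since the endpoints of $e$ lie in a common level-$I$ cluster and hence meet in $T$ via their level-$I$ leader at $T$-distance $O\bigl(\sum_{j\ge I}d_j\bigr)=O(d_I)$ by the geometric decrease of $\Seq{d}$. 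Because raising to the power $p\le 1$ preserves the $O(\cdot)$, it suffices to show $\Ex\!\left[(d_I/\ell_e)^p\right]=O(\log n)$.

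Next I would truncate the support of $I$. If $e$ is decoupled at level $i$, then its endpoints lie in a common level-$i$ cluster whose weak diameter is at most $d_i$, so $d_i\ge \dist_G(u,v)\ge \ell_e$. Let $i^*:=\max\{i\colon d_i\ge \ell_e\}$, which is well defined because $d_0=4\Delta\ge \diam(G)\ge \ell_e$; the factor-$2$ decrease of $\Seq{d}$ then forces $d_{i^*}\in[\ell_e,2\ell_e)$. Combining this truncation with Corollary~\ref{cor:cut} applied to the \decomp call that produces level $i+1$, which gives $\Pr[I=i]\le \Pr[e\in\Ecut(D_{i+1})]=O(\ell_e\log n/d_{i+1})$, one obtains
\begin{align*}
\Ex\!\left[(d_I/\ell_e)^p\right]
&=\sum_{i=0}^{i^*}\Pr[I=i]\cdot (d_i/\ell_e)^p \\
&\le O(\log n)\cdot \ell_e^{1-p}\sum_{i=0}^{i^*}\frac{d_i^p}{d_{i+1}}
=O(\log n)\cdot \ell_e^{1-p}\sum_{i=0}^{i^*} d_i^{p-1},
\end{align*}
using $d_{i+1}=d_i/2$ in the last equality.

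The main obstacle is that a naive term-by-term bound would cost an extra $\log n$ factor, since the sum has $\Theta(\log n)$ terms; this is precisely what the assumption $p<1$ averts. Because $d_{i+1}^{p-1}=2^{1-p}\,d_i^{p-1}$ with $2^{1-p}>1$, the sequence $d_i^{p-1}$ grows geometrically in $i$, so the sum is dominated (up to a $p$-dependent constant) by its largest term: $\sum_{i=0}^{i^*}d_i^{p-1}=O(d_{i^*}^{p-1})=O(\ell_e^{p-1})$, using $d_{i^*}=\Theta(\ell_e)$. Plugging back in gives $\Ex[(d_I/\ell_e)^p]=O(\log n)\cdot \ell_e^{1-p}\cdot \ell_e^{p-1}=O(\log n)$, as desired.
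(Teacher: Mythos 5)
Your overall strategy is the same as the paper's: reduce both embeddings to the $p$-stretch of the HTSD, sum $\Pr[\text{decoupled at level }i]\cdot(d_i/\ell_e)^p$ over levels using Corollary~\ref{cor:cut}, and observe that for $p<1$ the resulting series is geometric and dominated by the term with $d_i\approx\ell_e$. The final computation is correct, including the (permissible) $p$-dependent constant $1/(1-2^{-(1-p)})$ hidden in the ``dominated by its largest term'' step.

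However, your justification for truncating the sum at $i^*$ is wrong. You argue that if $e=\{u,v\}$ is decoupled at level $i$ then $d_i\geq \dist_G(u,v)\geq \ell_e$, but the second inequality points the wrong way: the edge $e$ itself is a $(u,v)$-path, so $\dist_G(u,v)\leq \ell_e$, and it can be strictly smaller (e.g., a long edge closing a triangle of two short edges). Hence $e$ \emph{can} be decoupled at a level $i>i^*$ with $d_i<\ell_e$, and your claim that $\Pr[I=i]=0$ there is false; moreover for such levels the bound $\Pr[I=i]=O(\ell_e\log n/d_{i+1})$ you invoke is vacuous (it exceeds $1$). The repair is the one the paper uses: for levels with $d_i<\ell_e$ the $p$-stretch contribution per level is $(d_i/\ell_e)^p<1$, so the total contribution of all such levels is at most $\sum_{i>i^*}\Pr[I=i]\cdot 1\leq 1=O(1)$, and only the levels $i\leq i^*$ need the Corollary~\ref{cor:cut} bound and the geometric-series argument. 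With that one-line substitution your proof goes through and coincides with the paper's.
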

\begin{proof}
When bounding the stretch in the proof of
Theorem~\ref{theorem:hierarchical-tsd}, we summed over all levels of the
decomposition. Recall that the probability to decouple edge $e$ on level $i$ is,
by Corollary~\ref{cor:cut}, $O\left(\frac{\ell_e\cdot \log n}{d_i}\right)$.
Denote by $i_e$ the level such that $d_{i_e}\leq \ell_e<2d_{i_e}$. If $i>i_e$,
then the stretch of $e$ w.r.t.\ the HTSD is smaller than $1$. For $p<1$, the sum
now can thus be bounded as
\begin{align*}
\sum_{i=1}^k O\left(\frac{\ell_e\cdot \log
n}{d_i}\right)\cdot\left(\frac{d_i}{\ell_e}\right)^p
&=O\left(1+\log n\cdot
\left(\frac{\ell_e}{d_{i_e}}\right)^{1-p}\cdot\sum_{i=1}^{i_e}
\left(\frac{d_{i_e}}{d_i}\right)^{1-p}\right)\\
&=O\left(1+\log
n\cdot\sum_{i=1}^{i_e}\left(\frac{1}{2}\right)^{(1-p)(i_e-i)}\right)\\
&=O(\log n),
\end{align*}
where the final step exploits that the sum is a geometric series due to $1-p>0$.
\end{proof}


\section{Implementation in Different Models}\label{sec:implementation}
In this section, we describe how to implement the above techniques
in the \Congest, \Pram, and multipass streaming models.
These should be considered as exemplary computational models
and it seems likely that our techniques transfer to other models in which a
discrepancy between exact and approximate SSSP computations exist.
For the \Congest model some effort is needed in order to transfer the
$(1+\eps)$-approximate SSSP result
to $(1+\eps)$-approximate SSSP in super-source graphs
(see Definition~\ref{def:super}),
while for the other two models this is immediate.

\subsection{\texorpdfstring{\Congest}{CONGEST} Model}
\label{subsec:congest}
In the \Congest model of computation~\cite{Peleg2000}, every node is a computing
unit (of unlimited computational power) and is labeled by a unique $O(\log
n)$-bit identifier. Computation proceeds in synchronous rounds, in each of which
a node (1) performs local computations, (2) sends $O(\log n)$-bit messages to
its neighbors, and (3) receives the messages that its neighbors sent.
Initially, every node in the input graph $G=(V, E,\ell)$ knows its identifier
and its incident edges together with their length. We note that the restriction
to polynomially bounded edge lengths implies that distances can be encoded
using $O(\log n)$ bits.

At termination every node needs to know its part of the output.
For the task of constructing the random TSD, this means that every node
$v \in V$
knows
(1)
the ID of its own cluster's leader (i.e., the vertex with minimum ID, see
Section~\ref{section:projected-tree});
(2)
the ID of the leader of cluster $C$ if
$v \in U_{C}$,
that is, if $v$ participates in the supporting tree $T_{C}$ of cluster $C$;
and
(3)
its incident edges in $T_{C}$ for each supporting tree $T_{C}$ in which $v$
participates.
For the task of constructing the random HTSD, $v$ should hold that knowledge
for every level of the hierarchy.
As discussed in Section~\ref{section:virtual-trees}, this also provides the
nodes with all what they need in order to reconstruct the resulting projected
tree or HST.

In order to avoid confusion with the weighted diameter $\diam(G)$,
in what follows, we use $\hop(G)$ to denote the
``unweighted'' diameter of $G$, also called the \emph{hop diameter}.

The following corollary discusses how to compute $(1+\eps)$-approximate SSSP
in a super-source graph $H$ of a graph $G$ in the \Congest model.
We assume that each node $v\in V$ initially knows which of its incident
edges in $G$ are in $H$, whether it is connected to $s$, and, if so, the length
$\ell(\{s,v\})$.
\begin{corollary}[of~\cite{DBLP:conf/wdag/BeckerKKL17}]\label{cor:bkkl}
  Let
  $\varepsilon= \frac{1}{\polylog n}$.
  Then $(1+\varepsilon)$-approximate SSSP in super-source graphs can be solved
  in $\tilde{O}(\sqrt{n}+\hop(G))$ rounds w.h.p. in the \Congest model.
\end{corollary}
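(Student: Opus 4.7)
The plan is to reduce $(1+\eps)$-approximate SSSP on a super-source graph $G_s$ to the standard $(1+\eps)$-approximate SSSP problem that~\cite{DBLP:conf/wdag/BeckerKKL17} already solves in $\tilde{O}(\sqrt{n}+\hop(G))$ rounds. Two departures from that setting must be handled: (i) the relevant graph is a subgraph $H$ of $G$ rather than $G$ itself, and (ii) the source $s$ is a virtual vertex with no machine representing it in the \Congest network.

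Point (i) is routine. Each $v\in V_H$ already knows which of its incident edges lie in $E_H$ and simply deactivates the rest for the duration of the computation; vertices in $V\setminus V_H$ stay silent. Because messages continue to traverse $G$, any aggregation primitive the BKKL algorithm relies on (BFS trees, pipelined broadcasts, token dissemination) still completes in $O(\hop(G))$ rounds, so the additive $\hop(G)$ term in the complexity is governed by the physical network rather than by $\hop(H)$.

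Point (ii) is addressed by the standard reduction of super-source SSSP to multi-source SSSP with initial potentials: since
\[
\dist_{G_s}(s,v)\, = \,\min_{\{u,s\}\in E_s}\bigl(\ell^{G_s}_{us}+\dist_H(u,v)\bigr),
\]
each $u$ adjacent to $s$ initializes its tentative distance label to $\ell^{G_s}_{us}$ (known locally) and the algorithm proceeds as if these nodes were all sources. The BKKL algorithm is a relaxation-based procedure built on approximate hopsets and Bellman--Ford-style scans, so it extends transparently to multiple weighted initial sources; equivalently, one imagines a virtual vertex $s$ appended to $G$ whose only role is to seed its neighbors with the right initial potentials, a purely local step that needs no physical machine.

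The main point requiring care is verifying that neither modification inflates the round complexity, which is what makes this a corollary rather than a restatement. The hopset construction and relaxation scans of~\cite{DBLP:conf/wdag/BeckerKKL17} depend only on the size and hop-diameter of the host network $G$, both unchanged by the reduction. To emit the spanning tree required by the definition of a $(1+\eps)$-approximate SSSP algorithm, every $v\in V_H$ additionally records the edge on which its final label was achieved -- either an edge of $H$ (its parent in $T$) or the virtual edge to $s$ accounting for its initial potential -- producing the desired approximate SSSP tree of $G_s$ within $\tilde{O}(\sqrt n+\hop(G))$ rounds w.h.p.
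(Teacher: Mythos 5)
Your high-level plan is the same as the paper's: express $\dist_{G_s}(s,v)=\min_{\{u,s\}\in E_s}(\ell^{G_s}_{us}+\dist_H(u,v))$, let the neighbors of $s$ seed themselves with their edge lengths as initial potentials (equivalently, simulate the virtual vertex $s$), and invoke the algorithm of~\cite{DBLP:conf/wdag/BeckerKKL17}. The problem is that the entire content of the corollary is the claim you dispose of in one sentence, namely that the black box ``extends transparently to multiple weighted initial sources,'' and the reason you give for it is not accurate. The \Congest algorithm of~\cite{DBLP:conf/wdag/BeckerKKL17} is not a hopset-plus-Bellman--Ford relaxation procedure; it (1) computes $(1+\eps/3)$-approximate $\tilde{O}(\sqrt{n})$-hop-limited distances between the source and a skeleton $S$ of $\tilde{\Theta}(\sqrt{n})$ sampled nodes, (2) runs an SSSP/transshipment solver on the resulting virtual skeleton graph by \emph{simulating a broadcast congested clique}, and only then (3) finishes with $\tilde{O}(\sqrt{n})$ Bellman--Ford iterations. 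Step (3) indeed tolerates weighted initial potentials transparently, but steps (1) and (2) do not: the source must appear as a node of the skeleton graph, so the virtual $s$ (or, in your formulation, the whole weighted source set, which need not contain any skeleton node) has to be represented there, and in the congested-clique simulation some machine must be able to play the role of $s$, i.e., $s$'s $\tilde{\Theta}(\sqrt{n})$ incident virtual edges and lengths must be made global knowledge, which itself costs $\tilde{O}(\sqrt{n}+\hop(G))$ rounds via pipelining over a BFS tree. One must also argue that in step (1) the virtual $s$ never has to relay messages it cannot locally be simulated to send; the paper does this by orienting $s$'s edges towards it (using a pipelined Bellman--Ford variant that handles directed graphs) and observing that discarding paths through $s$ does not affect distances \emph{from} $s$, since shortest paths do not revisit the source.

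So the gap is concrete: you assert, based on a wrong description of the algorithm's structure, exactly the statement that needs proof. The fix is a white-box pass over the three phases as above, checking that each one can be adapted to a virtual source within additive $\tilde{O}(\sqrt{n}+\hop(G))$ overhead; your observations about restricting to the subgraph $H$ while communicating over all of $G$, and about extracting the approximate SSSP tree by recording the edge realizing each final label, are fine and carry over unchanged.
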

\begin{proof}
  The algorithm from~\cite{DBLP:conf/wdag/BeckerKKL17} consists of three main
  steps:
  \begin{enumerate}
  \item Let $S$ be a set composed of $s$ and $\tilde{\Theta}(\sqrt{n})$ nodes
  sampled uniformly at random. Let each node $v\in S$ learn a
  $(1+\frac{\varepsilon}{3})$-approximation to the minimum length of
  $\tilde{O}(\sqrt{n})$-hop paths to each sampled node $w\in S$ (if no such node
  exists, any result of at least $d(v,w)$ is fine, including $\infty$). For each
  finite value, nodes on a (unique) path in $G$ learn about them being part of
  this path and the next node on it.
  \item Simulate a broadcast congested clique\footnote{The broadcast congested
  clique is
  the special case of the Congest model restricted to complete graphs and, for
  each round, nodes sending the same message to each of their neighbors.}
  $(1+\varepsilon/3)$-approximate SSSP algorithm on the (virtual) graph on $S$
  with edge lengths given by the result from the previous step
  ($\infty$ means no edge).
  \item Run $\tilde{O}(\sqrt{n})$ iterations of single source Bellman-Ford on
  $G$, where the distance values of nodes in $S$ are initialized to the
  distances obtained from the previous step.
  \end{enumerate}
  Assuming w.l.o.g.\ that $\varepsilon\leq 1$, this yields
  $(1+\varepsilon)$-approximate distances to $s$. As the first step yields
  suitable routing information and the result of the second (i.e., an
  approximate SSSP-tree on the virtual graph) is global knowledge,
  nodes can locally determine their parent in the output tree~$T$.

  We adapt the algorithm to super-source graphs as follows.
  \begin{enumerate}
    \item The first step is based on a pipelined version of the (multi-source)
    Bellman-Ford algorithm that also works on directed
    graphs~\cite[Corollary~5.8]{LenzenPP2018}. Formally, we orient all edges of
    $s$ towards it (no other change is made). Then we can easily simulate the
    procedure on the resulting graph, as all communication by $s$ over one of
    its edges can be inferred from its length (which is known to the recipient).

    Note that the result is not exactly the same as that of the first step
    above: all paths containing $s$ as non-starting node have been removed.
    However, the decisive property of the constructed graph is that it
    preserves $G$-distances to $s$ up to a factor of $1+\varepsilon$.
    The virtual graph also needs to be undirected, which is achieved by
    dropping the directionality of the computed distances.
    \item The simulation of the broadcast congested clique algorithm in the
    Congest model is based on making all communication global knowledge.
    Using pipelining over a BFS tree, the input of $s$ in the virtual graph
    (i.e., its incident edges and their lengths) can be made global knowledge
    in $\tilde{O}(\sqrt{n}+\hop(G))$ rounds. Together, this implies that all
    nodes can locally simulate $s$.
    \item Simulating the communication by $s$ in the third step of the
    algorithm, which is a standard Bellman-Ford computation, is straightforward.
  \end{enumerate}
  As all steps can be adjusted preserving the guarantees of the algorithm and
  the asymptotic running time is increased by additive
  $\tilde{O}(\sqrt{n}+\hop(G))$ only,
  the result now follows from~\cite{DBLP:conf/wdag/BeckerKKL17}.
\end{proof}

This leads to the following result for Algorithm~\ref{alg:blur} from
Section~\ref{sec: blur}. As it is basically a sequence of approximate
SSSP computations, a running time bound is immediate from
Corollary~\ref{cor:bkkl}.
\begin{corollary}\label{cor:blur_congest}
  Suppose $\alpha= \frac{1}{\polylog n}$ and $\rho= n^{O(1)}$. Then
  Algorithm~\ref{alg:blur} can be executed in the \Congest model in
  $\tilde{O}(\sqrt{n}+\hop(G))$ rounds w.h.p.
\end{corollary}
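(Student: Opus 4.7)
The plan is to separately bound the number of iterations of the while loop and the cost of each iteration in the \Congest model. The loop terminates once $\alpha^{i}\rho < \min_{e\in E}\{\ell_e\}\ge 1$, so with $\rho = n^{O(1)}$ and $\alpha = 1/\polylog n$ the number of iterations is at most $\log_{1/\alpha}(\rho) = O(\log n/\log\log n) = \polylog n$. It therefore suffices to show that a single iteration can be implemented in $\tilde{O}(\sqrt{n}+\hop(G))$ rounds w.h.p.

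As a preprocessing step I would compute a BFS tree of $G$ (depth $O(\hop(G))$) once, at cost $O(\hop(G))$ rounds; this tree is then reused in every iteration for broadcasting scalar values such as the radius $r^{[i]}$ to all nodes. I would also maintain the invariant that at the beginning of iteration $i$ every node $v\in V$ knows whether $v\in B^{[i-1]}$; this holds initially and, as argued below, is restored at the end of each iteration using only local computation. Given this invariant, the super-source graph $G^{[i]}$ is realized implicitly: each node $u\notin B^{[i-1]}$ exchanges its flag with its neighbors in a single round and then locally declares itself incident to the virtual super-source $s^{[i]}$ precisely when it has a neighbor in $B^{[i-1]}$, with the corresponding virtual edge length equal to the length of the underlying edge to that neighbor (breaking ties by taking the shortest such edge). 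This is exactly the form of input assumed by Corollary~\ref{cor:bkkl}, and since $\alpha^2=1/\polylog n$, invoking it yields a $(1+\alpha^2)$-approximate SSSP tree $T^{[i]}$ on $G^{[i]}$ in $\tilde{O}(\sqrt{n}+\hop(G))$ rounds. After the routine terminates, each node $v$ knows $\dist_{T^{[i]}}(s^{[i]},v)$ and its parent in $T^{[i]}$; a final broadcast of $r^{[i]}$ (sampled once, e.g.\ by the root of the BFS tree) over the BFS tree lets every node locally decide whether $v\in B^{[i]}$, restoring the invariant.

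Multiplying the per-iteration cost of $\tilde{O}(\sqrt{n}+\hop(G))$ by the $\polylog n$ iteration count yields the claimed bound, and the final output ``$v\in U$ iff $v\in B^{[i]}$ for some $i$'' is already known to each node by construction. The main technical point to verify is that the \Congest adaptation of~\cite{DBLP:conf/wdag/BeckerKKL17} underlying Corollary~\ref{cor:bkkl} is truly compatible with the implicit super-source representation -- that is, that each non-contracted node can supply its local view of $s^{[i]}$ without extra global communication and without inflating the stated round complexity; everything else is essentially bookkeeping supported by the invariant above.
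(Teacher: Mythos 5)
Your proposal is correct and follows essentially the same route as the paper: bound the iteration count by $\log_{1/\alpha}\rho = \polylog n$, broadcast $r^{[i]}$ over a BFS tree, let each node determine its membership in $B^{[i-1]}$ and its virtual edge to the super-source from one round of flag exchange with neighbors, and invoke Corollary~\ref{cor:bkkl} for the approximate SSSP step. The compatibility concern you flag at the end is already resolved by the input convention stated just before Corollary~\ref{cor:bkkl} (each node knows whether it is connected to $s$ and the corresponding edge length), which is exactly what your local construction supplies.
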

\begin{proof}
The while loop terminates after at most $\lceil\log_{1/\alpha}\rho = O(\log
n)\rceil$ iterations. In each iteration, $r^{[i]}$ can be chosen by an arbitrary
node (e.g.\ the one with lowest identifier) and broadcasted via a BFS tree in
$O(\hop(G))$ rounds. Each node then can infer from the result from the previous
iteration (or the input if $i=1$) whether it is part of $B^{[i-1]}$. Nodes
adjacent to $B^{[i-1]}$ can learn about this in one communication round and
infer the length of the edge connecting them to $s^{[i]}$ in $G^{[i]}$. Thus,
all that remains is the approximate SSSP computation, which can be performed in
the stated running time by Corollary~\ref{cor:bkkl} w.h.p. The
$\tilde{O}$-notation absorbs the $O(\log n)$-factor from the number of loop
iterations.
\end{proof}

We turn to Algorithm~\ref{alg:decomposition} from
Section~\ref{section:graph-decomposition}. As each iteration can be performed
within $\tilde{O}(\sqrt{n}+\hop(G))$ rounds w.h.p., this implies a bound on the
running time of the overall algorithm.
\begin{corollary}\label{cor:decomposition_congest}
If $\Delta= n^{O(1)}$, Algorithm~\ref{alg:decomposition} can be executed
within $\tilde{O}(\sqrt{n}+\hop(G))$ rounds w.h.p.
\end{corollary}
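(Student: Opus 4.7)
The plan is to walk through Algorithm~\ref{alg:decomposition} line by line and show that a single iteration of its outer while loop can be implemented in $\tilde{O}(\sqrt{n}+\hop(G))$ rounds in the \Congest model. Multiplying by the $O(\log n)$ bound on the number of iterations from Corollary~\ref{cor:progress} and absorbing that factor into $\tilde{O}$ then yields the claim. Conceptually, the argument mirrors that of Corollary~\ref{cor:blur_congest}: every distance-related step reduces to a $(1+\varepsilon)$-approximate SSSP call in a super-source graph with $\varepsilon=\frac{1}{\polylog n}$, to which Corollary~\ref{cor:bkkl} applies, while the remaining steps are either local or resolved by a single broadcast/aggregation along a BFS tree in $O(\hop(G))$ rounds.

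For a single iteration, I would first sample each $\delta_u$ locally, then aggregate and broadcast $\max_v\{\delta_v\}$ in $O(\hop(G))$ rounds so that every node knows the length of its super-source edge in $G_s$. The first $(1+\varepsilon)$-approximate SSSP on $G_s$ then costs $\tilde{O}(\sqrt{n}+\hop(G))$ rounds. To materialize the cells in Line~\ref{line:trees}, each node must additionally learn the identity of the child of $s$ whose subtree contains it; this root label can be carried along with the distance estimates during the SSSP routine itself, at no asymptotic overhead. Detecting $\partial\VVV$ in Line~\ref{line:shrink_start} is then one round of exchanging these root labels with neighbors, after which $G_s'$ is built locally and a second approximate SSSP delivers $T'$ and thus the interiors $V_u^{\circ}$ within the same round budget.

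The step I expect to require the most care is Line~\ref{line:blur}, where \texttt{blur} is invoked for every root $u\in R$. Applying Corollary~\ref{cor:blur_congest} separately to each cluster would introduce a factor of $|R|$ and ruin the bound, so the parallel invocations must collapse into a single SSSP computation per blur iteration. The key enabling fact is that the choice of $\rho=\frac{1-1/(2\log n)}{4\beta}$ together with the shrinking distance $\frac{1+\varepsilon}{4\beta}$ of Line~\ref{line:shrink_end} guarantees (by Theorem~\ref{thm:blur} Property~\ref{item: prop 2} and the argument in Lemma~\ref{lemma:subset}) that the resulting $C_u$ are pairwise disjoint and, more strongly, that every $V_{u'}^{\circ}$ stays at distance exceeding the maximum blur radius from every $C_u$ with $u\neq u'$. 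Consequently, all concurrent blur invocations in a given iteration can be implemented by a single meta super-source attached in parallel to the current $B^{[i-1]}$ of every cluster, with each attachment tagged by its owning $u$; by separation, every node's nearest tagged attachment lies in its own cluster, so the single approximate SSSP call reproduces exactly the outcome of running the individual invocations in parallel.

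Putting the pieces together, a single iteration of the while loop costs $\tilde{O}(\sqrt{n}+\hop(G))$ rounds, the local bookkeeping of appending $C_u$ to $D$ and the corresponding subtree to $\TTT$ and of removing $C_u$ from $G$ requires no communication, and the $O(\log n)$ iterations multiply in to give the stated bound after absorbing the logarithm into $\tilde{O}$.
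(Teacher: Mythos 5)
Your route is essentially the paper's: each while-loop iteration reduces to a constant number of $(1+\eps)$-approximate SSSP computations in super-source graphs (Corollary~\ref{cor:bkkl}) plus local work and $O(\hop(G))$-round broadcasts, the concurrent \texttt{blur} invocations are collapsed into a single SSSP instance per blur iteration, and Corollary~\ref{cor:progress} contributes the $O(\log n)$ factor for the number of iterations, which is absorbed into the $\tilde{O}$. The one genuine deviation is how you justify the collapse of the blur calls. The paper invokes Lemma~\ref{lemma:subset} ($C_u\subseteq V_u$) to \emph{delete all inter-cell edges} before identifying the super-sources, so the merged instance literally decomposes into the independent per-cluster instances and no equivalence argument is needed (and Corollary~\ref{cor:cut} is phrased so that the blur analysis holds regardless of the subgraph on which \texttt{blur} is run). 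You keep the inter-cell edges and argue by separation, and as stated this is a bit too loose: the distance from $V_{u'}^{\circ}$ to any node outside $V_{u'}$ is essentially $\frac{1}{4\beta}=\frac{\rho}{1-\alpha}$, i.e., it matches the maximum blur radius with no usable margin, and since the merged SSSP is only approximate, a boundary node's distance estimate could in principle be realized through another cluster's attachment, so ``reproduces exactly the outcome of the individual invocations'' does not follow from your one-line claim. The argument can be repaired without deleting edges: at blur iteration $i$ the partially grown ball $B^{[i-1]}$ of cluster $u'$ lies within distance $\frac{(1-\alpha^{i-1})\rho}{1-\alpha}$ of $V_{u'}^{\circ}$, hence at distance at least $\frac{\alpha^{i-1}\rho}{1-\alpha}>r^{[i]}$ from every node outside $V_{u'}$; consequently no capture (and no tag) can ever be attributed to a foreign attachment, and for all relevant nodes the merged estimate is a valid $(1+\alpha^2)$-approximation of the distance to their own cluster's ball, so the per-cluster analysis of Theorem~\ref{thm:blur} still applies. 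Either spell out this refined separation bound or simply adopt the paper's edge-deletion step, which is the cleaner way to make this step airtight; the rest of your argument (sampling and broadcasting $\max_v\{\delta_v\}$, carrying subtree-root labels through the SSSP, one communication round for $\partial\VVV$) is fine and matches what the paper subsumes under ``local'' computation.
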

\begin{proof}
All computations with the exception of the approximate SSSP computations and the
call to \texttt{blur} are local. By Corollary~\ref{cor:blur_congest}, the stated
running time bound follows for a single iteration of the while loop. Here we use
that the instances of \texttt{blur} can be run in parallel by
Lemma~\ref{lemma:subset}: As $C_u\subseteq V_u$, we can delete all edges which
are not connecting two nodes within the same $V_u$ for some $u$ and then run a
single $(1+\eps)$-approximate SSSP instance, where we identify the super-sources
of all calls to \texttt{blur}. Therefore, Corollary~\ref{cor:progress} and a
union time bound yield the claim.
\end{proof}

We now turn to the techniques from Section~\ref{section:virtual-trees}.
As the recursive calls for each level of the decomposition hierarchy when
computing an HTSD can be executed concurrently with a single call to the
approximate SSSP subroutine, we obtain the following result.
\begin{corollary}\label{cor:htsd_congest}
There exists a \Congest algorithm that, given a graph
$G = (V, E, \ell)$
with $\poly(n)$-bounded edge lengths, constructs a random HTSD
$\Seq{D}$ of $G$ with the
following guarantees in $\tilde{O}(\sqrt{n}+\hop(G))$ rounds w.h.p.:
(1)
the depth of $\Seq{D}$ is
$O (\log n)$;
(2)
$\Seq{D}$ admits a (deterministic) geometrically decreasing diameter bounding
sequence
$\Seq{d}$
w.h.p.;
(3)
$\Load_{\Seq{D}}(e) = O (\log n)$
for every edge
$e \in E$
w.h.p.;
and
(4)
$\Ex_{\Seq{D}} [ \str_{\Seq{D}, \Seq{d}}(e) ] = O (\log^{2} n)$
for every edge
$e \in E$.
\end{corollary}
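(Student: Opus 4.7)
The plan is to execute the recursive construction from Theorem~\ref{theorem:hierarchical-tsd} level by level, using Corollary~\ref{cor:decomposition_congest} as the per-level primitive, and to show that each level still fits into $\tilde{O}(\sqrt{n}+\hop(G))$ rounds despite the fact that it issues one \decomp call per cluster of the previous level. Since there are only $O(\log n)$ levels by Theorem~\ref{theorem:hierarchical-tsd}(1), the logarithmic blow-up is absorbed in the $\tilde{O}$ notation.

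First I would compute the diameter estimate $\Delta$ used in Theorem~\ref{theorem:hierarchical-tsd}: pick an arbitrary source (say, the node with smallest ID, identified via a BFS tree in $O(\hop(G))$ rounds), run a $2$-approximate SSSP from it using Corollary~\ref{cor:bkkl}, and aggregate $\max_{x\in V}\{\dist_T(s,x)\}$ upwards along the BFS tree; this yields $\Delta$ at every node in $\tilde{O}(\sqrt{n}+\hop(G))$ rounds and hence determines the deterministic sequence $\Seq{d}=(d_0,d_1,\ldots,d_k)$. Next, level $i$ of the hierarchy amounts to invoking \decomp on each cluster $C\in D_{i-1}$ with diameter parameter $d_i$. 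Because the level $(i{-}1)$ clusters are vertex-disjoint and, by Lemma~\ref{lemma:subset} applied inductively, the supporting trees keep their work confined to those clusters, the different \decomp instances can be run concurrently: at each point where \decomp (or its call to \texttt{blur}) invokes $(1+\eps)$-approximate SSSP on a super-source graph, the approximate-SSSP instances across clusters operate on edge-disjoint subgraphs of $G$, so they can be folded into a single \Congest execution of Corollary~\ref{cor:bkkl} in which each node plays a role in exactly one instance and each edge carries messages for at most one instance. A single level therefore costs $\tilde{O}(\sqrt{n}+\hop(G))$ rounds w.h.p., exactly as in Corollary~\ref{cor:decomposition_congest}.

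Summing over the $O(\log n)$ levels yields $\tilde{O}(\sqrt{n}+\hop(G))$ rounds w.h.p.\ overall. Properties (1)--(4) are inherited directly from Theorem~\ref{theorem:hierarchical-tsd}, as the distributed implementation merely realizes the same randomized algorithm; the only additional observation required is that every node can locally read off, at each level, the ID of its cluster leader and its incident edges in the supporting tree, which follows from the corresponding statement for a single call to \decomp in Corollary~\ref{cor:decomposition_congest} combined with standard pipelining along the supporting trees (of weighted depth at most $d_i$ and therefore $O(\hop(G))$ hops in the worst case).

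The main obstacle is the parallel execution of several \decomp instances via a single SSSP call per sub-step: one has to argue that the approximate-SSSP routine from Corollary~\ref{cor:bkkl} still delivers its guarantees when run simultaneously on many vertex-disjoint super-source graphs sharing the underlying communication network. I would handle this by noting that each edge belongs to at most one cluster subgraph (by Lemma~\ref{lemma:subset}, $C_u\subseteq V_u$ and the $V_u$ are disjoint), so the Bellman-Ford-style relaxations and the simulated broadcast-congested-clique phase used in the proof of Corollary~\ref{cor:bkkl} proceed independently per cluster, with congestion remaining $O(\log n)$ bits per round and the global aggregation over the BFS tree contributing only the additive $\hop(G)$ term; everything else is local per-cluster bookkeeping identical to the single-cluster analysis.
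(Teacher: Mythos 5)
Your proposal matches the paper's proof: the key step in both is to merge, at each of the $O(\log n)$ levels, the per-cluster approximate-SSSP calls into a single instance (by identifying the super-sources over the vertex- and edge-disjoint cluster subgraphs), so that each level costs $\tilde{O}(\sqrt{n}+\hop(G))$ rounds exactly as in Corollary~\ref{cor:decomposition_congest}, and properties (1)--(4) are inherited from Theorem~\ref{theorem:hierarchical-tsd}. One peripheral claim in your last step is wrong, however: a supporting tree of weighted depth $d_i$ can have up to $n-1$ hops, so it is not ``$O(\hop(G))$ hops in the worst case,'' and pipelining along it to learn cluster leaders is not cheap; the paper sidesteps this by deferring leader identification to the projected-tree construction (Corollary~\ref{cor:projected_congest}), where it is handled via the Garay--Kutten--Peleg MST algorithm together with the $O(\log n)$ load bound.
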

\begin{proof}
For each of the $O(\log n)$ levels of the decomposition, the recursive SSSP
calls for each of the clusters can be merged into a single one by identifying
their super-sources. The claim hence follows from
Theorem~\ref{theorem:hierarchical-tsd} and
Corollary~\ref{cor:decomposition_congest}.
\end{proof}
From the hierarchical decomposition, we obtain embeddings into projected trees
and hierarchically $2$-separated trees as described
in Sections~\ref{section:projected-tree}~and~\ref{section:hst}.
\begin{corollary}\label{cor:projected_congest}
There exists a \Congest algorithm that, given a graph
$G = (V, E, \ell)$
with $\poly(n)$-bounded edge lengths, constructs a random projected
tree $T$ of $G$ in $\tilde{O}(\sqrt{n}+\hop(G))$ rounds that satisfies
the following guarantees for every edge $e \in E$:
(1)~$\Load_{T}(e) = O (\log n)$
w.h.p.;
and
(2)~$\Ex_{T} [\str_{T}(e)] = O (\log^{2} n)$.
\end{corollary}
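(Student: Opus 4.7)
The plan is to reduce essentially entirely to Corollary~\ref{cor:htsd_congest} and the construction of Lemma~\ref{lemma:htsd-to-projected-tree}, arguing that the latter is a purely local postprocessing step once the HTSD has been computed. First I would invoke Corollary~\ref{cor:htsd_congest} to compute, in $\tilde{O}(\sqrt{n}+\hop(G))$ rounds, a random HTSD $\Seq{D}=(D_0,\dots,D_k)$ of $G$ with $k=O(\log n)$, a geometrically decreasing diameter bounding sequence $\Seq{d}$ w.h.p., $\Load_{\Seq{D}}(e)=O(\log n)$ w.h.p., and expected stretch $O(\log^2 n)$ per edge. After this call, each node $v$ knows, for every level $0\le i\le k$: the identifier of the leader $\lambda(C_i^v)$ of the unique level-$i$ cluster $C_i^v$ containing $v$, whether $v$ participates in each supporting tree $T_C$ (and if so, the ID of its leader $\lambda(C)$), and the list of $v$'s incident edges inside every such supporting tree.

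Next I would argue that the projected tree $T$ of Lemma~\ref{lemma:htsd-to-projected-tree} can be represented without any additional communication. Since $T$ is obtained from $\Seq{D}$ by (a) taking separate copies of nodes inside each supporting tree, (b) adding a $0$-length vertical edge between the level-$i$ and level-$(i{-}1)$ clones of each cluster leader, and then (c) contracting these vertical edges, every clone of a vertex $v$ in $T$ corresponds to a level $i$ such that $v\in U_{C_i^v}$, with its incident ``horizontal'' edges in $T$ being precisely the incidences of $v$ in $T_{C_i^v}$ that $v$ already stores. The vertical edges require only that the leader $\lambda(C)$ of a level-$i$ cluster $C$ knows $\lambda(C'')$ where $C''\supseteq C$ is the level-$(i{-}1)$ cluster; but since a vertex that leads a level-$i$ cluster also leads all clusters it belongs to on levels $\ge i$ (the leader is chosen as the minimum-ID vertex), this is information every leader already possesses locally from the HTSD output. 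Hence each node $v$ can locally enumerate all its clones in $T$ together with their incident edges (horizontal and contracted vertical), which is precisely the CONGEST output convention stated at the start of Section~\ref{subsec:congest}.

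The load and stretch guarantees then transfer directly: by Lemma~\ref{lemma:htsd-to-projected-tree}, $\Load_T(e)=\Load_{\Seq{D}}(e)=O(\log n)$ w.h.p.\ and $\str_T(e)=O(\str_{\Seq{D},\Seq{d}}(e))$, so taking expectations and combining with Theorem~\ref{theorem:hierarchical-tsd}\,(4) yields $\Ex_T[\str_T(e)]=O(\log^2 n)$. Since the postprocessing step is strictly local and performs no communication, the total round complexity is dominated by the HTSD computation, giving $\tilde{O}(\sqrt{n}+\hop(G))$ rounds w.h.p.

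The only point requiring care, and in my view the main obstacle, is verifying that the implicit representation of $T$ is genuinely sufficient without any global synchronization. The subtlety is that a vertex $v$ which is \emph{not} a leader on any level still needs to know its unique neighbor-along-the-path in $T$ at each level, which comes from the supporting-tree incidences; and a vertex that \emph{is} a leader on several levels produces several clones contracted into one tree-vertex, whose adjacencies are the union of the corresponding supporting-tree incidences plus the (contracted) vertical edges. Both are read off locally from the HTSD output, so no communication beyond Corollary~\ref{cor:htsd_congest} is incurred, and the claimed bound follows.
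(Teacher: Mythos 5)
There is a genuine gap: you assume that the call to Corollary~\ref{cor:htsd_congest} already leaves every node knowing, for each level, the ID of the leader (minimum-ID vertex) of its cluster and whether it is that leader, and you conclude that the remaining work is communication-free. Nothing in the HTSD computation justifies this. The algorithm behind Corollary~\ref{cor:htsd_congest} consists of approximate SSSP computations and the blurring subroutine; it never aggregates IDs within clusters, so no node learns the minimum ID of its cluster. Appealing to the output convention stated at the beginning of Section~\ref{subsec:congest} is circular, since establishing that very piece of knowledge is the step that still has to be implemented. It is also not a triviality: a cluster can have weak diameter (and its supporting tree weighted, hence hop, depth) polynomial in $n$, so naively aggregating the minimum ID along supporting trees could take far more than $\tilde{O}(\sqrt{n}+\hop(G))$ rounds. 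The paper's proof spends exactly its nontrivial content on this point: it roots all supporting trees at their cluster leaders by running the Garay--Kutten--Peleg spanning forest/MST machinery on the forest $H$ of supporting trees, and simulates that computation on $G$ at a multiplicative overhead of $O(\log n)$, which is admissible precisely because $\Load_{\Seq{D}}(e)=O(\log n)$ w.h.p. This is one of the places where the load bound is used algorithmically, not just as an output guarantee. (The analogous PRAM corollary makes the same point: leader identification is singled out as the main remaining step, there solved by pointer jumping.)

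The rest of your argument is fine and matches the paper: once leaders are identified, the vertical edges join two clones of the same vertex (using that a level-$i$ leader is also the leader of all its clusters on levels $\geq i$), so building and contracting them is local, and the load and stretch bounds transfer verbatim via Lemma~\ref{lemma:htsd-to-projected-tree} and Theorem~\ref{theorem:hierarchical-tsd}. To close the gap, replace the claim that the postprocessing is communication-free by an explicit intra-cluster leader election, e.g.\ the GKP-based rooting of $H$ simulated on $G$ with $O(\log n)$ overhead, and verify that this stays within $\tilde{O}(\sqrt{n}+\hop(G))$ rounds.
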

\begin{proof}
We obtain an HTSD using Corollary~\ref{cor:htsd_congest}. Inspection of
the construction in Section~\ref{section:projected-tree} reveals that all
operations are local once we identify the leaders of clusters. This is, e.g.,
achieved by rooting all supporting trees at the respective cluster's leader,
which can be done by using the Garay-Kutten-Peleg minimum spanning tree
algorithm~\cite{garay1998sublinear,KuttenP98fast} to compute a spanning forest
of $H$. As the load of each edge is $O(\log n)$, the algorithm on $H$ can be
simulated at a multiplicative overhead of $O(\log n)$, resulting in running time
$\tilde{O}(\sqrt{n}+\hop(G))$.
\end{proof}
\begin{corollary}
There exists a \Congest algorithm that, given a graph
$G = (V, E, \ell)$
with $\poly(n)$-bounded edge lengths, constructs an embedding into a random
dominating hierarchically $2$-separated tree $T$ of $G$ in
$\tilde{O}(\sqrt{n}+\hop(G))$ rounds with expected stretch
$\Ex_{T} [\str_{T}(e)] = O(\log^{2} n)$ for each edge $e\in E$.
\end{corollary}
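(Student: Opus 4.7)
The plan is to derive the HST construction from the HTSD guaranteed by Corollary~\ref{cor:htsd_congest}, mimicking the simulation approach used for the projected tree in Corollary~\ref{cor:projected_congest}. First I would invoke Corollary~\ref{cor:htsd_congest} to obtain an HTSD $\Seq{D} = (D_0, \dots, D_k)$ with $k = O(\log n)$ levels, together with all supporting trees, within $\tilde{O}(\sqrt{n} + \hop(G))$ rounds. At that point each node $v$ knows, for every cluster $C$ it participates in (either as a member or as an internal vertex of $T_C$), its incident tree edges, and the diameter bounding sequence $\Seq{d}$ is deterministic and globally known.

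The HST construction from Section~\ref{section:hst} requires only the following additional piece of information at each node $v$: for each level $0 \leq i \leq k$, the identifier of the leader $\lambda(C_i^v)$ of the level-$i$ cluster $C_i^v$ containing $v$. I would compute this level by level by rooting each supporting tree $T_C$ at its minimum-ID vertex (which by definition is $\lambda(C)$) and then broadcasting that ID downward to every vertex of $T_C$. Since the supporting trees at a single level of the HTSD have pairwise disjoint vertex sets, all such rootings can be carried out in parallel, and across all $O(\log n)$ levels the total load on any edge of $G$ is $O(\log n)$ w.h.p. by Corollary~\ref{cor:htsd_congest}; this is exactly the setting in which the simulation argument from Corollary~\ref{cor:projected_congest}, based on a Garay--Kutten--Peleg-style spanning forest computation, yields an $\tilde{O}(\sqrt{n} + \hop(G))$ bound. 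Once leaders are known at every level, node $\lambda(C')$ for a level-$(i+1)$ cluster $C' \subseteq C_i$ can read off its parent $\lambda(C_i)$ (since $\lambda(C') \in C_i$), so the parent-child relations $\{\lambda(C_i), \lambda(C_{i+1})\}$ with weight $d_i$ in the HST are established locally.

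With this data in place, the HST $T$ is represented distributedly in the standard way: the embedding $\iota(v)$ is $v$ itself (level-$k$ clusters being singletons), and $v$ stores its chain of ancestors $v = \lambda(C_k^v), \lambda(C_{k-1}^v), \dots, \lambda(C_0^v)$ together with the (globally known) edge weights $d_i$. Hierarchical $2$-separation holds by construction since $d_{i-1} = 2 d_i$, and the expected stretch bound $\Ex_T[\str_T(e)] = O(\log^2 n)$ follows directly from the corollary at the end of Section~\ref{section:hst}, whose hypothesis is exactly an HTSD with a geometrically decreasing diameter bounding sequence -- precisely what Corollary~\ref{cor:htsd_congest} provides.

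The main obstacle is arguing that the leader identification and intra-cluster broadcast really fit in the $\tilde{O}(\sqrt{n} + \hop(G))$ budget, because a single supporting tree may have large weighted diameter and non-trivial hop depth. What saves us is the same observation that powered Corollary~\ref{cor:projected_congest}: the union $H$ of all supporting trees across all levels has load $O(\log n)$ on every $G$-edge w.h.p., so any forest computation on $H$ (min-ID aggregation, downward broadcast of the leader ID) can be simulated on $G$ with only a polylogarithmic overhead on top of the $\tilde{O}(\sqrt{n} + \hop(G))$ cost of the HTSD itself, which is absorbed into the $\tilde{O}(\cdot)$.
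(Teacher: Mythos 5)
Your proposal is correct and follows essentially the same route as the paper, whose proof is simply ``analogous to Corollary~\ref{cor:projected_congest}'': obtain the HTSD from Corollary~\ref{cor:htsd_congest}, identify cluster leaders by a forest computation on the union of supporting trees simulated on $G$ at $O(\log n)$ overhead thanks to the load bound, and then observe that the HST structure and stretch bound follow locally from Section~\ref{section:hst}. The only nitpick is that the leader is the minimum-ID vertex of the cluster $C$, not of the (possibly larger) supporting tree $T_C$, but this does not affect the argument.
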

\begin{proof}
Analogous to Corollary~\ref{cor:projected_congest}.
\end{proof}

\subsection{\texorpdfstring{\Pram}{PRAM} Model}
In the \Pram model, multiple processors share a random access memory to jointly
solve a computational problem. Various contention models exist for concurrent
access to the same memory cell by multiple processors, but are equivalent up to
small (sub-logarithmic) factors in complexity, so we assume that there is no
contention. Then we can view the computation as a DAG whose nodes represent
elementary computational steps and edges dependencies. The input is represented
by the sources of the DAG. The crucial complexity measures are \emph{work,} the
total size of the DAG (or, equivalently, the sequential complexity of the
computation) and \emph{depth,} the maximum length of a path in the DAG (or,
equivalently, the time to complete the computation with an unbounded number of
processors executing steps at unit speed).

We use a result on approximate SSSP computations due to Cohen, who
introduced hop sets for this purpose. Following standard notation, we use
$m:=|E|$, where $G=(V,E,\ell)$ is the input graph.

\begin{corollary}[of~\cite{DBLP:journals/jacm/Cohen00, ElkinN16b}]
\label{cor:sssp_pram}
  Let
  $\eps_0 > 0$ be a constant and $\eps = \frac{1}{\polylog n}$.
  Then $(1+\eps)$-approximate SSSP in super-source graphs can be solved
  in $O(m^{1+\eps_0})$ work and $\polylog n$ time w.h.p.
\end{corollary}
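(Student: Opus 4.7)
The plan is to reduce the super-source variant to the standard $(1+\eps)$-approximate SSSP problem on an explicitly constructed graph, which in the shared-memory PRAM setting is essentially free. Recall from Definition~\ref{def:super} that a super-source graph $G_s$ is obtained from a subgraph $H$ of $G$ by adding a single vertex $s$, a set $E_s \subseteq V_H \times \{s\}$ of incident edges, and edge lengths $\ell^{G_s}_e \in \{1,\dots,n^c\}$ for $e \in E_s$; the edges of $H$ keep their lengths from $\ell$. Since the caller already knows $H$, $E_s$, and the lengths of the new edges, and since $|E_s| \leq n \leq m + 1$, the entire edge list of $G_s$ has size $\Theta(m)$ and can be written to memory in $O(m)$ work and $O(\log n)$ depth (e.g., by a standard prefix-sum-based compaction).

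Given the explicit representation of $G_s$, I would then invoke the PRAM $(1+\eps)$-approximate SSSP algorithm of Cohen~\cite{DBLP:journals/jacm/Cohen00} (with the hop-set improvements of Elkin and Neiman~\cite{ElkinN16b}), using $s$ as the source. For any constant $\eps_0 > 0$ and $\eps = 1/\polylog n$, that algorithm runs in $O(|E(G_s)|^{1+\eps_0}) = O(m^{1+\eps_0})$ work and $\polylog n$ depth w.h.p., and returns a spanning tree $T$ of $G_s$ with $\dist_{T}(s,v) \leq (1+\eps)\dist_{G_s}(s,v)$ for every $v$. This is exactly the output required by the definition of a $(1+\eps)$-approximate SSSP algorithm on $G_s$.

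There is essentially no technical obstacle here, in sharp contrast to the \Congest case (Corollary~\ref{cor:bkkl}): the shared-memory model allows any processor to touch any incident edge of $s$ at unit cost, so none of the communication tricks (pipelined Bellman–Ford, broadcast simulation) are needed. The only minor point to check is that Cohen's algorithm tolerates the presence of $s$ as an ordinary vertex, which it does because it treats $G_s$ merely as a weighted undirected graph with polynomially bounded edge weights (integrality and the absence of parallel edges, if required, can be enforced by a trivial $O(m)$-work, $O(\log n)$-depth preprocessing pass). The bounds then follow directly from the cited results applied to $G_s$.
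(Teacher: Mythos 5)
Your proposal is correct and takes essentially the same route as the paper: the paper states this corollary without a dedicated proof, noting only that (in contrast to \Congest) the transfer to super-source graphs is immediate in the \Pram model because $G_s$ can be constructed explicitly and handed to the cited algorithm, and that connectivity gives $m\geq n-1$ so the $m^{\eps_0}$ term absorbs the $\polylog n$ overheads --- which is exactly the reduction you describe.
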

We remark that the assumption that the graph $G$ is connected implies
$m^{\eps_0}\geq n^{\eps_0}$ and thus the term $m^{\eps_0}$ can absorb
$\polylog n$ factors.

Following the same route as for the \Congest model, we obtain a string of
corollaries. As coordination between processes is easier in the \Pram model, in
most cases the results are immediate.
\begin{corollary}\label{cor:blur_pram}
Suppose $\alpha= \frac{1}{\polylog n}$, $\rho= n^{O(1)}$, and $\eps_0$ is a
constant. Then Algorithm~\ref{alg:blur} can be executed in the \Pram model with
depth $\polylog n$ and work $O(m^{1+\eps_0})$ w.h.p.
\end{corollary}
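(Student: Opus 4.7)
The plan is to trace through Algorithm~\ref{alg:blur} and bound the work and depth contributed by each iteration of its while loop, then multiply by the number of iterations. Since $\rho = n^{O(1)}$ and edge lengths are at least $1$, the termination condition $\alpha^i\rho \ge \min_{e}\{\ell_e\}$ forces at most $\lceil \log_{1/\alpha}\rho\rceil = O(\log n / \log\log n)$ iterations, which is in particular $\polylog n$. Hence it suffices to show that each iteration can be executed with $\polylog n$ depth and $O(m^{1+\eps_0})$ work w.h.p., after which the overall bounds follow by summation (the $m^{\eps_0}$ factor absorbs the $\polylog n$ blow-up from the number of iterations, using the remark following Corollary~\ref{cor:sssp_pram}).

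Within one iteration, the nontrivial step is the $(1+\alpha^2)$-approximate SSSP computation on the super-source graph $G^{[i]}$. Since $\alpha = 1/\polylog n$, the required approximation parameter $\alpha^2 = 1/\polylog n$ falls within the regime of Corollary~\ref{cor:sssp_pram}, so this call can be completed in depth $\polylog n$ and work $O(m^{1+\eps_0})$ w.h.p. All remaining work in the iteration is elementary: sampling $r^{[i]}$ uniformly from $[0, \alpha^{i-1}\rho]$ is a single random draw; constructing $G^{[i]}$ from $G$ by contracting $B^{[i-1]}$ into $s^{[i]}$ only requires that each processor owning an edge consult the characteristic function of $B^{[i-1]}$ (stored in shared memory), which is depth $O(1)$ and work $O(m)$; and forming $B^{[i]}$ from $T^{[i]}$ amounts to checking, for each node, whether its tree-distance label is at most $r^{[i]}$, again depth $O(1)$ and work $O(n)$.

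Taking a union bound over the $\polylog n$ iterations, each producing its stated guarantees w.h.p.\ (by raising the constant $c$ hidden in the high-probability bounds of Corollary~\ref{cor:sssp_pram}), the total depth is $\polylog n \cdot \polylog n = \polylog n$, and the total work is $\polylog n \cdot O(m^{1+\eps_0})$. As $m \ge n-1$ under the connectivity assumption of the paper, $m^{\eps_0} \ge n^{\eps_0}$ dominates any $\polylog n$ factor, so the total work remains $O(m^{1+\eps_0})$ after adjusting the implicit constant in $\eps_0$ if needed.

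There is essentially no serious obstacle here; the only mildly delicate point is the absorption of the $\polylog n$ iteration factor into $m^{\eps_0}$, which is exactly the observation the paper already makes right after Corollary~\ref{cor:sssp_pram}. The proof will therefore be a short composition: invoke Corollary~\ref{cor:sssp_pram} inside each iteration, bound the remaining per-iteration work trivially, and sum.
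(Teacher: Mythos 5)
Your proof is correct and follows essentially the same route the paper takes (the paper treats this corollary as immediate from Corollary~\ref{cor:sssp_pram}, mirroring the \Congest argument of Corollary~\ref{cor:blur_congest}): bound the while loop by $O(\log n)$ iterations, observe that each iteration is dominated by one approximate SSSP call with $\eps=\alpha^2=\frac{1}{\polylog n}$ plus trivial bookkeeping, and absorb the polylogarithmic iteration factor into $m^{\eps_0}$ as the paper's remark after Corollary~\ref{cor:sssp_pram} allows.
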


\begin{corollary}\label{cor:decomposition_pram}
If $\Delta= n^{O(1)}$ and $\eps_0$ is a constant,
Algorithm~\ref{alg:decomposition} can be executed in the \Pram model with depth
$\polylog n$ and work $O(m^{1+\eps_0})$ w.h.p.
\end{corollary}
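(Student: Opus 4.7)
The plan is to mirror the proof of Corollary~\ref{cor:decomposition_congest} but invoking the \Pram primitives in place of the \Congest ones. First I would show that a single iteration of the while loop of Algorithm~\ref{alg:decomposition} can be executed in $\polylog n$ depth and $O(m^{1+\eps_0})$ work w.h.p. The only non-trivial primitives per iteration are the two $(1+\eps)$-approximate SSSP computations on super-source graphs (first to obtain $T$ and the cells $V_u$, then to identify the interiors $V_u^\circ$) and the calls to \texttt{blur}. Everything else (drawing the $\delta_u\sim \Exp_\beta$ independently, constructing the super-source graphs $G_s$ and $G_s'$, identifying the children of $s$ and their subtrees via pointer jumping, and computing $\partial\VVV$ and the $V_u^\circ$) consists of standard parallel primitives with $\polylog n$ depth and $O(m)$ work.

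Next I would bound the cost of each of the two $(1+\eps)$-approximate SSSP calls by Corollary~\ref{cor:sssp_pram}, giving depth $\polylog n$ and work $O(m^{1+\eps_0})$ w.h.p., which is compatible with the choice $\eps=\frac{1}{\polylog n}$ prescribed in Algorithm~\ref{alg:decomposition}. For the concurrent calls to \texttt{blur} over all roots $u\in R$, I would reuse the observation already exploited in the \Congest argument: by Lemma~\ref{lemma:subset}, $C_u\subseteq V_u$, and the $V_u$ are pairwise disjoint, so removing the edges that cross different $V_u$ decouples the calls. We can then merge them into a single invocation of \texttt{blur} on the resulting subgraph by identifying the super-sources of all calls into one global super-source. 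Applying Corollary~\ref{cor:blur_pram} to the merged call yields depth $\polylog n$ and work $O(m^{1+\eps_0})$ for the combined blurring step.

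Finally, Corollary~\ref{cor:progress} guarantees that the while loop terminates after $O(\log n)$ iterations w.h.p. Multiplying the per-iteration cost by this factor preserves the depth bound $\polylog n$, and the work bound $O(m^{1+\eps_0})$ is preserved as well because $m^{\eps_0}\geq n^{\eps_0}$ absorbs any polylogarithmic factor (if needed, one shrinks $\eps_0$ by a small constant inside each SSSP invocation to make room). A union bound over the $O(\log n)$ iterations and over the w.h.p.\ guarantees of Corollaries~\ref{cor:sssp_pram}, \ref{cor:blur_pram} and~\ref{cor:progress} completes the argument.

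I do not expect a substantive obstacle: the entire argument is a direct parallel-model transcription of the \Congest proof, and the only point worth double-checking is that the iterations must be executed sequentially (later blur radii depend on the partition produced so far), so the depths add. This is fine because each iteration has only $\polylog n$ depth, giving $O(\log n)\cdot\polylog n=\polylog n$ overall.
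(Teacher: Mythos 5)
Your proposal is correct and matches the paper's intent exactly: the paper gives no separate proof for this corollary, treating it as immediate by transcribing the \Congest argument (Corollary~\ref{cor:decomposition_congest}) into the \Pram setting via Corollaries~\ref{cor:sssp_pram} and~\ref{cor:blur_pram}, with Lemma~\ref{lemma:subset} decoupling the concurrent \texttt{blur} calls and Corollary~\ref{cor:progress} bounding the number of iterations. Your observation that $m^{\eps_0}\geq n^{\eps_0}$ absorbs the polylogarithmic factors from the $O(\log n)$ iterations is precisely the remark the paper makes after Corollary~\ref{cor:sssp_pram}.
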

Combining this corollary with Theorem~\ref{theorem:hierarchical-tsd}, we obtain
the following result.
\begin{corollary}\label{cor:htsd_pram}
Fix any constant $\eps_0>0$.
There exists a \Pram algorithm of depth $\polylog n$ and work
$O(m^{1+\eps_0})$ that, for a graph $G = (V, E, \ell)$
with $\poly(n)$-bounded edge lengths, constructs a random HTSD
$\Seq{D}$ of $G$ with the following guarantees w.h.p.:
(1)
the depth of $\Seq{D}$ is
$O (\log n)$;
(2)
$\Seq{D}$ admits a (deterministic) geometrically decreasing diameter bounding
sequence
$\Seq{d}$
w.h.p.;
(3)
$\Load_{\Seq{D}}(e) = O (\log n)$
for every edge
$e \in E$
w.h.p.;
and
(4)
$\Ex_{\Seq{D}} [ \str_{\Seq{D}, \Seq{d}}(e) ] = O (\log^{2} n)$
for every edge
$e \in E$.
\end{corollary}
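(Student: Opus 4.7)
The plan is to mirror the proof of Corollary~\ref{cor:htsd_congest}, replacing the CONGEST round bound with the PRAM work/depth bound from Corollary~\ref{cor:decomposition_pram}. The HTSD construction of Theorem~\ref{theorem:hierarchical-tsd} consists of $k = O(\log n)$ levels, where level $i$ is obtained by invoking \decomp on each cluster of level $i-1$ with diameter parameter $d_i = d_{i-1}/2$. Since edge lengths are $\poly(n)$-bounded, $k=O(\log n)$, and each $d_i$ is also $\poly(n)$, so Corollary~\ref{cor:decomposition_pram} applies at every level.

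First I would argue that the calls to \decomp within one level can be parallelized. The level-$(i-1)$ clusters are pairwise vertex-disjoint, and each level-$i$ invocation operates on the subgraph induced by one such cluster (or a subset thereof), so the corresponding computation DAGs are independent. Hence the PRAM depth for processing one level is the maximum over clusters of the per-cluster depth, which is $\polylog n$ by Corollary~\ref{cor:decomposition_pram}, and the total work at one level is at most the sum of the per-cluster work bounds. Since the subgraphs are edge-disjoint subgraphs of $G$, this sum is bounded by $O(m^{1+\eps_0})$ (using $m_C^{1+\eps_0} \le m^{\eps_0} \cdot m_C$ and $\sum_C m_C \le m$).

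Summing over the $O(\log n)$ levels yields total depth $O(\log n) \cdot \polylog n = \polylog n$ and total work $O(\log n) \cdot O(m^{1+\eps_0}) = O(m^{1+\eps_0})$; the $O(\log n)$ factor is absorbed either by the $\tilde O$ hidden in Corollary~\ref{cor:decomposition_pram} or by noting that $m^{\eps_0} \ge n^{\eps_0}$ dominates any polylogarithmic factor for constant $\eps_0 > 0$. The structural guarantees (1)--(4) are inherited directly from Theorem~\ref{theorem:hierarchical-tsd}, since the PRAM implementation produces exactly the same random HTSD distribution.

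The main (and rather mild) obstacle is the bookkeeping for work across concurrent \decomp calls at a single level; the key observation that makes it trivial is that clusters at a given level are vertex-disjoint, so the subgraphs passed to \decomp are edge-disjoint subgraphs of $G$, allowing a clean additive accounting. No new ideas beyond those already used in the CONGEST corollary are required.
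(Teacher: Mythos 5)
Your proposal is correct and follows essentially the same route as the paper, which states this corollary with no further proof beyond ``combining'' Corollary~\ref{cor:decomposition_pram} with Theorem~\ref{theorem:hierarchical-tsd} (mirroring the one-line argument given for the \Congest analogue). Your explicit edge-disjointness accounting for the work across concurrent per-cluster calls, and the absorption of the $O(\log n)$ level factor into $m^{\eps_0}\geq n^{\eps_0}$, fill in exactly the details the paper leaves implicit.
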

\begin{corollary}\label{cor:projected_pram}
Fix any constant $\eps_0>0$.
There exists a \Pram algorithm of depth $\polylog n$ and work
$O(m^{1+\eps_0})$ that, given a graph
$G = (V, E, \ell)$
with $\poly(n)$-bounded edge lengths, constructs a random projected
tree $T$ of $G$ that satisfies the following
guarantees for every edge $e \in E$:
(1)
$\Load_{T}(e) = O (\log n)$
w.h.p.;
and
(2)
$\Ex_{T} [\str_{T}(e)] = O (\log^{2} n)$.
\end{corollary}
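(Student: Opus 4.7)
The plan is to combine Corollary~\ref{cor:htsd_pram}, which already delivers the HTSD with the required structural guarantees inside the target depth and work budget, with the deterministic graph-transformation of Lemma~\ref{lemma:htsd-to-projected-tree}, which takes any HTSD equipped with a geometrically decreasing diameter bounding sequence and produces a projected tree whose load equals that of the HTSD and whose stretch is at most a constant factor larger than the HTSD stretch. Since Corollary~\ref{cor:htsd_pram} already yields $\Load_{\Seq{D}}(e)=O(\log n)$ w.h.p.\ and $\Ex_{\Seq{D}}[\str_{\Seq{D},\Seq{d}}(e)]=O(\log^2 n)$, both output guarantees of the corollary follow once we show that the transformation of Lemma~\ref{lemma:htsd-to-projected-tree} admits a \Pram implementation of depth $\polylog n$ and work $O(m^{1+\eps_0})$.

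First I would invoke Corollary~\ref{cor:htsd_pram} with the same $\eps_0$, obtaining in parallel, for every level $0\le i\le k$, the level-$i$ TSD $D_i$ together with its supporting trees. Note that the total size of all supporting trees summed over all levels is $\sum_{e\in E}\Load_{\Seq{D}}(e)=O(m\log n)$ w.h.p., so throughout the transformation we are working on an object of size $\tilde O(m)$, and any operation of depth $\polylog n$ and work $\tilde O(m)$ comfortably fits inside the budget $(\polylog n,\,O(m^{1+\eps_0}))$.

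Next I would assemble the forest $H$ and the projection $\pi^H$ of Lemma~\ref{lemma:htsd-to-projected-tree} in parallel: each of the $O(\log n)$ levels contributes a disjoint copy of its supporting trees, which is just a renaming of vertex and edge identifiers and can be done in $O(1)$ depth and linear work in $|V^H|+|E^H|=\tilde O(m)$. Identifying, for every cluster $C$ on every level, its leader $\lambda(C)=\min_{v\in C} \mathrm{id}(v)$ reduces to computing the minimum label inside every connected component of every $D_i$; this is a standard connected-components-plus-segmented-reduction task that can be performed in $\polylog n$ depth and $\tilde O(m)$ work using well-known \Pram primitives. Once leaders are known, adding the vertical edges $\{v_i,v_{i-1}\}$ for each non-root cluster $C$ and its leader $v=\lambda(C)$ is a purely local per-cluster operation: each cluster contributes a single edge, so this step is linear in the number of clusters and again fits within the budget. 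The final step is the contraction of the $0$-length vertical edges, which creates $T$ from $T^\uparrow$; since all contracted edges are vertex-disjoint paths between clones of the same original vertex, the contraction is a single round of pointer jumping on an already-rooted tree and takes $O(\log n)$ depth and $\tilde O(m)$ work.

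The only part I expect to be mildly delicate is maintaining the identifiers and the projection $\pi^T$ consistently under contraction so that each original edge $e\in E$ correctly accounts for its preimage in $T$, but this is bookkeeping: because the bijection $b\colon E^H\to E^T$ of Lemma~\ref{lemma:htsd-to-projected-tree} is explicit and preserves $\pi$, tagging every edge of $H$ with its source $e\in E$ from the outset and carrying these tags through the contraction gives $\Load_T(e)=\Load_H(e)=\Load_{\Seq{D}}(e)=O(\log n)$ w.h.p., while the stretch bound $\Ex_T[\str_T(e)]=O(\log^2 n)$ is inherited from the HTSD via Lemma~\ref{lemma:htsd-to-projected-tree}. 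Combining these observations with Corollary~\ref{cor:htsd_pram} completes the proof.
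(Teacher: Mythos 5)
Your proposal is correct and follows essentially the same route as the paper: obtain the HTSD from Corollary~\ref{cor:htsd_pram}, then implement the transformation of Lemma~\ref{lemma:htsd-to-projected-tree} in the \Pram model, with leader identification (via pointer jumping / standard connectivity primitives) being the only step that needs explicit attention. The paper's own proof is just a terser version of your argument, noting that leader identification by pointer jumping fits within the stated depth and work bounds.
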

\begin{proof}
Again, the main step after obtaining an HTSD is to identify cluster leaders.
This can be easily done by pointer jumping within the stated complexity bounds.
\end{proof}
\begin{corollary}
Fix any constant $\eps_0>0$. There exists a \Pram algorithm of depth $\polylog
n$ and work $O(m^{1+\eps_0})$ that, given a graph $G = (V, E, \ell)$
with $\poly(n)$-bounded edge lengths, constructs an embedding into a random
dominating hierarchically $2$-separated tree $T$ of $G$ with expected stretch
$\Ex_{T} [\str_{T}(e)] = O (\log^{2} n)$ for each edge $e\in E$.
\end{corollary}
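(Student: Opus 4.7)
The plan is to follow exactly the route used in Corollary~\ref{cor:projected_pram}: first invoke Corollary~\ref{cor:htsd_pram} to obtain a random HTSD $\Seq{D}=(D_0,\dots,D_k)$ of $G$ together with its geometrically decreasing diameter bounding sequence $\Seq{d}=(d_0,\dots,d_k)$ in depth $\polylog n$ and work $O(m^{1+\eps_0})$. The stretch and load guarantees of $\Seq{D}$ are then inherited, and by Theorem~\ref{theorem:hierarchical-tsd} we already have $k=O(\log n)$ and $\Ex_{\Seq{D}}[\str_{\Seq{D},\Seq{d}}(e)]=O(\log^2 n)$ for every $e\in E$.

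Next I would realize the combinatorial construction from Section~\ref{section:hst} in the \Pram model. Recall that $V^T$ is the multiset of cluster leaders across all levels, and $E^T$ contains, for each level $i<k$ cluster $C$ and each level $i+1$ subcluster $C'\subseteq C$, an edge $\{\lambda(C),\lambda(C')\}$ of length $d_i$. The only non-local ingredient is identifying, for every cluster $C\in D_i$, the vertex $\lambda(C)\in C$ of smallest ID; once leaders are known, every other piece of information (the level, the parent cluster, the edge length $d_i$) is determined locally per cluster/vertex. To compute leaders in parallel, I would root each supporting tree $T_C$ of $D_i$ at an arbitrary vertex and perform standard Euler-tour / pointer-jumping min-finding along $T_C$; this runs in depth $O(\log n)$ and work $O(|T_C|)$ per cluster. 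Aggregated over all $O(\log n)$ levels and clusters per level, and using the load bound $\Load_{\Seq{D}}(e)=O(\log n)$ w.h.p.\ to control the total size, the whole leader identification takes depth $\polylog n$ and work $\tilde O(m)$, absorbed by the $O(m^{1+\eps_0})$ budget. Emitting the edges of $T$ and their lengths, and pointing the root to the leader of the unique level $0$ cluster, is then trivially parallel.

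For correctness, $T$ is hierarchically $2$-separated by construction, since consecutive levels have $d_{i+1}=d_i/2$. The embedding $\iota$ is defined by identifying each $v\in V$ with the copy of $v$ created at level $k$ (where clusters are singletons). By Observation~\ref{obs:stretch_HST}, distances in $T$ between these leaves equal the corresponding sums $\sum_{j=i}^{k-1}d_j$, which by the geometric decrease of $\Seq{d}$ are $\Theta(d_i)$ and are at least the weak diameters of the clusters, so $T$ dominates $G$. The expected stretch bound $\Ex_T[\str_T(e)]=O(\log^2 n)$ then follows exactly as in the proof accompanying Observation~\ref{obs:stretch_HST}, by reducing to the $p=1$ case of the HTSD stretch already bounded in Theorem~\ref{theorem:hierarchical-tsd}.

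The step I expect to require the most care is the parallel leader computation: one must both handle arbitrary shapes of supporting trees and ensure that the total work across all $O(\log n)$ levels does not blow up. This is why I rely on the load bound from Corollary~\ref{cor:htsd_pram} together with standard pointer-jumping subroutines; everything else (tree edge instantiation, length assignment, embedding, stretch analysis) is a direct translation of Section~\ref{section:hst} and mirrors the argument of Corollary~\ref{cor:projected_pram}.
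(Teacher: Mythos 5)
Your proposal is correct and follows essentially the same route as the paper, which proves this corollary implicitly by analogy with Corollary~\ref{cor:projected_pram}: obtain the HTSD via Corollary~\ref{cor:htsd_pram}, identify cluster leaders by pointer jumping, and then instantiate the HST of Section~\ref{section:hst}, whose domination and $O(\log^2 n)$ stretch guarantees carry over unchanged. Your slightly more direct domination argument (comparing $\dist_T$ to the level-$i$ cluster diameter rather than routing through the projected tree) is equally valid.
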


\subsection{Semi-Streaming Model}
In the \emph{streaming model}~\cite{HenzingerRR98,McGregor14},
the input graph is given as a stream of edges without repetitions.
The performance of an algorithm is measured by the space it uses,
whereby space is organized in memory words of $O(\log n)$ bits.
In the \emph{multipass streaming model},
the input is presented to the algorithm in several such passes,
and the goal is to keep both the number of required passes
and the space consumption small.
For algorithms for graph problems, it is usual to assume arbitrary
arrival order of the edges.
The special case where the computational problem takes an $n$-vertex graph as
input and the amount of memory is $\tilde{O}(n)$ is also known as the
\emph{semi-streaming model}~\cite{FeigenbaumKMSZ2005}.
All our results in this subsection are for this setting.

\begin{corollary}[of \cite{DBLP:conf/wdag/BeckerKKL17}]
In the semi-streaming model, $(1+\varepsilon)$-approximate SSSP in super-source
graphs can be solved in $\polylog n$ passes w.h.p.\ for any
$\eps = \frac{1}{\polylog n}$.
\end{corollary}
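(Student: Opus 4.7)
The plan is to adapt the semi-streaming $(1+\varepsilon)$-approximate SSSP algorithm of~\cite{DBLP:conf/wdag/BeckerKKL17} to super-source graphs, in the same spirit as the \Congest adaptation in Corollary~\ref{cor:bkkl}, but with substantially simpler bookkeeping. The key observation is that a super-source graph $G_s$ differs from $G$ only by adjoining one extra node $s$ together with at most $|V_H|\leq n$ incident edges, and by restricting the remaining edges to a subgraph $H\subseteq G$. Since the semi-streaming model permits $\tilde O(n)$ memory, I would begin by storing the set $E_s$ of super-source edges together with their lengths, along with an auxiliary data structure (e.g., the identifier set of $V_H$, or the characteristic set of $B$ in the contraction view of Definition~\ref{def:super}) that encodes membership in $H$. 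All of this fits in $\tilde O(n)$ words.

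Next I would argue that any pass over the edges of $G_s$ can be simulated by a single pass over the edges of $G$: for each stream edge, consult the stored characterization of $H$ to decide whether to present it to the inner algorithm, and after the pass enumerate the stored $E_s$ locally as though its edges had appeared in the stream. Neither the filter nor the local enumeration increases space beyond $\tilde O(n)$ or the pass count by more than a constant. Consequently, any streaming algorithm operating on $G_s$ can be executed at no asymptotic overhead in passes or space.

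It then remains to check that the BKKL pipeline—hopset construction plus (multi-source) Bellman-Ford relaxation—is indifferent to whether its source is a ``real'' vertex of $G$ or the virtual super-source: every primitive it invokes (relaxing an edge, running Bellman-Ford from a sampled set of sources, querying an approximate distance through a hopset) takes the same form whether $s$ is an original vertex or not, provided its incident edges can be read each pass. Since $\varepsilon=1/\polylog n$ meets the hypothesis of~\cite{DBLP:conf/wdag/BeckerKKL17}, the original $\polylog n$ pass bound carries over directly. The one place that warrants care—analogous to the \Congest adaptation—is making sure that the hopset primitives handle $s$ as an ordinary node in $G_s$; but unlike in \Congest, there is no analogue of the $\tilde O(\sqrt n+\hop(G))$ broadcast step, because the description of $s$ (its $\leq n$ incident edges and their lengths) is already held centrally. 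I therefore expect no obstacle beyond verifying that each BKKL subroutine treats $s$'s incident edges uniformly with the rest of $E(G_s)$.
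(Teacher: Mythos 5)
Your proposal is correct and matches the paper's (implicit) reasoning: the paper gives no proof for this corollary, remarking earlier that the transfer to super-source graphs is "immediate" in the semi-streaming model precisely because the super-source $s$, its at most $n$ incident edges, and the characterization of the subgraph $H$ all fit in the $\tilde{O}(n)$ working memory, so each pass over $G_s$ is simulated by a filtered pass over $G$ plus a local enumeration of $E_s$. You have simply spelled out the details the paper leaves to the reader.
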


All computational steps that are not SSSP computations can be either directly
executed in memory (because only graphs of size $\tilde{O}(n)$ are involved) or
easily performed by storing $\polylog n$ words for each node and streaming once
(e.g., finding cluster leaders). Thus, corollaries analogous to the \Congest and
\Pram models are immediate.
\begin{corollary}\label{cor:blur_streaming}
Suppose $\alpha= \frac{1}{\polylog n}$, $\rho= n^{O(1)}$. Then
Algorithm~\ref{alg:blur} can be executed in the semi-streaming model with
$\polylog n$ passes w.h.p.
\end{corollary}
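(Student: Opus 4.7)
The plan is to mirror the proofs of Corollary~\ref{cor:blur_congest} and its \Pram analogue Corollary~\ref{cor:blur_pram}, now charging the per-iteration cost to the just-cited semi-streaming approximate SSSP subroutine of~\cite{DBLP:conf/wdag/BeckerKKL17}. In other words, I would combine an upper bound on the number of while-loop iterations of Algorithm~\ref{alg:blur} with an upper bound on the passes needed per iteration, and multiply.

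First I would bound the number of iterations. The loop terminates as soon as $\alpha^i \rho < \min_{e\in E}\{\ell_e\}$; since $\rho = n^{O(1)}$, edge lengths are integral and at least $1$, and $\alpha = 1/\polylog n$, this gives
\[
  i \,\leq\, \lceil \log_{1/\alpha} \rho \rceil \,=\, O\!\left(\frac{\log n}{\log\log n}\right) \,=\, \polylog n .
\]
Next, per iteration the only non-trivial step is the $(1+\alpha^2)$-approximate SSSP computation on the super-source graph $G^{[i]}$. The sampling of $r^{[i]}$, storing the indicator of $B^{[i-1]}$ as an $n$-bit vector, and deriving $B^{[i]}$ from the tree $T^{[i]}$ all fit in $\tilde{O}(n)$ memory and require no additional passes beyond the SSSP subroutine itself. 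Since $\alpha^2 = 1/\polylog n$ meets the hypothesis $\varepsilon = 1/\polylog n$ of the cited semi-streaming super-source SSSP corollary, that subroutine costs $\polylog n$ passes w.h.p. Multiplying by the $\polylog n$ iteration bound and taking a union bound over iterations yields the claimed $\polylog n$ passes w.h.p.

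The main (and essentially only) subtlety is to verify that constructing $G^{[i]}$ does not jeopardize the applicability of the cited SSSP bound in the streaming model, where the input is presented as a stream of edges of $G$. This is resolved by implementing the contraction of $B^{[i-1]}$ into $s^{[i]}$ on the fly: the in-memory indicator of $B^{[i-1]}$ lets us relabel, during each pass over the edge stream of $G$, any endpoint lying in $B^{[i-1]}$ as the virtual super-source $s^{[i]}$, so the SSSP algorithm sees precisely the edge stream of a super-source graph in the sense of Definition~\ref{def:super}. Hence the subroutine applies verbatim, and the result follows.
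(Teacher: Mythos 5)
Your proposal is correct and matches the paper's (implicit) argument: the paper states this corollary as immediate after noting that all non-SSSP steps fit in $\tilde{O}(n)$ memory or a single extra pass, that the cited semi-streaming super-source SSSP subroutine handles each iteration in $\polylog n$ passes, and that the loop runs for $O(\log_{1/\alpha}\rho)=\polylog n$ iterations. Your additional remark about realizing the contraction of $B^{[i-1]}$ on the fly by relabeling endpoints during each pass is a valid (and slightly more explicit) justification of the same point.
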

\begin{corollary}\label{cor:decomposition_streaming}
If $\Delta= n^{O(1)}$, Algorithm~\ref{alg:decomposition} can be executed in the
semi-streaming model with $\polylog n$ passes w.h.p.
\end{corollary}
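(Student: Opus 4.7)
The plan is to mirror the analyses already carried out for the \Congest and \Pram models in Corollaries~\ref{cor:decomposition_congest} and~\ref{cor:decomposition_pram}. The key observation is that Algorithm~\ref{alg:decomposition}, up to cheap local bookkeeping, consists of a while loop in which each iteration performs a constant number of $(1+\eps)$-approximate SSSP computations on super-source graphs and one invocation of \texttt{blur} per root in $R$. By Corollary~\ref{cor:progress} the loop terminates after $O(\log n)$ iterations w.h.p., so it suffices to show that each iteration can be realised in $\polylog n$ passes w.h.p.; a union bound over iterations then yields the stated overall pass complexity.

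To bound the cost of a single iteration, I would dispatch the two SSSP computations (on $G_s$ and on $G_s'$) to the semi-streaming approximate SSSP routine on super-source graphs, which runs in $\polylog n$ passes by the corollary of \cite{DBLP:conf/wdag/BeckerKKL17}. The purely local operations---sampling the $\delta_u$ values in Line~\ref{line:shifts}, extracting the set $R$ of roots and the associated subtrees $V_u$ from the returned SSSP tree, computing $\partial \VVV$ by a single pass that flags any edge whose endpoints lie in different $V_u$'s, and recording the partial decomposition and supporting trees---require only $\tilde{O}(n)$ words of persistent memory and $O(1)$ extra passes per iteration. For the blur calls I would invoke Corollary~\ref{cor:blur_streaming}: by Lemma~\ref{lemma:subset}, the sets $C_u \subseteq V_u$ for distinct $u \in R$ are vertex-disjoint, so, exactly as in the \Congest and \Pram implementations, all calls to \texttt{blur} in one iteration can be merged into a single blur-like computation by identifying their super-sources, still within $\polylog n$ passes.

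The main subtlety specific to the streaming setting is that the algorithm cannot rewrite its input stream, so the instruction $G := G \setminus C_u$ must be implemented implicitly. I would maintain an in-memory table of size $\tilde{O}(n)$ that records, for each vertex, whether it has already been absorbed into some cluster; every subsequent streaming pass then simply ignores edges both of whose endpoints are already assigned, which faithfully simulates the intended edge deletions without modifying the underlying stream. Termination of the loop is detected once a full pass reveals no remaining active edge. Combining the per-iteration $\polylog n$ bound with Corollary~\ref{cor:progress} and a union bound over the $O(\log n)$ iterations then gives the claim.
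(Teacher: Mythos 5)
Your proposal is correct and follows essentially the same route as the paper: the paper treats this corollary as immediate from the semi-streaming SSSP result for super-source graphs, the observation that all non-SSSP steps fit in $\tilde{O}(n)$ memory (with at most an extra pass per step), and the same structure as the \Congest/\Pram arguments (merging the \texttt{blur} calls via Lemma~\ref{lemma:subset} and bounding the iterations by Corollary~\ref{cor:progress}). Your explicit handling of the implicit deletion $G := G \setminus C_u$ via per-vertex assignment flags is just a spelled-out version of what the paper summarizes as storing $\polylog n$ words per node.
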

\begin{corollary}\label{cor:htsd_streaming}
There exists a semi-streaming algorithm that, given a graph $G = (V, E, \ell)$
with $\poly(n)$-bounded edge lengths, constructs a random HTSD
$\Seq{D}$ of $G$ with the following guarantees in $\polylog n$ passes w.h.p.:
(1)
the depth of $\Seq{D}$ is
$O (\log n)$;
(2)
$\Seq{D}$ admits a (deterministic) geometrically decreasing diameter bounding
sequence
$\Seq{d}$
w.h.p.;
(3)
$\Load_{\Seq{D}}(e) = O (\log n)$
for every edge
$e \in E$
w.h.p.;
and
(4)
$\Ex_{\Seq{D}} [ \str_{\Seq{D}, \Seq{d}}(e) ] = O (\log^{2} n)$
for every edge
$e \in E$.
\end{corollary}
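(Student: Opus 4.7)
The plan is to derive this corollary by mirroring the PRAM and \Congest analogues (Corollaries~\ref{cor:htsd_congest} and~\ref{cor:htsd_pram}): combine the abstract construction of Theorem~\ref{theorem:hierarchical-tsd} with the semi-streaming implementation of \decomp from Corollary~\ref{cor:decomposition_streaming}. Theorem~\ref{theorem:hierarchical-tsd} already produces an HTSD satisfying properties (1)--(4) by recursively invoking \decomp with the geometrically decreasing sequence $d_i = d_{i-1}/2$ starting from $d_0 = 4\Delta$, where $\Delta$ is a constant-factor approximation of $\diam(G)$ obtained from one approximate SSSP call from an arbitrary source. So properties (1)--(4) transfer verbatim and what remains is to bound the number of passes.

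The key structural observation is that the $O(\log n)$ recursive levels can each be processed in a single batch. For level $i$, I would run \decomp concurrently on all clusters of level $i-1$. Concurrency is safe because Lemma~\ref{lemma:subset} guarantees that each new cluster $C_u$ is contained in the Voronoi-like cell $V_u$, so the instances operate on vertex-disjoint subgraphs; as in the proof of Corollary~\ref{cor:decomposition_congest}, one may ignore all edges crossing different level-$(i-1)$ clusters and merge the super-sources of the per-cluster approximate SSSP computations into a single super-source. Consequently, one level of the hierarchy corresponds to one invocation of the procedure analyzed by Corollary~\ref{cor:decomposition_streaming}, which costs $\polylog n$ passes. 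Summing over the $O(\log n)$ levels yields $\polylog n$ passes in total.

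The only substantive obstacle is bookkeeping under the $\tilde{O}(n)$ memory budget. I would store the HTSD incrementally: after processing level $i$, each node only needs to know, per level, the identifier of its cluster leader and the identifiers of its parent/children in the supporting tree, which amounts to $O(\log n)$ words per node per level and hence $\tilde{O}(n)$ words overall, matching the load bound from guarantee~(3). Auxiliary tasks such as sampling the exponential shifts of Line~\ref{line:shifts}, picking the uniform radii inside \texttt{blur}, determining cluster leaders as the minimum-ID vertex in each cluster, and executing the shrinking step of Lines~\ref{line:shrink_start}--\ref{line:shrink_end} each cost a constant number of passes with $\tilde{O}(n)$ memory, since they are either local to each node or reducible to a single additional approximate SSSP from the super-source. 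Hence all per-level overhead is absorbed in the $\polylog n$ bound and the stated complexity follows.
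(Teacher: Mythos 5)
Your proposal is correct and matches the paper's (largely implicit) argument: the paper likewise obtains this corollary by combining Theorem~\ref{theorem:hierarchical-tsd} with Corollary~\ref{cor:decomposition_streaming}, merging the per-cluster super-sources so that each of the $O(\log n)$ levels costs one $\polylog n$-pass invocation, exactly as in the \Congest analogue (Corollary~\ref{cor:htsd_congest}). Your additional accounting of the $\tilde{O}(n)$ memory for storing leaders and supporting-tree pointers is a harmless elaboration of what the paper dismisses as immediate.
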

\begin{corollary}\label{cor:projected_streaming}
There exists a semi-streaming algorithm that, given a graph
$G = (V, E, \ell)$
with $\poly(n)$-bounded edge lengths, in $\polylog n$ passes constructs a random
projected tree $T$ of $G$ that satisfies the following
guarantees for every edge $e \in E$:
(1)
$\Load_{T}(e) = O (\log n)$
w.h.p.;
and
(2)
$\Ex_{T} [\str_{T}(e)] = O (\log^{2} n)$.
\end{corollary}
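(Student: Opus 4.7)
The plan is to compose the HTSD construction from Corollary~\ref{cor:htsd_streaming} with the conversion from HTSD to a projected tree given by Lemma~\ref{lemma:htsd-to-projected-tree}, mirroring the argument used for Corollaries~\ref{cor:projected_congest} and~\ref{cor:projected_pram}. First I would invoke Corollary~\ref{cor:htsd_streaming} to obtain, in $\polylog n$ passes, a random HTSD $\Seq{D}$ of $G$ together with a geometrically decreasing diameter bounding sequence $\Seq{d}$, satisfying the depth, load, and stretch guarantees stated there.

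Next I would argue that the entire output of Corollary~\ref{cor:htsd_streaming} can be stored in $\tilde{O}(n)$ memory, so that everything that follows is a purely in-memory computation requiring no further passes over the input stream. Indeed, the HTSD has $O(\log n)$ levels and, w.h.p., each level is a TSD in which every edge has load $O(\log n)$, so the total number of supporting-tree edges across all levels is $\tilde{O}(n)$; similarly, each vertex participates in $O(\log n)$ supporting trees per level, giving an $\tilde{O}(n)$ bound on the number of (vertex, level, cluster) records. With $\Seq{D}$ available in memory I can execute the construction of Section~\ref{section:projected-tree}: identify the leader of each cluster as its smallest-ID vertex (a single scan over the stored level-cluster assignments), instantiate the level-$i$ clones, add the zero-length vertical edges between clones of the same leader at consecutive levels, and contract those vertical edges.

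Applying Lemma~\ref{lemma:htsd-to-projected-tree} to $\Seq{D}$ and $\Seq{d}$ then yields a projected tree $T$ with $\Load_T(e) = \Load_{\Seq{D}}(e) = O(\log n)$ w.h.p.\ and $\str_T(e) = O(\str_{\Seq{D},\Seq{d}}(e))$ for every $e \in E$; taking expectations and plugging in Corollary~\ref{cor:htsd_streaming}~(4) gives the desired bound $\Ex_T[\str_T(e)] = O(\log^2 n)$. The total pass complexity is dominated by the call to Corollary~\ref{cor:htsd_streaming} and therefore remains $\polylog n$.

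The main obstacle I expect is the bookkeeping needed to certify that the representation of $\Seq{D}$ really fits in $\tilde{O}(n)$ words so that the post-processing happens in memory rather than in additional streaming passes, and in particular that cluster-leader identification (which aggregates minimum-ID information within each cluster across all $O(\log n)$ levels) can indeed be carried out from the in-memory representation alone. Once the memory bound is established, all remaining steps are identical to the \Pram case treated in Corollary~\ref{cor:projected_pram} and the claim follows.
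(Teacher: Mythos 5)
Your proposal matches the paper's argument: the paper likewise obtains the HTSD via Corollary~\ref{cor:htsd_streaming} and then notes that all remaining steps (leader identification, clone construction, contraction of vertical edges) operate on structures of size $\tilde{O}(n)$ held in memory, so Lemma~\ref{lemma:htsd-to-projected-tree} applies with no extra passes. One minor quibble: the cleanest way to see that the stored forest has $\tilde{O}(n)$ edges is that the supporting trees produced in each iteration of \decomp are disjoint subtrees of a single SSSP tree (so at most $n-1$ edges per iteration, times $O(\log^2 n)$ iteration--level pairs), rather than via the per-edge load bound, which by itself only gives $O(m\log^2 n)$.
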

\begin{corollary}
There exists a semi-streaming algorithm that, given a graph $G = (V, E, \ell)$
with $\poly(n)$-bounded edge lengths, in $\polylog n $ passes constructs an
embedding into a random dominating hierarchically $2$-separated tree $T$ of $G$
with expected stretch $\Ex_{T} [\str_{T}(e)] = O (\log^{2} n)$ for each edge
$e\in E$.
\end{corollary}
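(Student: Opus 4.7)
The plan is to follow exactly the template already used for the analogous \Congest and \Pram HST results, combining Corollary~\ref{cor:htsd_streaming} with the (purely combinatorial) HST construction of Section~\ref{section:hst}. First, I would invoke Corollary~\ref{cor:htsd_streaming} to compute a random HTSD $\Seq{D}=(D_0,\ldots,D_k)$ of $G$ in $\polylog n$ passes, with depth $k=O(\log n)$, a geometrically decreasing diameter bounding sequence $\Seq{d}$, edge load $O(\log n)$, and expected per-edge stretch $O(\log^2 n)$ w.h.p. During these passes I would also store, for each vertex $v\in V$, the identifier of the cluster containing $v$ at each of the $O(\log n)$ levels; this amounts to $\tilde{O}(n)$ memory in total, well within the semi-streaming budget.

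Next, I would carry out the HST construction of Section~\ref{section:hst} entirely in memory, since the only inputs needed are the cluster memberships per level and the diameter bounds $d_i$. Identifying the leader $\lambda(C)$ of each cluster (the vertex of smallest ID in $C$) is a trivial reduction over the per-level membership data and can be done without any additional stream pass, or, if desired, in a single auxiliary pass. One then creates a node for each cluster, roots at the node of the unique level-$0$ cluster $V$, and for every level $0\le i<k$ and every pair $(C,C')$ with $C\in D_i$, $C'\in D_{i+1}$, $C'\subseteq C$, adds an edge $\{\lambda(C),\lambda(C')\}$ of length $d_i$. Since every vertex belongs to at most $O(\log n)$ clusters over all levels, $T$ has $O(n\log n)$ nodes and edges, so it fits in $\tilde{O}(n)$ memory and can be emitted at termination.

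The stretch and domination guarantees of $T$ are inherited directly from the analysis in Section~\ref{section:hst}: the hierarchical $2$-separation property is immediate from $d_{i+1}=d_i/2$, domination and the $O(\log^2 n)$ expected stretch bound follow from Observation~\ref{obs:stretch_HST} combined with property~(4) of Corollary~\ref{cor:htsd_streaming} exactly as in the corollary proved at the end of Section~\ref{section:hst}. The overall pass complexity equals that of the HTSD computation, which is $\polylog n$. The only (minor) thing to check is that the auxiliary bookkeeping (per-level membership and leader identification) genuinely stays within $\tilde{O}(n)$ memory and can be carried out streamingly; this is immediate from the $O(\log n)$ depth of $\Seq{D}$ and the fact that leader selection is a reduction using an associative $\min$-operation. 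There is no substantial new obstacle beyond what was already resolved for \Congest and \Pram; the semi-streaming setting is in fact the most forgiving because all post-processing on the HTSD output happens on an object of size $\tilde{O}(n)$ that already lives in memory.
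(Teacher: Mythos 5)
Your proposal is correct and matches the paper's (largely implicit) argument: the paper likewise derives this corollary by combining the semi-streaming HTSD construction (Corollary~\ref{cor:htsd_streaming}) with the in-memory HST construction of Section~\ref{section:hst}, noting that all non-SSSP post-processing, including leader identification, fits in $\tilde{O}(n)$ memory. No substantive difference.
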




\section{Related Work}
\label{section:related-work}
Low diameter graph decompositions with small edge cutting probabilities (or
with small weight) play a major role in many algorithmic applications.
These include the construction of low stretch spanning trees
\cite{AbrahamBN2008, AbrahamN2012, AlonKPW1995, BeckerEGL2019, ElkinEST2008}
and low distortion probabilistic embeddings of metric spaces into
hierarchically well-separated trees
\cite{Bartal1996, Bartal1998, Bartal2004, FakcharoenpholRT2004},
fast approximate solvers of symmetric diagonally dominant linear systems
\cite{CohenKMPPRX2014, KoutisMP2011, KoutisMP2014, SpielmanT2004},
constructing graph spanners
\cite{MillerPVX2015,PelegS1989},
and spectral sparsification
\cite{KapralovP2012, Koutis2014}.
The literature in this field being vast, we can only give an
incomplete review of it.
We first focus on related work in distributed and parallel models of
computation, as these results are closest to ours, and then turn to the
related work in the streaming model.
Our discussion of the related work in the former models starts with reviewing
the literature on low diameter graph decompositions, and then it turns to
their applications, focusing on low average stretch spanning trees and tree
embeddings.

\paragraph{Low Diameter Graph Decompositions.}
In the \Local and \Congest models of distributed
computation\footnote{See
Section~\ref{subsec:congest} for the formal definition of \Congest. The \Local
model is identical, except that it does not restrict message sizes to
$O(\log n)$.}
low diameter graph decompositions for unweighted graphs, i.e., $G=(V,E,\ones)$,
play a special role
as they can be leveraged to design fast algorithms for a large class of
problems.
More precisely, the decomposition task is complete for a certain class of
local problems~\cite{GhaffariKM17}, where a problem is called local if it does
not require $\Omega(\hop(G))$ rounds of communication (recall the definition
of the hop diameter $\hop(G)$ from Section~\ref{subsec:congest}).
Here, $\hop(G)$ is of relevance even in problems where the input graph is
weighted, as communication over large hop distances is an
inherent obstacle to small running times in distributed algorithms.

Several distributed decomposition algorithms with round complexities of
$\polylog n$ and small edge cut probabilities are known for the unweighted
case~\cite{ElkinN16,LinialS93,DBLP:conf/spaa/MillerPX13}.\footnote{Some works
only care about the chromatic number of the graph resulting from contracting
clusters. However, the cited works achieve this by cutting few edges only.}
However, the weighted setting considered in this work is fundamentally
different. A lower bound of $\hop(G)$ is trivial, i.e., the task is not local:
intuitively, decoupling hop distance from graph distance implies that finding
close-by nodes may require communication over $\hop(G)$ hops. In the \Local
model, this bound is trivially tight, as nodes can learn about the entire graph
in $\hop(G)$ rounds. In the \Congest model, a reduction from $2$-party
communication complexity shows a lower bound of
$\Omega\left(\frac{\sqrt{n}}{\log n}\right)$ rounds for computing an $(r,
\lambda)$-decomposition for any non-trivial values of $r$ and $\lambda$.
This lower bound even holds if $\hop(G)=O(\log
n)$~\cite{SarmaHKKNPPW2012}.\footnote{A low-diameter decomposition can be used
to determine whether or not there is a light $s$-$t$ cut in the family of lower
bound graphs from~\cite{SarmaHKKNPPW2012}; $s$ and $t$ end up in the same
cluster if and only if there is no light cut between them, as otherwise their
distance is large.}


Miller et al.~\cite{DBLP:conf/spaa/MillerPX13} show how to compute low diameter
graph decomposition with small edge cutting probabilities in unweighted graphs
in the \Pram model. Their approach relies on exact SSSP computations. Given
the current discrepancy in the state of the art of exact and approximate SSSP in
the \Pram model, it thus cannot lead to satisfying bounds in the weighted
setting.

\paragraph{Low Stretch Spanning Trees.}
Nevertheless, there has been some work applying decompositions in the vein of
Miller et al.\ in order to obtain low average stretch\footnote{The ratio of
distance in the tree to edge length, averaged over all edges.}
spanning trees for weighted graphs.
A construction by Alon et al.~\cite{AlonKPW1995} reduces weighted graphs to
unweighted (multi)graphs.
As a result Blelloch et al.~\cite{blellochGKMPT14} were able to give an
efficient \Pram construction of low stretch spanning trees based on the
decomposition technique by Miller et al.
As shown by Blelloch et al., computation of such trees is of use for efficient
\Pram solvers for symmetric diagonally dominant linear systems.
A similar connection was exploited by Ghaffari et
al.~\cite{GhaffariKKLP2018}, who transferred the approach of Blelloch et al.\ to
the \Congest model, obtaining a low average stretch tree construction that they
leveraged for approximate maximum flow computations.
A downside of the aforementioned approaches is
that the construction by Alon et al.\ suffers from a poor average
stretch of $2^{\Theta(\sqrt{\log n \log \log n})}$, resulting in respective
overheads in work and depth resp.\ round complexity in the two models when
applying the computed trees in further computations.

For the \Congest model, Becker et al.~\cite{BeckerEGL2019} gave a
construction of low-average stretch spanning trees that combines the
decomposition technique of Miller et al.\ with the star decomposition technique
of Elkin et al.~\cite{ElkinEST2008}. This approach achieves $\polylog n$ average
stretch. Again, the complexity of their approach is essentially determined by
an exact SSSP computation. Thus, the resulting algorithm is round-optimal up to
polylogarithmic factors in the
unweighted case (i.e., the running time is $\hop(G)\polylog n$),
while essentially matching the round
complexity of exact SSSP in the weighted case. Exact SSSP computation in the
\Congest model is still not too well understood, with the best upper bound of
$\tilde{O}(\min\{\sqrt{n\hop(G)},\sqrt{n}\hop(G)^{1/4}+n^{3/5}+\hop(G)\})$~\cite{ForsterN2018}
still being polynomially far from the $\tilde{\Omega}(\sqrt{n}+\hop(G))$ lower bound.
\paragraph{Tree Embeddings.}
We apply our decomposition technique in order to obtain a metric tree embedding,
following the same route as Bartal~\cite{Bartal1996}, obtaining the same
$O (\log^{2} n)$
bound on the expected stretch (note that the bound in \cite{Bartal1996} holds
for any edge lengths whereas in the current paper, we make the simplifying
assumption that the ratio of the maximum to minimum edge length is
$\poly(n)$).
Bartal later improved this bound to
$O(\log n \log \log n)$~\cite{Bartal1998} and
subsequently to asymptotically optimal $O(\log n)$~\cite{Bartal2004}.
Although we cannot readily apply the same techniques, Bartal's work suggests
that future improvements to our stretch bound are feasible.

Fakcharoenphol et al.~\cite{FakcharoenpholRT2004} achieved the $O(\log n)$
stretch bound earlier, following a different approach in which the graph is not
(explicitly) decomposed. However, at its core the main idea is very similar:
randomization is leveraged to keep the probability of ``cutting'' edges
proportional to their length based on the subtractive form of the triangle
inequality. Also here, \Pram and \Congest algorithms have been developed that
try to mitigate the bottleneck imposed by exact SSSP computations. In the
\Congest model, it is straightforward to implement the algorithm
from~\cite{FakcharoenpholRT2004} with a round complexity that is (up to a factor
of $O(\log n)$) equal to the running time of the Bellman-Ford
algorithm~\cite{KhanKMPT2012}. However, shortest paths may have hop length up to
$n-1$, resulting in a running time far from the
$\tilde{\Omega}(\sqrt{n}+\hop(G))$ lower bound.
Ghaffari and Lenzen broke down shortest paths by sampling a ``skeleton'' of
$\tilde{\Theta}(\sqrt{n})$ nodes uniformly, computing a spanner (refer to the
sequel of this section for the definition of a spanner) of a graph
representing the induced metric, computing a tree embedding of this spanner,
and finally extending this embedding to one of the original graph with
modified weights via a Bellman-Ford computation.
This can be seen as distorting the original distance metric such
that it becomes sufficiently simple to solve exact SSSP fast, resulting in a
round complexity of $\tilde{O}(n^{0.5+\varepsilon}+\hop(G))$ for stretch
$O(\varepsilon^{-1}\log n)$.
In particular, by setting
$\varepsilon = \frac{1}{\log n}$,
the stretch and running time bounds match our results.
However, Ghaffari and Lenzen do not guarantee bounded load.
We also note that their approach is inherently limited to stretch
$\Omega(\log^2 n)$
when requiring a running time bound within
$\polylog n$
of the lower bound, as both spanners with a near-linear number of edges and
metric tree embedding must incur
$\Omega(\log n)$
stretch each.

Friedrichs and Lenzen~\cite{FriedrichsL2018}
provide fast \Pram and \Congest algorithms for tree embeddings with stretch
$O(\log n)$. The main difference to~\cite{GhaffariL2014}
is the use of hop sets~\cite{DBLP:journals/jacm/Cohen00} to provide
``shortcuts'' for distance computation that are not present in the original
graph.
Again, distances are then distorted by metric embeddings such that exact
distance computation by a Bellman-Ford style computation becomes efficient. This
leads to a $2^{O(\sqrt{\log n})}(\sqrt{n}+\hop(G))$-round algorithm in \Congest
and a \Pram algorithm of depth $\polylog n$ and work $O(m^{1+\varepsilon})$ (for any
fixed constant $\varepsilon>0$), where $m$ is the number of edges and
$\Omega(m)$ a trivial lower bound on the work. While the stretch guarantee is
better than in our case, it should be noted that also here fundamental barriers
limit this technique: lower bounds on the size of hop sets due to Abboud et
al.~\cite{AbboudBP18} imply that any hop-set based approach must incur running
time resp.\ work overheads of $2^{\Omega(\sqrt{\log n})}$.
Although in the \Pram model we suffer the same work overhead by relying on
hop-sets for the currently best known approximate SSSP algorithms
\cite{DBLP:journals/jacm/Cohen00, ElkinN16b},
our result shows that one can trade the additional log-factor in stretch for a
logarithmic load bound that the method of Friedrichs and Lenzen cannot
guarantee.

\paragraph{Streaming Algorithms.}
To the best of our knowledge, constructions of low diameter decompositions
with small edge cutting probabilities have not been addressed so far in the
semi-streaming literature.
This is also true for constructions of low stretch spanning trees and other
types of dominating trees (including embeddable trees and HSTs studied in the
current paper).
A related graph theoretic object whose construction has been studied in the
context of streaming algorithms is \emph{spanners}.
Similarly to low (average) spanning trees, spanners also provide a sparse
distance preserving representation of the graph, only that they are not
required to be trees.
On the other hand, their notion of distance preservation is stronger in the
sense that it is required to hold in the worst case, rather than on average.
Specifically, a $\kappa$-spanner of graph
$G = (V, E, \ell)$
is a spanning subgraph of $G$ that guarantees a stretch bound of at most
$\kappa$ for every edge in $E$.
One is typically interested in constructing $\kappa$-spanners with a small
number of edges, where
$O (n^{1 + 2 / (\kappa + 1)})$
edges is the asymptotically tight bound.
Streaming constructions of sparse spanners exist only for
unweighted graphs
\cite{Baswana2008, Elkin2011,FeigenbaumKMSZ2005},
as there the distance computations are typically restricted to the sparse
subgraph maintained by the algorithm.
A related notion in unweighted graphs, which has also been studied in the
streaming literature \cite{ElkinZ2006}, is an
$(\alpha, \beta)$-spanner,
where the distance between vertices
$u, v \in V$
in the spanner is required to be at most
$\alpha \cdot \dist_{G}(u, v) + \beta$
for every
$u, v \in V$.


\bibliography{../references}

\end{document}